\documentclass[11pt,onecolumn]{article}

\usepackage[margin=2cm, top=2.5cm, bottom=2.5cm]{geometry}
\usepackage{amsmath, amssymb, amsthm}
\usepackage{graphicx}
\usepackage{booktabs}
\usepackage[hidelinks]{hyperref}
\usepackage{cleveref}
\usepackage{xcolor}
\usepackage{microtype}
\usepackage{authblk}
\usepackage{natbib}
\usepackage{setspace}
\usepackage{enumitem}
\usepackage{caption}
\usepackage{subcaption}
\usepackage{placeins}
\newcommand{\SSV}{\mathrm{SSV}}

\DeclareMathOperator{\rank}{rank}

\newif\ifshowrevisions
\showrevisionsfalse
\ifshowrevisions
\newcommand{\rev}[1]{#1}
\newcommand{\reviewrev}[1]{\textcolor{red}{#1}}
\else
\newcommand{\rev}[1]{#1}
\newcommand{\reviewrev}[1]{#1}
\fi

\newif\ifshownewrevisions
\shownewrevisionsfalse
\ifshownewrevisions
\newcommand{\newrev}[1]{\textcolor{red}{#1}}
\else
\newcommand{\newrev}[1]{#1}
\fi

\newtheorem{theorem}{Theorem}
\newtheorem{proposition}[theorem]{Proposition}
\newtheorem{corollary}[theorem]{Corollary}
\newtheorem{lemma}[theorem]{Lemma}
\theoremstyle{definition}
\newtheorem{definition}[theorem]{Definition}
\theoremstyle{remark}
\newtheorem{remark}[theorem]{Remark}
\theoremstyle{plain}

\title{%
	\textbf{Reciprocity coordinates a three-way trade-off in recurrent neural circuits}
}

\author[1,2]{Xiaoyu Zhang\textsuperscript{*}}
\author[2]{Mingyi Huang\textsuperscript{*}}
\author[1]{Yixuan Cui}
\author[1]{Xin Lu\textsuperscript{\textdagger}}
\author[2]{Yuguo Yu\textsuperscript{\textdagger}}

\affil[1]{%
	National University of Defense Technology
}
\affil[2]{%
	Fudan University
}

\footnotetext[1]{These authors contributed equally to this work.}
\footnotetext[2]{Corresponding author}

\date{\today}

\begin{document}
	\maketitle
	
\begin{abstract}
		Biological neural circuits contain both unidirectional and reciprocal
		connections. However,
			how the proportion of reciprocal connections affects circuit computation
			remains unclear. \rev{To address this question, we systematically varied
			reciprocity $\rho$ across 21 levels in recurrent-network simulations while
			holding network size, connection density, and total directed-edge count fixed.}
			Low-reciprocity
		networks generated more diverse stimulus-driven activity and performed better
		on four tasks requiring continuous input processing. In contrast, higher
		reciprocity generally improved performance across three delayed-memory tasks
		tested over multiple delay lengths. Two complementary experiments revealed the
		dynamical basis of these opposing task effects.
			\newrev{Measurements on the same recurrent matrices showed that increasing
			reciprocity reduced online memory capacity, a task-specific measure of delayed
			input reconstruction, while accelerating recovery from perturbations under
			the original maximum-real-eigenvalue operating regime.} A
		separate post-cue noise experiment showed that higher reciprocity suppressed
		the propagation of internal state noise while making the states evoked by
			different cues less distinguishable. \rev{Component matching of cue separation
			and within-cue dispersion on the same 210-matrix bank then operationally
			distinguished their contributions to frozen-readout robustness under the
			specified matching protocol.} \rev{Within this tested ensemble, reciprocity
			therefore coordinates a}
		three-way trade-off among representational capacity, local dynamical stability,
		and task-state separability: increasing reciprocity reduces representational
		capacity, strengthens local stability, and compresses task-state separability.
		Whether higher reciprocity improves behaviour depends on whether the gain in
		stability outweighs the loss of separation between task-relevant states.
		Finally, \newrev{connectome and microcircuit datasets} from five species occupied an intermediate-reciprocity
		regime, while reciprocity varied substantially among anatomical regions of the
			adult \textit{Drosophila} brain. \reviewrev{These cross-species and within-brain
			observations provide structural context compatible with the framework and
			motivate circuit-specific functional tests.}
\end{abstract}
	
	\noindent\textbf{Keywords:} reciprocity, echo state networks,
	three-way dynamical trade-off, local stability, task-state separability,
	non-normal dynamics, state-space volume,
	biological connectomes

	\clearpage

	\section{Introduction}
	\label{sec:intro}

	Biological neural circuits contain both unidirectional and reciprocal
	synaptic connections. Reciprocity---the fraction of directed connections whose
	reverse counterpart is also present---is a basic feature of circuit topology.
	Reciprocal connections have been observed across biological systems, from local
	cortical microcircuits to whole-brain connectomes, although reported levels of
	reciprocity vary across systems~\citep{song2005highly,cook2019whole,
	lin2024network,peng2024directed}. Previous studies have linked reciprocal
	structure to computational properties including activity dimensionality,
	dynamical timescales, and memory performance. How these properties interact as
	reciprocity changes, however, remains without a unified account.

	The encoding of input depends on representational capacity: the ability of a
	network to generate diverse activity states and use them to encode information.
	Multi-patch recordings from human cortical Layer~2/3 revealed a directed and
	largely acyclic excitatory microcircuit in which reciprocity was close to that
	expected under random connectivity. Models incorporating these wiring
	principles exhibited higher-dimensional neural dynamics and improved
	performance on selected tasks~\citep{peng2024directed}. More generally,
	feedforward and functionally feedforward recurrent networks can preserve recent
	inputs through sequential transient activity without requiring a stable
	attractor~\citep{ganguli2008memory,goldman2009memory}. Motif-based theory further
	identifies reciprocal motifs as important local determinants of activity
	dimensionality in recurrent networks~\citep{recanatesi2019dimensionality}. The
	direction of this effect is not fixed: reciprocal motifs reduce covariance
	dimensionality when absolute coupling strength is held constant, but increase
	it when coupling is controlled relative to the stability
	boundary~\citep{hu2022spectrum}. Recent density-conserving manipulations of
	reciprocity likewise found that lower reciprocity increased memory capacity and
	representational diversity in the reservoir networks examined
	there~\citep{hadaeghi2026controlling,hadaeghi2025computational}. Together, these
	results show that reciprocal structure reshapes the state space explored by
	network activity, although the direction of the effect depends on the model and
	control conditions.
	\newrev{Extending this motif--dimensionality account to cortical data,
	\citet{dahmen2026localized} combined neural population recordings with synaptic
	physiology and theory to relate strong recurrence and local motifs to activity
	dimensionality. Their analysis also shows that motif effects depend on the
	excitatory and inhibitory populations engaged, linking anatomical structure
	to state-dependent effective connectivity.}

	Reciprocal structure affects not only which activity states a network can
	generate, but also how those states evolve over time. A related property is
	local dynamical stability: the tendency of a network state to recover from a
	small perturbation and to amplify or attenuate internally generated
	fluctuations. Symmetric recurrent connections provide the classical basis for
	attractor dynamics, allowing activity patterns to persist after external input
	has disappeared and to return after local
	perturbations~\citep{hopfield1982neural}. This mechanism provides a classical
	account of the persistent delay-period activity observed during working-memory
	behaviour~\citep{fuster1971neuron,goldman1995cellular}. In random-network
	models, correlations between reciprocal synaptic weights can slow intrinsic
	dynamics~\citep{marti2018correlations}, while feedback generated by connectivity
	motifs can extend the timescale over which a network retains past
	inputs~\citep{hu2018feedback}. Positive reciprocal motifs in
	excitatory--inhibitory networks can also strengthen recurrent feedback and
	induce bistable population states~\citep{shao2023relating,shao2025impact}.
	Broader studies of motifs and partially symmetric networks similarly link
	reciprocal structure to network feedback, dynamical timescales, and fixed-point
	organisation~\citep{hu2013motif,hu2014local,sherf2025complexity}. Overall,
	reciprocal structure can alter the maintenance and recovery of network states
	by regulating recurrent feedback, dynamical timescales, and attractor
	organisation.
	\newrev{At the inter-area level, \citet{javadzadeh2026consensus} combined
	visual-cortex recordings, optogenetic perturbations, and circuit modeling to
	show that reciprocal excitatory feedback selectively preserves congruent
	activity while suppressing disagreement between V1 and LM. This illustrates
	how feedback can support persistent modes and consensus through selective,
	rather than uniform, stabilization.}

	Whether maintained states support reliable task information also depends on
	whether states induced by different task cues remain distinguishable. We refer
	to the degree of distinction among cue-conditioned network states as
	task-state separability. In a direct comparison of feedforward and attractor
	memory networks, Fisher-information analysis showed that memory performance
	depends jointly on the distance between the mean states associated with
	different stimuli and the noise variance within each state: reducing the former
	or increasing the latter diminishes the stimulus information retained by the
	network~\citep{lim2012noise}. Wainrib likewise showed that connectivity symmetry
	changes how much past context is retained in the current network state, with the
	direction of the effect depending on the nonlinear operating
	regime~\citep{wainrib2015context}. Recent reservoir studies further showed that
	reciprocal motifs can redistribute memory between recent and more distant
	inputs~\citep{takasu2025neuronal}. Overall, the retention of task information
	depends on both the separation among input-conditioned states and their
	within-state variability, and connectivity symmetry and reciprocal motifs can
	alter this information-retention process.
	\newrev{Complementarily, \citet{ghosh2026partial} found that the effects of
	individual pathways in task-trained partially recurrent networks depend on
	both circuit architecture and task, with feedback improving noise robustness
	only in some contexts. Pathway-level partial recurrence differs from the
	neuron-level reciprocal-edge fraction studied here, but reinforces the need
	to evaluate feedback across distinct computational demands.}

	\newrev{These studies establish that feedback can shape activity geometry,
	stability, and computation, with effects that depend on circuit structure and
	operating regime. The question addressed here is how varying reciprocal-edge
	allocation under a fixed wiring budget jointly changes activity geometry,
	stability, and cue separation, and how these changes relate to distinct task
	demands.} We propose that
	reciprocity coordinates a three-way trade-off among representational capacity,
	local dynamical stability, and task-state separability.
	\newrev{We distinguish activity geometry, measured by SSV, from performance on
	driven tasks. In the shared-matrix analysis, the coordinate $C$ denotes summed
	online memory capacity; it is one specific temporal-reconstruction measure,
	not a definition of representational capacity.}

		\rev{To test this hypothesis, we systematically varied reciprocity across 21
		levels in echo state networks} while holding network size, connection density, and total directed-edge
	count fixed, and evaluated four continuously driven tasks and three
	delayed-memory tasks. The functional effect of reciprocity was strongly task
	dependent: lower reciprocity supported richer stimulus-driven representations
	and better performance on continuously driven tasks, whereas increasing
	reciprocity generally benefited delayed memory. Further analyses related these
	differences to the proposed three-way trade-off among representational capacity,
	local dynamical stability, and task-state separability. \newrev{Connectome and
	microcircuit datasets} from five
	species occupied an intermediate-reciprocity regime, while reciprocity varied
		substantially across anatomical regions of the adult \textit{Drosophila} brain,
		\reviewrev{providing structural context for the model framework.} Overall, reciprocity does not
	determine a universally optimal circuit architecture; instead, it regulates how
	circuit dynamics are distributed among these three computational properties.
	
	\section{Methods}
	\label{sec:methods}
	
	\subsection{Controlled generation of reciprocal recurrent networks}
	\label{sec:methods:network}
	
	All reservoir networks consist of $N = 300$ neurons unless otherwise stated.
	The recurrent weight matrix $\mathbf{W} \in \mathbb{R}^{N \times N}$ is
	constructed in three stages, enabling independent control of reciprocity,
	excitation--inhibition (E/I) balance, and \rev{spectral normalization}.
	
	\paragraph{Reciprocity-controlled topology.}
		\rev{We define the target reciprocity parameter $\rho \in [0, 1]$ as the
		intended fraction of directed edges that possess a reciprocal counterpart.}
		The adjacency structure is
	generated by the generate\_matrix\_reciprocity algorithm:
		\rev{from $N(N-1)$ potential directed edges, the algorithm sets
		$E = \operatorname{round}[N(N-1)c]$, where $c$ is the target connection
		probability ($c = 0.1$ throughout). It then sets
		$M=\lfloor E\rho/2\rfloor$ reciprocal unordered dyads. These dyads consume
		$2M$ directed edges, placed as pairs $(i \rightarrow j,j \rightarrow i)$
		with independently drawn positive weights
		$w_{ij}\sim\mathcal{U}(0,1)$. The remaining $E-2M$ edges are placed
		unidirectionally, with direction randomised by a coin flip.}
		At $\rho = 0$, the network is a fixed-edge-count random oriented graph:
		each selected unordered pair contains exactly one randomly oriented edge, so
		accidental reciprocal pairs are excluded. It is therefore not a standard
		directed Erd\H{o}s--R\'{e}nyi graph. At $\rho = 1$, every selected edge belongs
		to a reciprocal pair, although the two weights remain independent.
		Because the construction holds the number of directed excitatory edges fixed,
		reciprocity is algebraically coupled to distinct-partner coverage. Let $M$ be
		the number of reciprocal dyads and let the realized reciprocity be
		$\rho_{\mathrm r}=2M/E$. The number of occupied unordered neuron pairs is
		\begin{equation}
			P=E-M=E\left(1-\frac{\rho_{\mathrm r}}{2}\right).
			\label{eq:partner_budget}
		\end{equation}
		Each occupied pair contributes two neuron--partner incidences. Defining the
		realized directed density $c_{\mathrm r}=E/[N(N-1)]$ and normalized
		distinct-partner coverage $U=2P/[N(N-1)]$ therefore gives the exact identity
		\begin{equation}
			U=c_{\mathrm r}(2-\rho_{\mathrm r}).
			\label{eq:partner_coverage}
		\end{equation}
		Thus a reciprocal dyad spends two directed edges on one partner pair, whereas
		two unidirectional edges can reach two distinct partner pairs. These identities
		describe the finite-budget ensemble used in this study; they do not imply that
		reciprocity and partner coverage must be coupled in fixed-skeleton or
		degree-preserving ensembles.
	
	\paragraph{Excitation--inhibition balance.}
		The reciprocity-controlled matrix $\mathbf{W}_e$ (all weights positive) is
		balanced by an independently sampled inhibitory component $\mathbf{W}_i$.
		Its connection probability is computed from an assumed excitatory/inhibitory
		ratio of $60/40$ and is capped at one; with the parameters used here this cap
		is active, making $\mathbf{W}_i$ dense off the diagonal. The final signed weight matrix is
	$\mathbf{W} = \alpha_e \mathbf{W}_e - \alpha_i \mathbf{W}_i$, with scaling
	factors $\alpha_e = s_i / (s_e + s_i)$ and $\alpha_i = s_e / (s_e + s_i)$
	($s_e, s_i$ the summed excitatory and inhibitory weights) chosen
	algorithmically so that the net summed weight cancels,
	$\sum_{ij} W_{ij} \approx 0$. This construction is motivated by the broader
	balanced-network principle that large excitatory and inhibitory contributions
	can cancel~\citep{vanvreeswijk1996chaos}; it is not the dynamically emergent,
	neuron-level balanced state studied in that work. We therefore refer to it as
	a global E/I-cancellation signed-weight scheme. It neither enforces row-wise
	balance nor strict neuron-level Dale's law.
	
	\paragraph{Spectral normalization and operating-regime scope.}
	To control the nominal dynamical regime, $\mathbf{W}$ is scaled to a target
	gain $\sigma = 1.0$. The main simulations use the code's norm mode
		\texttt{'r'}, which computes the spectral abscissa
		$a(\mathbf{W})=\max_i\operatorname{Re}\lambda_i(\mathbf{W})$ and scales
		$\mathbf{W} \leftarrow \mathbf{W}\cdot \sigma/a(\mathbf{W})$. The generated
		matrices satisfy $a(\mathbf{W})>0$; otherwise this normalization would be
		undefined. We call this convention \emph{maximum-real} normalization, not
		spectral-radius normalization. It was the original, hypothesis-generating
		code path and was fixed before the paired normalization control was examined.
		Its rationale is to hold the largest real-axis feedback component constant
		while reciprocity changes, but it is not the conventional ESN stability
		constraint~\citep{jaeger2001echo}. In particular, the maximum eigenvalue
	modulus $|\lambda|_{\max}$ can exceed unity, so this convention does not by
	itself establish the echo-state property of the zero-state linearization.
	We therefore monitor $|\lambda|_{\max}$ and repeat the shared-matrix \newrev{online memory capacity}
	and restoration analyses after scaling the same pre-normalized matrices to
	maximum modulus $0.95$. Claims that fail this paired, conventional control are
	explicitly restricted to the original maximum-real operating regime.
	
	\subsection{Reservoir dynamics}
	\label{sec:methods:dynamics}
	
	The reservoir is a discrete-time leaky-integrator echo state network. The
	state vector $\mathbf{x}(t) \in \mathbb{R}^{N}$ evolves according to
	\begin{equation}\label{eq:esn}
		\mathbf{x}(t+1) = (1 - \alpha)\,\mathbf{x}(t)
		+ \alpha\,\tanh\bigl(\mathbf{W}\,\mathbf{x}(t)
		+ \mathbf{W}_{\mathrm{in}}\,\tilde{\mathbf{u}}(t)\bigr),
	\end{equation}
	where $\alpha \in (0, 1]$ is the leak rate (inverse time constant) and
	$\tilde{\mathbf{u}}(t) = [1; \mathbf{u}(t)]$ augments the input vector
	$\mathbf{u}(t) \in \mathbb{R}^{d_{\mathrm{in}}}$ with a constant bias term.
	The input weight matrix $\mathbf{W}_{\mathrm{in}} \in \mathbb{R}^{N \times
		(d_{\mathrm{in}}+1)}$ is initialized with random entries drawn uniformly from
	$(-0.5, 0.5)$ and remains fixed throughout. The leak rate is task-dependent:
	$\alpha = 0.1$ for SAD, $\alpha = 0.2$ for MC and MSO, and $\alpha = 0.3$ for MG, chosen to match
	the characteristic timescales of each dataset.
	
	\paragraph{Readout training.}
	For time-series forecasting (MG), the readout weights
	$\mathbf{w}_{\mathrm{out}}$ are trained via ridge regression with
	regularisation $\lambda = 10^{-5}$ on the reservoir states collected over the
	training sequence. For classification tasks, a linear discriminant
	analysis (LDA) classifier is fitted to the final reservoir state
	$\mathbf{x}(T)$ of each trial, where $T$ is the input duration.
	
	\subsection{Computational tasks and datasets}
	\label{sec:methods:tasks}
	
	We evaluate each reservoir on four driven transient-processing tasks spanning spoken digit
classification, chaotic time-series forecasting, temporal memory capacity, and
	multi-frequency oscillator prediction.
	
	\subsubsection{Spoken Arabic Digit classification (SAD)}
	\label{sec:methods:sad}
	
	We use the Spoken Arabic Digits dataset\newrev{~\citep{bedda2008spoken}}, which contains
	mel-frequency cepstral coefficient (MFCC) features extracted from utterances
	of ten digits (0--9) spoken by male and female speakers. The training set
	\newrev{comprises 6,600 utterances and the test set 2,200 utterances. The original
	dataset contains 88 speakers (44 male and 44 female), each repeating the ten
	digits ten times; the per-sex, per-digit sample counts therefore include
	repeated utterances, rather than distinct speakers.} Each utterance is represented as a
	time series of 40 frames, with the first 6 MFCC channels used as input
	($d_{\mathrm{in}} = 6$). The reservoir processes each utterance and the
	final state $\mathbf{x}(40)$ is used as input to an LDA classifier trained to
	discriminate digit identity. Performance is reported as classification
	accuracy (1 $-$ error rate).
	
	\subsubsection{Mackey--Glass chaotic time-series forecasting (MG)}
	\label{sec:methods:mg}
	
	The Mackey--Glass system~\citep{mackey1977oscillation} is a classic benchmark for
	chaotic time-series prediction, governed by the delay-differential equation
	$\dot{y}(t) = \beta y(t-\tau) / (1 + y(t-\tau)^n) - \gamma y(t)$ with
	parameters $\tau = 17$, $\beta = 0.2$, $n = 10$, $\gamma = 0.1$. We use the
	standard discretised sequence (unit sampling interval, 10,000 time steps). A
	training window of 2,000 steps is followed by a 100-step washout. The
	reservoir is trained to predict $y(t+1)$; performance is evaluated via
	autonomous 40-step iterative forecasting across 100 test windows, with a
	step interval of 5. The primary metric is normalized mean squared error
	(NMSE), computed as $\mathrm{NMSE} = \langle e^2 \rangle / \langle y^2
	\rangle$ where $e$ is the 40-step-ahead prediction error and $\langle \cdot
	\rangle$ denotes the temporal average over the test set.
	
	\subsubsection{Memory Capacity (MC)}
	\label{sec:methods:mc}
	
	The memory capacity task\newrev{~\citep{jaeger2002short}} measures the reservoir's ability
	to retain past input information across time delays. The network is driven by
	a white-noise input $u(t) \sim \mathcal{U}(-0.8, 0.8)$, and for each delay
	$k \in \{1, \dots, K\}$ (where $K = 20$), a linear readout is trained via
	ridge regression to reconstruct $u(t-k)$ from the reservoir state
	$\mathbf{x}(t)$. The memory capacity for delay $k$ is
	$\mathrm{MC}_k = \max(0, 1 - \langle e_k^2 \rangle / \langle u^2 \rangle)$,
	where $e_k$ is the reconstruction error. \newrev{The summed memory score is
	$\mathrm{MC} = \sum_{k=1}^{K} \mathrm{MC}_k$, a task-specific measure of how well
	past scalar inputs can be linearly reconstructed over the tested delays.
	It is distinct from SSV, which characterizes activity-response geometry:
	reconstruction also depends on how the input history is aligned with that
	geometry, and the SSV theory does not establish a universal ordering of MC.}
	Unlike the delayed-maintenance tasks below, MC is measured while the reservoir
	continues to receive a fluctuating input. It therefore quantifies online trace
	memory encoded in the driven state rather than autonomous maintenance of a
	cue-specific state through a silent delay~\citep{ganguli2008memory}.
	
	\subsubsection{Multi-Sine Oscillator prediction (MSO)}
	\label{sec:methods:mso}
	
	The MSO task tests the network's ability to process multiple incommensurate
	temporal frequencies simultaneously. We generate a synthetic input signal as
	a superposition of five sinusoids with incommensurate frequencies
	$f \in \{1/2.1, 1/3.2, 1/5.7, 1/11.3, 1/17.9\}$:
	$u(t) = \frac{1}{5}\sum_{j=1}^{5} \sin(2\pi f_j t)$. The network is trained
	via ridge regression to perform one-step-ahead prediction
	$\hat{y}(t) = u(t+1)$. Performance is evaluated on a held-out test set and
	reported as NMSE. Successful prediction requires the reservoir to maintain
	multiple independent temporal representations simultaneously, making this
	task highly sensitive to the dimensionality collapse induced by reciprocal
	connectivity.
	
	\subsubsection{Adding Problem}
\label{sec:methods:adding}

The Adding Problem~\citep{hochreiter1997long} is a classic benchmark for
long-term temporal dependencies. A sequence of length $T$ is presented; two
random positions are marked with indicator flags, and the corresponding values
at those positions must be remembered and summed at the end of the sequence.
All other positions contain zero-mean noise. We test sequence lengths
$T \in \{100, 200, 400\}$. Performance is reported as normalised mean squared
error (NMSE; lower is better). This task demands that the network maintain
precise scalar information through extended delays while ignoring distractors.

\subsubsection{Delayed Match-to-Sample (DMS)}
\label{sec:methods:dms}

DMS is the canonical working-memory paradigm in systems
neuroscience~\citep{fuster1971neuron,goldman1995cellular}. A 4-class stimulus is
briefly presented (5 time steps), followed by a long delay period (100--400
steps) during which weak additive Gaussian noise ($\sigma = 0.05$) is injected.
The network must then report the identity of the remembered stimulus. Stimulus
amplitude is set to 0.3 to avoid ceiling effects. Classification is performed
by a linear discriminant analysis (LDA) readout trained on 400 trials and
evaluated on 200 held-out trials; accuracy is reported.

\subsubsection{Pattern Hold}
\label{sec:methods:patternhold}

A 10-dimensional binary pattern is briefly presented, followed by a delay
period of $d\in\{50,100\}$ steps without input. The network must reproduce the original
pattern at the end of the delay. Performance is reported as NMSE between the
network output and the target pattern. This task can be viewed as a multidimensional generalisation of the classical
memory capacity (MC) task\newrev{~\citep{jaeger2002short}}: whereas MC measures how long a
single scalar value passively echoes through the network under continuous
driving input, Pattern Hold measures the late linear decodability of a
10-dimensional pattern through a silent delay. Because this assay uses novel
patterns without repeated cue-conditioned trajectories or basin discovery, it
tests silent-delay information retention but does not by itself establish
self-sustained multi-attractor dynamics.

\paragraph{Seeded delayed-memory replication.}
The three delayed-memory tasks were rerun as an independent, fully seeded
replication at 21 reciprocity levels ($\rho=0,0.05,\ldots,1$), with $N=300$,
$c=0.1$, maximum-real-eigenvalue normalization at one, and ten independently
generated recurrent matrices per level (master seed 2026071501). Adding used
500 training and 200 held-out trials per sequence length; DMS used 400 and 200;
Pattern Hold used 300 and 150. Adding and Pattern Hold readouts were ridge
regressions ($\lambda=10^{-4}$). For each network-index block, task stimuli,
targets, and input weights used the same derived seeds across $\rho$, whereas
the recurrent matrix had a separate seed at every $\rho$; this common-random-
number design isolates recurrent-structure variation without reusing a matrix.
Every matrix, derived seed, SHA-256 checksum, network-level score, held-out
target, and prediction was retained. Figure-level trends are Pearson
correlations of the 21 $\rho$-level means. As an independent-unit sensitivity
analysis, a slope was estimated across $\rho$ within each of the ten seed
blocks and tested against zero by two-sided one-sample $t$ and Wilcoxon signed-
rank tests; Benjamini--Hochberg correction was applied across the seven
reported task--delay conditions.

\paragraph{Post-cue process-noise scan.}
To isolate robustness after information had been written, we reused the 210
audited recurrent matrices from the seeded replication and ran an additional
four-class DMS scan at a 200-step delay. The cue was presented for five steps;
all subsequent external inputs were zero, and independent isotropic Gaussian
state noise was added after each recurrent update during the delay,
$\boldsymbol{\xi}_{t}\sim\mathcal{N}(\mathbf{0},\sigma_x^2\mathbf{I})$, at
$\sigma_x\in\{0,2.5\times10^{-5},5\times10^{-5},10^{-4},
2.5\times10^{-4},5\times10^{-4},10^{-3}\}$. This range was frozen after a
three-network dynamic-range pilot, before inspecting the full $\rho$ scan. For
each network-index block, balanced cue labels, input weights, and standard-
normal noise directions used the same derived seeds across $\rho$; the same
noise directions were rescaled across doses.

	Two complementary readouts were evaluated on 200 held-out trials per dose.
	A ridge classifier ($\lambda=10^{-4}$) was fitted once to 400 noiseless
	training trials and then frozen across the entire dose curve, testing the
	robustness of a fixed decoder. \rev{Separately, a shrinkage-LDA readout was
	fitted to 400 independent trials at each dose, measuring cue information after
	adaptation to the dose-specific noise covariance. At exactly zero process
	noise, repeated trials within each cue class have degenerate covariance; this
	fit was retained only as a diagnostic and omitted from the noise-matched
	curve.} \rev{For accuracy $a(\sigma_x)$ and four-class chance
	$a_{\mathrm{ch}}=0.25$, frozen-readout retention was defined as
	$[a(\sigma_x)-a_{\mathrm{ch}}]/[a(0)-a_{\mathrm{ch}}]$ and integrated over the
	full dose range. Noise-matched retention instead used the smallest positive
	dose, $\sigma_{\min}=2.5\times10^{-5}$, as its baseline and was integrated
	over the positive-dose range. Each trapezoidal integral was divided by its
	corresponding dose-range width to obtain a normalized robustness AUC.} We also recorded the dose
at which retention first crossed one half, terminal between- and within-cue
geometry, and the paired empirical noise gain
\begin{equation}
    G_{L}^{\mathrm{emp}}(\sigma_x)
    =\frac{\mathbb{E}\|\mathbf{x}_{L}^{\sigma_x}
      -\mathbf{x}_{L}^{0}\|_2^2}{N\sigma_x^2},
    \label{eq:empirical_noise_gain}
\end{equation}
using identical cue trials with and without process noise. Network-level
Spearman tests were primary; seed-block slopes, paired $\rho$-endpoint contrasts,
and bootstrap confidence intervals were sensitivity analyses. For the
margin/noise-proxy association, we additionally computed a Spearman correlation
within each of the ten common-random-number seed blocks and applied an exact
two-sided sign test to their directions. Matrix hashes, all derived seeds,
predictions, and dose-level network metrics were retained.

	\subsection{Functional metrics of representational capacity}
	\label{sec:methods:ssv}
	
	\paragraph{State-Space Volume (SSV).}
	\newrev{Motivated by covariance-based descriptions of population activity
\citep{cunningham2014dimensionality}, we define the following standardized
Gaussian log-volume index for the reservoir.} Let
	$\mathbf{X}_{c}$ be the centered post-washout activity matrix,
	$\mathbf{G}=\mathbf{X}_{c}\mathbf{X}_{c}^{\top}/T_{\mathrm{eff}}$ its raw
	sample covariance, $\mathbf{D}=\operatorname{diag}(G_{11},\ldots,G_{NN})$,
	and $\mathbf{R}=\mathbf{D}^{-1/2}\mathbf{G}\mathbf{D}^{-1/2}$ its sample
	correlation matrix. The implementation first $z$-scores each neuron's trace,
	so its scatter matrix is $\mathbf{C}=T_{\mathrm{eff}}\mathbf{R}$, and reports
	\begin{equation}\label{eq:ssv}
		\mathrm{SSV} = \frac{N}{2}\log(2\pi) +
		\frac{1}{2}\log\det(\mathbf{C}+\epsilon\mathbf{I}),
	\end{equation}
		where $\epsilon$ denotes numerical regularization. The legacy synthetic
		pipeline uses the absolute numerical eigenvalues without an explicit floor,
		whereas the connectome pipeline replaces eigenvalues below numerical precision
		by $10^{-30}$. For a well-conditioned full-rank spectrum this is equivalent to
		$\epsilon=0$; for a rank-deficient spectrum the result is necessarily
		regularizer-dependent and cannot be treated as an intrinsic volume.
		Equivalently,
	$\mathrm{SSV}=\frac{N}{2}\log(2\pi T_{\mathrm{eff}})+
	\frac{1}{2}\log\det(\mathbf{R}+
	\epsilon\mathbf{I}/T_{\mathrm{eff}})$. Thus the reported scatter-form SSV is a
	scale-invariant generalized variance of the activity correlation matrix, not
	the log-volume of the unstandardized covariance and not an absolute
	differential entropy in physical units. In particular,
	$\log\det\mathbf{R}=\log\det\mathbf{G}-\sum_i\log G_{ii}$: marginal activity
	gains are removed, and SSV measures how independently the neurons' temporal
	response signatures span state space. The exact linear-system interpretation
	and its assumptions are given in Supplementary Note~S5.

	For the synthetic reservoir
	sweep and reviewer controls we use $T = 50{,}000$ time steps with a
	$1{,}000$-step washout under a standard-normal scalar input $u(t)$, whereas
		large connectome runs use the same estimator at a shorter fixed sequence length
		($T = 3{,}000$, $500$-step washout) for tractability. Because
	$T_{\mathrm{eff}}$ is fixed within each comparison, changes in SSV reflect
	changes in the activity correlation spectrum; lower SSV indicates that activity
	has collapsed onto a lower-volume manifold. The zero crossing of SSV has
	no intrinsic physical meaning; because SSV is a log-volume defined only up
	to an additive constant that depends on $T_{\mathrm{eff}}$, only relative
		comparisons at fixed $T_{\mathrm{eff}}$ are interpretable. In addition, centering
		implies $\rank(\mathbf{R})\leq T_{\mathrm{eff}}-1$; therefore an
		unregularized determinant is identically zero whenever
		$N\geq T_{\mathrm{eff}}$. Such connectomes require either a shrinkage estimator
		or a matched-dimensional subgraph analysis and are excluded from quantitative
		SSV interpretation here.
	For descriptive display of the estimable biological connectomes, which differ
	in neuron number $N$, we report the per-neuron value $\mathrm{SSV}/N$ (in nats
	per neuron); we refer to this as normalized SSV. This rescaling does not remove
	$N/T_{\mathrm{eff}}$ bias and is not used for cross-species inference. We
	emphasise that SSV is a log-volume over the full activity
	spectrum and is distinct from a participation-ratio dimensionality (reported
	separately below), which counts only the dominant variance directions.
	
	\paragraph{\newrev{Participation-ratio effective dimension.}}
	As a complementary measure of dimensionality, we compute the participation
	ratio\newrev{~\citep{hu2022spectrum,dahmen2026localized}}:
	\begin{equation}\label{eq:pr}
		\mathrm{PR} = \frac{(\sum_i \lambda_i)^2}{\sum_i \lambda_i^2},
	\end{equation}
	where $\lambda_i$ are the eigenvalues of the activity covariance matrix
	$\boldsymbol{\Sigma}$ after $z$-scoring, yielding the activity-space
	effective rank. For spectral analysis of $\mathbf{W}$, the same formula is
	applied to the absolute eigenvalues $|\lambda_i(\mathbf{W})|$, yielding the
	spectral effective rank, which captures how evenly spectral power is
	distributed across eigenmodes.
	\newrev{This participation-ratio summary differs from the entropy-based
effective rank of \citet{roy2007effective}, which is defined using normalized
singular values.}
	
	\paragraph{Eigenvalue spectrum and spectral analysis.}
	We analyse the eigenvalue spectrum of $\mathbf{W}$ along three dimensions:
	the mean absolute eigenvalue $\langle |\lambda| \rangle$ and its standard
	deviation $\mathrm{std}(|\lambda|)$; the spectral gap $|\lambda_1| -
	|\lambda_2|$ between the two largest eigenvalues, which determines how
	quickly activity collapses onto the dominant eigenmode; and the relative
	commutator normality defect $\delta_{\mathrm{rel}}$ in
	Eq.~\ref{eq:dep}, which vanishes exactly for normal matrices and is invariant
	to overall weight scaling. Additionally, we compute a
	qualitative spectral proxy for the \emph{raw linear covariance} volume from
	the eigenvalues of $\mathbf{W}$:
	$H = -\frac{1}{2}\sum_i \log(1 - |\tilde{\lambda}_i|^2)$, where
	$\tilde{\lambda}_i$ are max-modulus normalized eigenvalues lying inside the
	stable unit disk. This expression is motivated by the covariance of a normal,
	stable linear reservoir under isotropic drive. It is used only as a spectral
	diagnostic: per-neuron standardization introduces an additional diagonal term,
	the exact formula involves the eigenvalues of the leak-adjusted map
	$(1-\alpha)\mathbf{I}+\alpha\mathbf{W}$ rather than those of $\mathbf{W}$
	alone, and the expression is not an exact SSV for either the standardized linear
	system or the nonlinear leaky-tanh dynamics.
	
	\subsection{Topological characterization}
	\label{sec:methods:topology}
	
	For each generated synthetic network, graph-theoretic metrics are computed
	from the pre-balance excitatory support
	$\mathbf{A}=(\mathbf{W}_{e}>0)$. This distinction is necessary because the
	independent inhibitory background is dense under the present parameters;
	spectral and dynamical metrics, by contrast, are computed from the final signed
	and normalized matrix $\mathbf{W}$.
	The measured reciprocity $\hat{\rho} = (\sum_{ij} A_{ij} A_{ji}) / (\sum_{ij}
	A_{ij})$ quantifies the realized fraction of reciprocated edges and is used
	to verify the target $\rho$. Connection density is $c = (\sum_{ij} A_{ij}) /
	(N(N-1))$. \newrev{For community detection we use the undirected binary
projection $B_{ij}=\max(A_{ij},A_{ji})$. Modularity $Q$ follows the undirected
definition \citep{newman2004finding} and is maximised by greedy community
detection \citep{clauset2004finding}: $Q = \frac{1}{2m}\sum_{ij}[B_{ij} -
\frac{k_i k_j}{2m}]\delta(c_i, c_j)$, where $m$ and $k_i$ are the edge count and
node degree in this projection.} Here $\delta(c_i, c_j) = 1$ if nodes $i$ and
	$j$ belong to the same module. Within- and between-module reciprocity
	($\rho_{\mathrm{in}}$ and $\rho_{\mathrm{out}}$) are computed by restricting
	the reciprocity summation to edge pairs whose source and target belong to the
	same module or different modules, respectively, enabling a decomposition of
	directionality into local and global components.
\subsection{Non-normality quantification and perturbation recovery}
\label{sec:methods:nonnorm}

	\paragraph{Relative commutator normality defect.}
	We quantify the non-normality of the recurrent weight matrix using both the
	raw commutator defect and its scale-invariant normalization,
	\begin{equation}\label{eq:dep}
	\begin{aligned}
		\Delta_{\mathrm{comm}}(\mathbf{W})
		&\triangleq \|\mathbf{W}\mathbf{W}^\top-
		\mathbf{W}^\top\mathbf{W}\|_F,\\
		\delta_{\mathrm{rel}}(\mathbf{W})
		&\triangleq
		\frac{\Delta_{\mathrm{comm}}(\mathbf{W})}{\|\mathbf{W}\|_F^2}.
	\end{aligned}
	\end{equation}
where $\|\cdot\|_F$ is the Frobenius norm. For a normal matrix
	($\mathbf{W}\mathbf{W}^\top = \mathbf{W}^\top\mathbf{W}$), both quantities
	vanish; conversely, either zero defect implies normality for nonzero $\mathbf{W}$.
Normal matrices admit an orthonormal eigenbasis over $\mathbb{C}$ and therefore
have no eigenvector-induced transient amplification in a stable linear system.
	A larger defect indicates a larger violation of normality and
can permit transient amplification~\citep{trefethen2005spectra}. This
	commutator defect should not be confused with Henrici's classical Frobenius
departure $\bigl(\|\mathbf{W}\|_F^2-\sum_i|\lambda_i|^2\bigr)^{1/2}$; both
vanish exactly for normal matrices but have different scales
	\citep{henrici1962bounds}. The raw quantity is scale-sensitive,
	$\Delta_{\mathrm{comm}}(c\mathbf{W})=c^2
	\Delta_{\mathrm{comm}}(\mathbf{W})$, whereas
	$\delta_{\mathrm{rel}}(c\mathbf{W})=\delta_{\mathrm{rel}}(\mathbf{W})$ for
	$c\neq0$. We therefore use $\delta_{\mathrm{rel}}$ as the primary
	non-normality diagnostic and retain $\Delta_{\mathrm{comm}}$ and
	$\|\mathbf{W}\|_F^2$ to audit the effect of weight scale. All three are
	computed for each generated network across the full $\rho$ sweep ($n=30$
	trials per $\rho$ level). For the linearised dynamics
	$\mathbf{A}=(1-\alpha)\mathbf{I}+\alpha\mathbf{W}$, the raw identity
	$\Delta_{\mathrm{comm}}(\mathbf{A})=\alpha^2
	\Delta_{\mathrm{comm}}(\mathbf{W})$ holds to machine precision; no analogous
	equality is assumed for the relative defect because its denominator changes.
	For each pairwise mechanism analysis we report Pearson correlations both over
	all independently generated networks and over the 11 $\rho$-level means. To
	distinguish an across-sweep association from residual network-to-network
	variation, we additionally subtract the corresponding level mean from each
	observation and correlate these within-$\rho$ residuals. All reported
	correlation tests are two-sided.

\paragraph{Perturbation recovery experiment.}
To test whether non-normality covaries with, and may index, dynamical fragility,
we measure perturbation recovery speed as a function of $\rho$. For each network, we
	first run the dynamics (Eq.~\ref{eq:esn}) for 2,000 steps with scalar input
	$u(t)=0$; the randomly drawn bias column of $\mathbf{W}_{\mathrm{in}}$ remains
	active and produces a heterogeneous steady state. We then inject a
	perturbation by adding Gaussian noise with standard deviation $5\sigma$ to
	20\% of randomly selected neurons, where $\sigma$ is the across-neuron
	standard deviation of the unperturbed steady state. Both trajectories
continue under identical input. We track the normalised L2 distance
$d(t) = \|\mathbf{x}_{\mathrm{pert}}(t) -
\mathbf{x}_{\mathrm{ref}}(t)\| / \sqrt{N}$ and record the number of steps
	required for $d(t)$ to fall below 10\% of its initial value, with a maximum
	follow-up of 2,000 steps. Five perturbations are applied per network and their
	recovery times are averaged ($n=10$ networks per $\rho$ level); all trials
	recovered before the maximum follow-up.
Shorter recovery steps indicate greater dynamical stability.

	\subsection{Shared-matrix functional-triad analysis}
	\label{sec:methods:functional_triad}

	To measure memory capacity, local restoration, and task-state separability on
	identical recurrent matrices, we reused the 210-network bank from the audited
	post-cue DMS scan: 21 reciprocity levels
	($\rho=0,0.05,\ldots,1$), ten matrices per level, $N=300$, $c=0.1$,
	$\alpha=0.2$, and maximum-real normalization at one. Each matrix retained the
	same $(\rho,\text{network index})$ identifier used in the DMS analysis. Memory
	capacity and restoration were computed on the corresponding recurrent matrix,
	whereas cue margin was taken from the corresponding clean DMS record. The
	three measurements were merged by this shared network identifier.

	\paragraph{\texorpdfstring{\newrev{Online memory capacity $C$, restoration score $S$, and cue margin $D$.}}{Online memory capacity C, restoration score S, and cue margin D.}}
	Memory capacity was measured with the standard online reconstruction assay
	described above. A 2,500-step input $u(t)\sim\mathcal{U}(-0.8,0.8)$ was used
	with a 500-step washout; ridge readouts ($\lambda=10^{-4}$) were trained on
	60\% of the remaining samples to reconstruct delays $1$--$20$ and tested on
	the held-out 40\%. \newrev{We define $C=\sum_{k=1}^{20}\mathrm{MC}_k$ as the
	online memory-capacity coordinate of the functional triad. It is a specific
	driven-task score, distinct from the SSV used to characterize representational
	geometry in the theoretical analysis.} Input
	sequences and input weights were paired across $\rho$ within each network-index
	block.

	For autonomous-state discovery, 32 initial states drawn
	uniformly from $[-1,1]^N$ evolved for 1,000 steps with zero external input and
	a fixed network-index-specific bias. An endpoint was accepted when its final
	per-neuron speed was below $10^{-5}$, and accepted endpoints were clustered
	with a pre-specified per-neuron radius of $0.02$. Each discovered state received
	eight independent impulses applied to 20\% of neurons at amplitudes of one,
	two, and five times the across-neuron activity SD, followed for 300 steps.
	For a distance trajectory $d_t$, recovery AUC is the time average of
	$d_t/d_0$; the restoration score is $S=1-\mathrm{AUC}$ at the strongest
	pre-specified impulse. This score operationalizes dynamical stability as the
	ability to return toward an autonomous state after perturbation. Multiple
	reproducible basins would additionally be required to identify the assay as
	multi-attractor memory.

	Task-state separability was measured in a clean four-class DMS assay. Each cue
	was presented for five steps at amplitude $0.3$ and followed by a 200-step
	silent delay. We define $D$ as the minimum Euclidean distance between the four
	terminal cue centroids, divided by $\sqrt{N}$. Cue labels and input weights
	were paired across $\rho$ within each network-index block. For structural
	interpretation only, we also computed distinct-partner coverage $U$ from the
	pre-balance excitatory support using Eq.~\ref{eq:partner_coverage}; $U$ is not
	a coordinate of the functional triad.

	\paragraph{Frozen-readout noisy evidence integration.}
	Each binary trial had latent class $s\in\{-1,+1\}$ and a 200-step evidence
	epoch $u_t=0.1s+\sigma\epsilon_t$, followed by a 100-step silent hold. A
	shrinkage-LDA readout was trained once at $\sigma=0.5$ using 100 trials per
	class and frozen while testing $\sigma\in\{0,0.25,0.5,1.0,1.5\}$ on 100 new
	trials per class. For the perturbation test, an impulse was applied to 20\% of
	neurons halfway through the silent hold, so no later evidence could rewrite
	the state. An independent, $\rho$-blinded pilot selected the perturbation amplitude
	whose pooled accuracy was closest to 0.75; this froze the confirmatory dose at
	$0.025$ times activity SD. Task-performance retention was
	$(a_{\mathrm{pert}}-0.5)/(a_{\mathrm{ref}}-0.5)$, and the readout remained
	frozen for every test condition.

	\paragraph{Functional-triad visualization and statistical analysis.}
	The recurrent network, not the trial, was the independent replicate. Primary
	trend tests used network-level Spearman correlations and Pearson correlations
	of the 21 $\rho$-level means. As a paired analysis, a linear slope across
	$\rho$ was estimated within each of the ten network-index blocks and tested
	against zero by two-sided one-sample $t$ and Wilcoxon signed-rank tests. Figure
	summaries show mean $\pm$ SD. The 21 level means of $C$, $S$, and $D$ were
	min--max normalized separately only for the overlay in
	Fig.~\ref{fig:functional_triad}d; panels a--c and e retain native
	units. Linear fits to the level means emphasize the direction of each trend;
	the observed means remain visible throughout. The overlay is a descriptive
	summary of three measured functional coordinates, not evidence for Pareto
	optimality or an intermediate optimum.
	All network-level measurements were retained with their matrix and seed
	identifiers.

		\subsection{\texorpdfstring{\rev{Component-matching validation}}{Component-matching validation}}
	\label{sec:methods:csd_matching}

		\rev{We used the same deterministic 210-matrix bank to test whether the stability
		and separability components could be distinguished by controlled matching
		rather than covariance alone.} For each network-index block, we defined a common cue
	margin target from the median natural DMS margin across $\rho$. A separate
	feasibility pass evaluated 15 cue amplitudes logarithmically spaced from
	$0.005$ to $1.5$; the confirmatory target was capped at 90\% of the smallest
	maximum-attainable margin across the 21 matrices in that block. Cue amplitude
	was then selected by bracketed geometric bisection to match this target. The
	realized margins and matching deviations were retained for quality control.

	At the reference process-noise dose $\sigma_x=2.5\times10^{-4}$, we next
	defined a block-specific within-cue dispersion target as the median dispersion
	pooled across natural- and $D$-matched conditions. Noise amplitude was rescaled
	as $\sigma_x\sqrt{V_{\mathrm{target}}/V_{\mathrm{observed}}}$, with bounds of
	$\sigma_x/4$ and $4\sigma_x$. We compared four conditions: natural $D$ and
	$V$; matched $D$ with natural $V$ variation; natural $D$ with matched $V$;
	and both $D$ and $V$ matched. Realized dispersions and matching deviations
	were retained for quality control.

	Each condition was tested with 200 balanced four-class trials at noise
	multipliers $0$, $0.5$, $1$, and $2$ after a 200-step cue--delay interval.
		\rev{Classification used clean cue centroids; no decoder was adapted to noisy
		test trials, and the centroid rule was held fixed across noise doses within
		each matching condition.} Accuracy was converted to
	chance-normalized retention and integrated over noise dose to obtain a
		robustness AUC. Within each network-index block, we fitted the AUC slope
		$\beta$ across all 21 $\rho$ levels. \rev{We defined the operational
		stability-associated contrast as
		$\Delta\beta_S=\beta_{D\text{-matched}}-\beta_{D,V\text{-matched}}$ and the
		separability-associated contrast as
		$\Delta\beta_D=\beta_{V\text{-matched}}-\beta_{D,V\text{-matched}}$.}
		Block-level slopes and paired contrasts were assessed with two-sided
		one-sample $t$ tests and Wilcoxon signed-rank tests. \rev{These contrasts
		quantify changes under the specified amplitude-matching protocol. Because the
		protocol adjusts cue and process-noise amplitudes, it does not constitute a
		general intervention on reciprocity or a complete causal-mediation analysis.}

	As an independent driven-task control on the same matrices, we evaluated a
	held-out NARMA-10 sequence of length 3,200 after a 500-step washout. A ridge
	readout ($\lambda=10^{-4}$) was trained on 60\% of the usable sequence and
	tested on the remaining 40\%; input sequences and input weights were paired
	across $\rho$ within network-index blocks. Performance was reported as
	$1-\mathrm{NMSE}$.

	\subsection{\texorpdfstring{\newrev{Biological connectome and microcircuit datasets}}{Biological connectome and microcircuit datasets}}
	\label{sec:methods:bio}
	
	To provide structural context for the synthetic-network results, we compiled
	\newrev{synapse-level connectome and microcircuit datasets} from five species, \textit{C. elegans},
	\textit{P. dumerilii}, \textit{D. melanogaster}, \textit{C. intestinalis}, and
	\textit{Homo sapiens}. The datasets included the following: the \textit{Ciona
		intestinalis} (ascidian tadpole larva) connectome collected by
	Ryan et al.\ (2016)~\citep{ryan2016cns}; the adult hermaphrodite
	\textit{C. elegans} chemical-synapse connectome from Cook et
	al.\ (2019)~\citep{cook2019whole}; the first-instar
\textit{Drosophila} larval connectome from Winding et
	al.\ (2023)~\citep{winding2023connectome}; the adult \textit{Drosophila melanogaster}
	connectome from Lin et al.\ (2024)~\citep{lin2024network}; and the 3-day larval
	connectome of the marine annelid \textit{Platynereis dumerilii} from
	\newrev{Veraszt\'{o}} et al.\ (2025)~\citep{veraszto2025whole}; and the human temporal
	cortex L2/3 microcircuit from Peng et al.\ (2024)~\citep{peng2024directed}, comprising
	a primary dataset from temporal lobe epilepsy resections (Human TLE, 177
	clusters, represented here as 1,770 cells after block-diagonal cluster union)
	and a validation dataset from tumor surgeries (Human Tumor, 54 clusters,
	540 cells). All cross-species structural comparisons are performed at the
	identified neuron or recorded-cell level rather than at a brain-region level.
	For each non-human connectome, directed adjacency matrices were constructed by
	counting synapses between identified neurons. The human datasets are
	cell-level microcircuit reconstructions within recorded clusters; we analyse
	their block-diagonal union, so unobserved cross-cluster edges are treated as
	absent and the resulting values should be read as within-sampled-microcircuit
	quantities rather than a complete cortical connectome. Self-connections were
	excluded and the resulting matrices were treated as the recurrent weight
	matrix for subsequent topological and spectral analyses.
	Reciprocity follows the standard convention
	$\hat\rho = (\sum_{ij}A_{ij}A_{ji})/\sum_{ij}A_{ij}$ (fraction of directed
	edges that are reciprocated), and reciprocity/cyclicity over-representation
	is reported relative to a directed Erd\H{o}s--R\'{e}nyi null of matched size
	and density.
	
		\paragraph{Connectome state-space volume.}
		An exploratory SSV estimate is obtained by spectral-radius normalizing the
		\emph{actual} connectome, using it as the recurrent matrix $\mathbf{W}$ of a
		leaky-tanh reservoir (Eq.~\ref{eq:esn}), and driving it with random input. We
	then apply Eq.~\ref{eq:ssv} to the resulting activity, so that the measure
	reflects the real directed topology (reciprocity, cyclicity, degree
		structure) rather than only network size and density. The current pipeline uses
		one fixed random seed. Because absolute SSV scales with $N$ and sequence length,
		we report $\mathrm{SSV}/N$ descriptively, but this division does not remove
		finite-sample $N/T_{\mathrm{eff}}$ bias. Pipeline-produced finite values are
		reported for all seven datasets. For connectomes with
		$N\geq T_{\mathrm{eff}}$, these values depend on the numerical eigenvalue floor
		used to evaluate the log determinant and should not be interpreted as
		unregularized SSV estimates. We therefore make no inferential cross-species
		comparison of SSV pending a shrinkage or matched-subgraph analysis. The cross-species topological comparison
	(Fig.~\ref{fig:species}) uses the seven datasets for which a directed
	adjacency matrix could be reconstructed and verified: the \textit{Ciona}
	larva, Cook \textit{C. elegans} (hermaphrodite chemical), \textit{P. dumerilii}
	larva, Winding \textit{Drosophila} larva, FlyWire adult \textit{Drosophila},
	and the two human microcircuits.
	
	For \textit{Drosophila}, region-level directed adjacency matrices are partitioned
	into anatomical regions using established atlas parcellations; for each
	deposited region matrix, we compute reciprocity, modularity $Q$, and
	within- and between-module reciprocity ($\rho_{\mathrm{in}}$,
	$\rho_{\mathrm{out}}$). 
	
	\section{Results}
	\label{sec:results}
	
	\subsection{Increasing reciprocity reduces state-space volume and driven-task performance}
	\label{sec:results:main}
	
	We systematically varied $\rho$ from 0 to 1 across 21 levels and measured
	the state-space volume (SSV) of each reservoir together with its performance
	on four driven transient-processing tasks (Fig.~\ref{fig:rho_sweep}).
	
	State-space volume decreased monotonically with increasing reciprocity
	(Fig.~\ref{fig:rho_sweep}a). At $\rho = 0$, mean SSV was
	$428.6 \pm 54.4$ nats; at $\rho = 1$, it fell to $-151.4 \pm 26.1$ nats
	($r = -0.992$,  $p < 0.001$). SSV crossed zero near $\rho \approx 0.64$,
	but this sign change has no separate dynamical interpretation because the
	absolute log-volume depends on sequence length and scale (Methods). The
	relevant result is the monotonic comparison at fixed $N$ and
	$T_{\mathrm{eff}}$.
	
		All four tasks mirrored the SSV trend. Spoken Arabic Digit classification
		accuracy declined from $0.9404 \pm 0.0053$ at $\rho = 0$ to $0.9195 \pm
		0.0046$ at $\rho = 1$ ($r = -0.963$, $p < 0.001$; Fig.~\ref{fig:rho_sweep}b).
		Mackey--Glass forecasting error rose from NMSE $= 0.0043 \pm 0.0014$ to
		$0.0713 \pm 0.0097$ ($r = +0.924$, $p < 0.001$; Fig.~\ref{fig:rho_sweep}c).
		Memory capacity dropped from $5.22 \pm 0.62$ to $4.13 \pm 0.34$ ($r = -0.880$,
		$p < 0.001$; Fig.~\ref{fig:rho_sweep}d). Multi-sine oscillator prediction
		error increased from NMSE $= 0.1668 \pm 0.0209$ to $0.4613 \pm 0.0909$
		($r = +0.959$, $p < 0.001$; Fig.~\ref{fig:rho_sweep}e).
	
	\begin{figure}[htbp!]
		\centering
		\includegraphics[width=\textwidth]{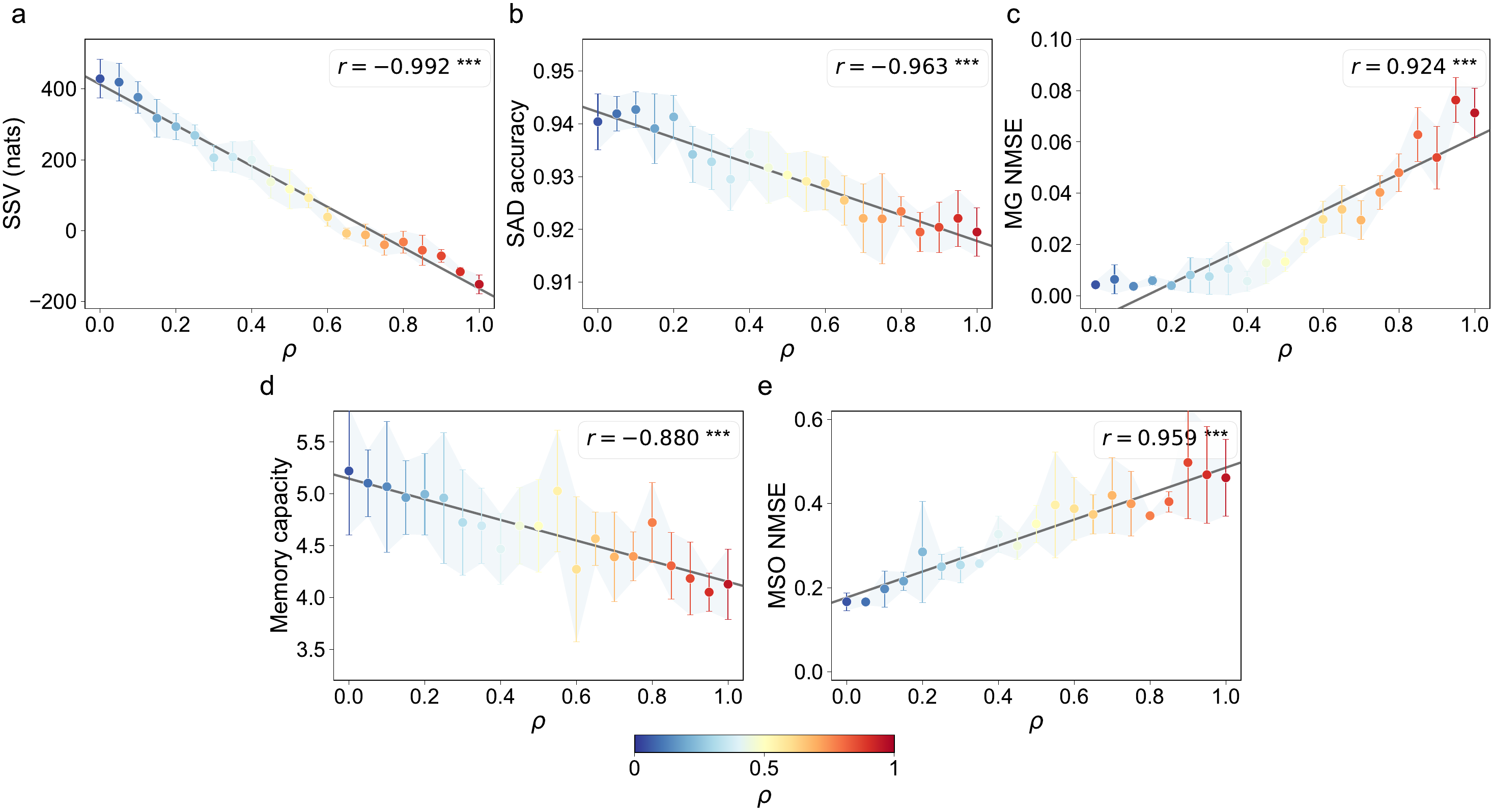}
		\caption{\textbf{Increasing reciprocity reduces representational volume and
				performance on four driven tasks.} \textbf{(a)} State-space volume (SSV). \textbf{(b)}
			Spoken Arabic Digit classification accuracy. \textbf{(c)} Mackey--Glass
			forecasting NMSE. \textbf{(d)} Memory capacity. \textbf{(e)} Multi-sine
		oscillator prediction NMSE. Markers: mean $\pm$ SD ($n = 10$
		independently generated networks per $\rho$); black line: linear fit;
		Pearson $r$ reported in each panel.
			All panels: $N = 300$, $c = 0.1$, maximum-real target
			$\sigma = 1.0$.}
		\label{fig:rho_sweep}
	\end{figure}
	
	\subsection{\texorpdfstring{\newrev{Biological connectivity datasets occupy an intermediate-reciprocity regime}}{Biological connectivity datasets occupy an intermediate-reciprocity regime}}
	\label{sec:results:bio}
	
	We analysed the seven \newrev{connectome and microcircuit datasets} described in Methods
	(Fig.~\ref{fig:species}) using a uniform computational pipeline to position
	them on the reciprocity axis explored in our synthetic networks. We treat this
	as a descriptive structural comparison, not as an independent functional test
	of the proposed three-way dynamical relationship, because SSV is not uniformly
	comparable across the seven datasets.
	
	\newrev{The biological connectivity datasets} spanned a broad band of the reciprocity axis rather
	than clustering at the unidirectional extreme: unidirectionality ranged from
	$0.41$ (\textit{Ciona}, the most reciprocal) through $0.72$--$0.79$
	(\textit{C. elegans} hermaphrodite $0.72$, \textit{Drosophila} larva $0.74$,
	Human TLE $0.73$, Human Tumor $0.79$) and $0.83$ for the adult
	\textit{Drosophila} whole brain to $0.88$ for \textit{P. dumerilii} (the most
	unidirectional), i.e.\ realized reciprocity $\hat\rho \approx 0.12$--$0.59$.
	Despite this intermediate \emph{absolute} reciprocity, reciprocal
	motifs were strongly over-represented relative to ER in every dataset
	($8.3\times$ in \textit{Ciona} to $1.68\times10^{3}\times$ in the adult
	\textit{Drosophila} brain). Cyclicity, quantified from directed 3-cycles, was
	also strongly over-represented
	($6\times$ to $1.5\times10^{3}\times$ ER), reflecting that sparse, spatially
	embedded connectomes have a very low random baseline rather than near-perfect
		directedness \citep{hoffmann2017nonrandom}. Single-fixed-seed per-neuron SSV
		values were $-6.747$ for \textit{C. elegans}, $-9.462$ for
		\textit{P. dumerilii}, $-8.429$ for \textit{Ciona}, $-9.760$ for larval
		\textit{Drosophila}, $-11.23$ for adult \textit{Drosophila}, $-9.378$ for
		Human TLE, and $-8.769$ for Human Tumor. These values remain descriptive
		because finite-sample bias varies with $N/T_{\mathrm{eff}}$, and values from
		rank-deficient cases depend on the numerical eigenvalue floor. We therefore
		make no cross-species inferential claim from SSV. The topological
		reciprocity comparisons above are unaffected by this estimator limitation.
	
		\begin{figure}[htbp]
			\centering
				\includegraphics[width=\textwidth]{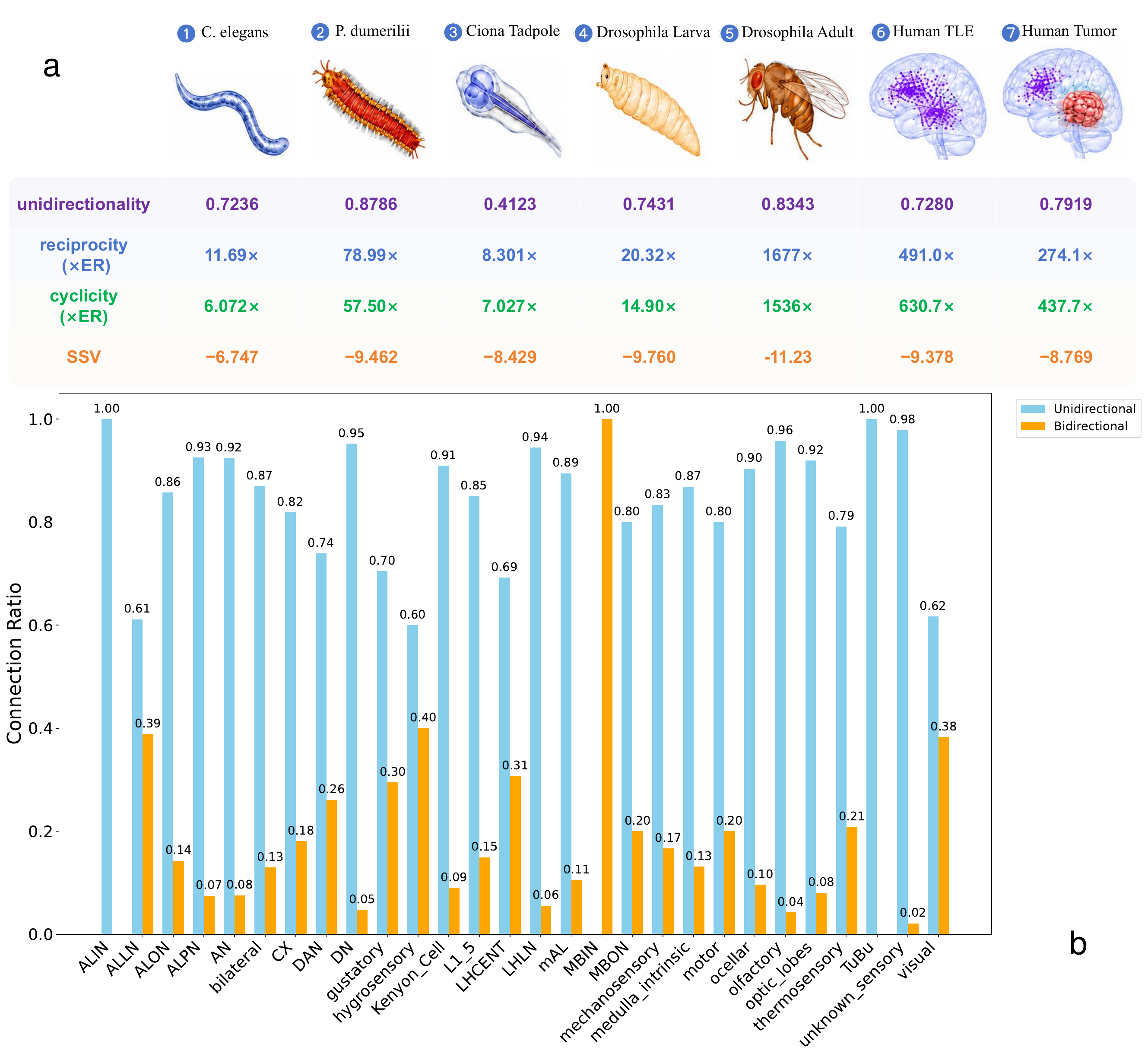}
		\caption{\textbf{\newrev{Biological connectivity datasets} span an intermediate-reciprocity
				regime with substantial within-brain directionality variation.}
			\textbf{(a)} \newrev{Cross-species connectome and microcircuit topology.} The illustrated table reports
			unidirectionality ($\upsilon = 1-\hat\rho$), reciprocity relative to ER,
	and cyclicity relative to ER across seven \newrev{connectome and microcircuit
	datasets (five species). Human entries are block-diagonal unions of sampled
	temporal-cortex L2/3 microcircuits.} Cyclicity is quantified from directed 3-cycles. Exploratory
	per-neuron SSV is shown for all seven datasets.
		\textit{P. dumerilii} is the most unidirectional
		and \textit{Ciona} the most reciprocal. SSV is a single-fixed-seed
		Monte Carlo estimate; values from rank-deficient cases depend on the
		numerical eigenvalue floor. The reciprocity
			over-representation factors for \textit{C. elegans}, the
			\textit{Drosophila} larva, and the adult \textit{Drosophila} brain are
			consistent with those reported by \citet{lin2024network}.
		\textbf{(b)} Directionality gradient across \textit{Drosophila} brain
		regions. Unidirectional and bidirectional fractions of intra-region
		connectivity for 28 anatomical regions, ordered from strongly unidirectional
		to strongly bidirectional. }
		\label{fig:species}
	\end{figure}
	
		Across these evolutionarily distant nervous systems the pattern is therefore
		not one of maximal unidirectionality but of \emph{intermediate} reciprocity
		combined with strong, density-driven over-representation of reciprocal and
		cyclic motifs. Their location overlaps the interior of our synthetic sweep,
		although this structural comparison alone cannot assign a connectome to an
		encoding- or stability-favouring computational regime.
	
	To examine whether reciprocity varies systematically
	\emph{within} a single connectome, we partitioned the
	\textit{Drosophila} adult brain into anatomical regions and computed
	unidirectionality for each region (Fig.~\ref{fig:species}b). A
		broad gradient emerged. ALIN and TuBu contained only unidirectional edges
		($\upsilon=1.00$), while unknown sensory, olfactory, and descending-neuron
		groups were also strongly unidirectional ($\upsilon=0.98$, $0.96$, and
		$0.95$, respectively). Other deposited groups contained larger
		bidirectional fractions, including visual ($0.38$ bidirectional), ALLN
		($0.39$), and hygrosensory ($0.40$). One small deposited group was fully
		bidirectional, illustrating why extreme regional ratios should not be
		interpreted without accounting for node and edge counts. More
		generally, the gradient does not separate cleanly by sensory or integrative
		label, and region size and sampling vary widely. We therefore treat the
		within-brain analysis as evidence of anatomical heterogeneity and as a
		hypothesis-generating comparison, rather than as a statistical test that
		functional role determines reciprocity~\citep{zhang2025network,zhang2025structural}.

\subsection{Seeded replication supports task-dependent delayed-memory benefits}
\label{sec:results:tradeoff}

The driven-task results raise a natural question: why do biological connectomes
retain reciprocal connections if low reciprocity benefits every task in the
original battery? We hypothesised that reciprocity can strengthen local
contraction and noise attenuation, but that the behavioural benefit of this
stability should depend on whether cue-conditioned states remain separable
through long delays or intervening distractors.

To test this, we evaluated an independently seeded ensemble generated with the
same network protocol on the three delayed-memory tasks described in Methods
(Adding Problem, Delayed Match-to-Sample, Pattern Hold).
All require selected information to survive a substantial interval during
which it is no longer re-presented: Adding contains ongoing distractors, DMS
contains noise but no informative cue, and Pattern Hold contains a silent
delay. This distinguishes delay-period maintenance from the continuously driven
trace-memory protocol used for MC.

The seeded results show an aggregate reversal relative to the driven tasks
(Fig.~\ref{fig:tradeoff}). For the representative $T=200$ Adding condition,
NMSE decreased from $0.961 \pm 0.115$ at $\rho=0$ to $0.740 \pm 0.073$ at
$\rho=1$ (Pearson correlation of the 21 $\rho$-level means, $r=-0.856$,
$p=7.7\times10^{-7}$). DMS accuracy at a 200-step delay increased from
$0.814 \pm 0.078$ to $0.980 \pm 0.029$ ($r=0.874$,
$p=2.2\times10^{-7}$). At 400 steps the increase was smaller, from
$0.265 \pm 0.032$ to $0.335 \pm 0.054$ ($r=0.692$,
$p=5.1\times10^{-4}$), and remained close to four-class chance ($0.25$).
Pattern Hold NMSE at a 50-step delay decreased from
$0.00966 \pm 0.00104$ to $0.00575 \pm 0.00042$ ($r=-0.944$,
$p=1.3\times10^{-10}$). All seven reported task--delay conditions had a
significant overall trend in the predicted direction. For each condition,
all ten independent seed-block slopes had the expected sign and remained
significant after correction ($q\leq0.0055$). None of the seven
$\rho$-level mean curves was pointwise strictly monotonic; DMS at 100 steps was
constant at ceiling accuracy and is omitted. These data therefore support
task- and delay-dependent aggregate benefits of higher reciprocity, not a
universal monotonic law for every network or delay.

\begin{figure}[htbp]
	\centering
	\includegraphics[width=\textwidth]{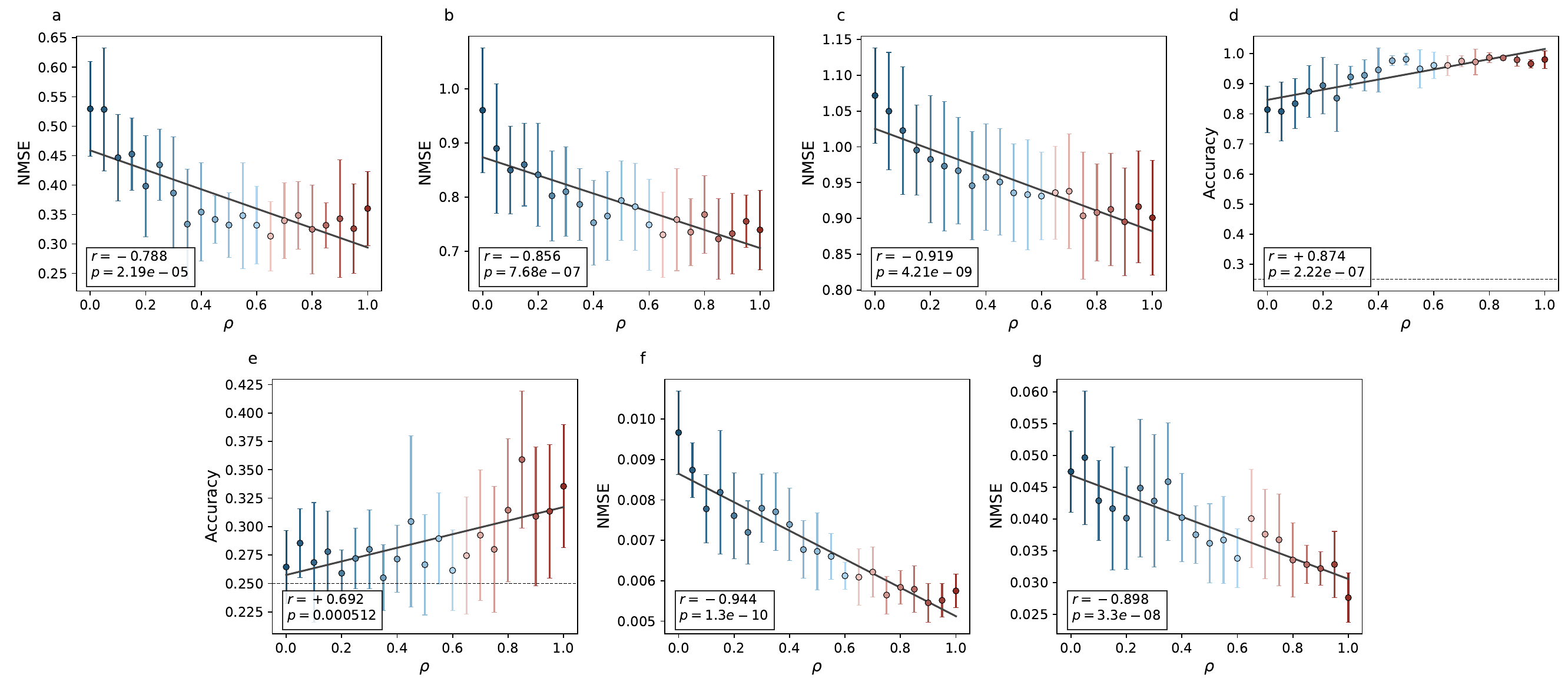}
	\caption{\textbf{Delayed-memory tasks show aggregate trends opposite to the
		driven-task sweep.}
		All seven reported task--delay conditions show significant correlations of
		the $\rho$-level means in the high-reciprocity-benefit direction; no curve is
		pointwise strictly monotonic. \textbf{(a--c)} Adding Problem at
		three sequence lengths ($T = 100$, $200$, $400$); NMSE decreases as $\rho$
		increases. \textbf{(d,e)} Delayed
		Match-to-Sample at two delay lengths ($200$, $400$ steps); accuracy
		shows an overall increase with $\rho$; the omitted 100-step condition is at
		ceiling for all levels.
		\textbf{(f,g)} Pattern Hold at $50$ and $100$ steps. Each panel: mean $\pm$ SD ($n = 10$ independently generated networks
		per $\rho$); grey line is a linear fit to the 21 level means; Pearson $r$ and
		$p$ are reported in each sub-panel.
		}\label{fig:tradeoff}
\end{figure}

\subsection{Post-cue noise exposes the separability constraint on behavioural robustness}
\label{sec:results:dms_process_noise}

To distinguish delay retention from noise resistance, we injected isotropic
state noise only after cue presentation in the same 210 networks
(Fig.~\ref{fig:dms_process_noise}). At $\sigma_x=10^{-4}$, empirical noise gain
decreased from $12.25\pm3.88$ at $\rho=0$ to $4.79\pm0.35$ at $\rho=1$
($r_s=-0.863$, $p=1.7\times10^{-63}$). Higher reciprocity therefore attenuated
post-cue state noise in this operating regime.

Noise attenuation alone did not improve endpoint accuracy. Frozen-readout AUC
changed only weakly with $\rho$ ($r_s=0.226$, $p=9.7\times10^{-4}$), and its
paired endpoint difference included zero; the noise-matched readout showed no
	\rev{monotonic trend ($r_s=0.077$, $p=0.264$).} Meanwhile, the clean cue margin declined across
$\rho$ levels ($r=-0.575$, $p=0.0064$). The combined proxy
$m_{\mathrm{clean}}^2/G_L^{\mathrm{emp}}$ was strongly associated with
frozen-readout robustness ($r_s=0.722$, $p=4.2\times10^{-35}$;
Fig.~\ref{fig:dms_process_noise}f), and all ten seed-block correlations were
positive (range $0.330$--$0.869$; exact sign test $p=0.002$). Behavioural
robustness thus depends on noise attenuation relative to the cue separation
retained by the network.

\begin{figure}[htbp]
	\centering
	\includegraphics[width=\textwidth]{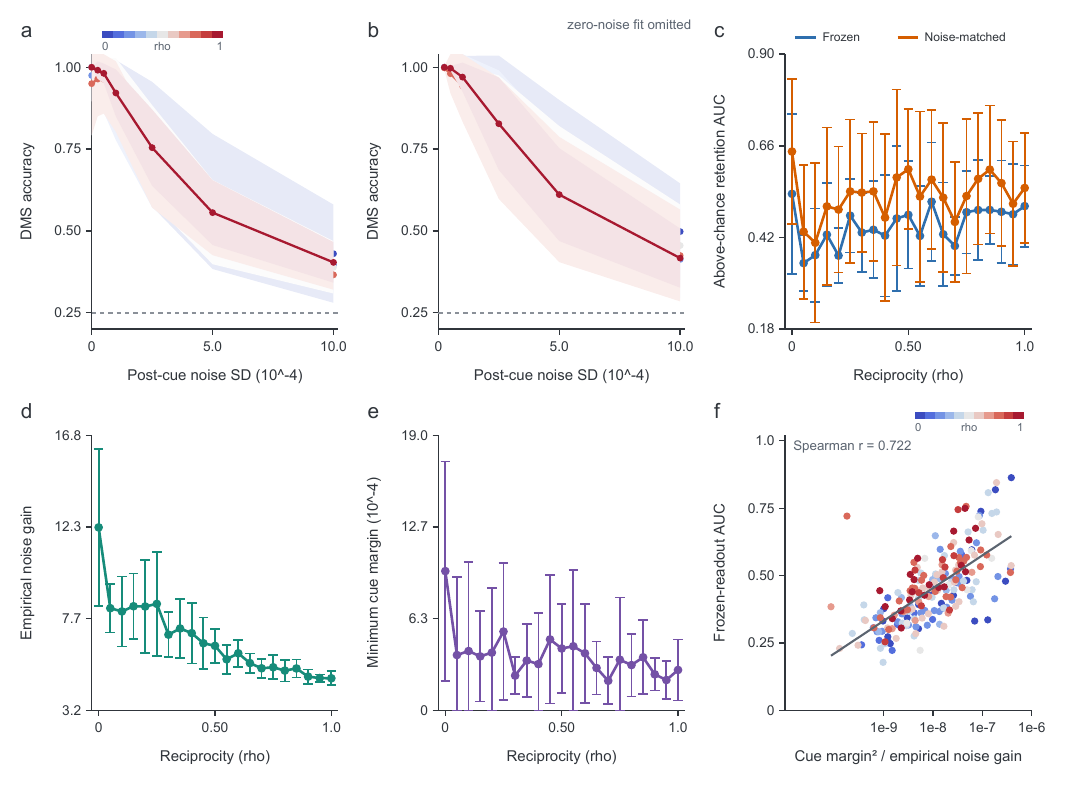}
	\caption{\textbf{Post-cue noise reveals a stability--separability boundary
		for behavioural robustness.}
		\textbf{(a)} Four-class accuracy of the readout fitted once to noiseless
		states and frozen across the dose curve. \textbf{(b)} Accuracy when the
		linear readout is refitted independently at each positive noise dose; the
		exactly zero-noise point is omitted because its noise-matched covariance
		estimate is degenerate.
		\textbf{(c)} Above-chance retention AUC for frozen and noise-matched readouts
		across all 21 reciprocity levels.
		\textbf{(d)} Paired empirical terminal noise gain at
		$\sigma_x=10^{-4}$; lower values indicate stronger attenuation.
		\textbf{(e)} Minimum centroid margin between clean cue-conditioned terminal
		states. \textbf{(f)} Frozen-readout AUC versus the measured margin/noise
		proxy $m_{\mathrm{clean}}^2/G_L^{\mathrm{emp}}$; each point is one network
		and colour denotes $\rho$. Curve colour in (a,b) likewise denotes $\rho$.
		Curves and $\rho$-level summaries show mean
		$\pm$ SD over $n=10$ independently generated networks per level; shaded
		bands in (a,b) are $\pm$ SD, and horizontal dashed lines mark four-class
		chance accuracy.}
	\label{fig:dms_process_noise}
\end{figure}
\FloatBarrier

\subsection{\texorpdfstring{\newrev{Matched networks reveal an online-memory--stability--separability trajectory}}{Matched networks reveal an online-memory--stability--separability trajectory}}
\label{sec:results:functional_triad}

We measured \newrev{online memory capacity} $C$, autonomous-state restoration $S$, and clean
cue margin $D$ on the same 210 recurrent matrices
(Fig.~\ref{fig:functional_triad}). Memory capacity decreased from
$4.317\pm0.358$ at $\rho=0$ to $3.218\pm0.278$ at $\rho=1$
($r=-0.990$, $p=8.8\times10^{-18}$ across level means;
Fig.~\ref{fig:functional_triad}a), and all ten paired seed blocks had negative
slopes (Wilcoxon $p=0.0020$). Restoration increased from
$0.9290\pm0.0100$ to $0.9583\pm0.0054$
($r=0.941$, $p=2.0\times10^{-10}$; Fig.~\ref{fig:functional_triad}b), with
positive slopes in all ten blocks (Wilcoxon $p=0.0020$). All 32 initial conditions
converged to the same nonzero, bias-supported state within each network, so
$S$ measures local restoration rather than cue-addressable multi-attractor
memory.

Cue margin decreased from $(9.63\pm7.57)\times10^{-4}$ at $\rho=0$ to
$(2.79\pm2.11)\times10^{-4}$ at $\rho=1$
($r=-0.575$, $p=0.0064$ across level means;
Fig.~\ref{fig:functional_triad}c). Eight of ten paired seed blocks had negative
slopes, and their slopes differed from zero by both the one-sample $t$ test
($p=0.0372$) and Wilcoxon test ($p=0.0371$). Thus, across the same matrix bank,
increasing reciprocity was accompanied by lower memory capacity, stronger local
restoration, and reduced cue separation. The normalized overlay displays this
joint shift directly (Fig.~\ref{fig:functional_triad}d).

The fixed directed-edge budget supplies a structural interpretation but not a
fourth functional coordinate. Distinct-partner coverage $U$ obeyed the exact
identity $U=c_{\mathrm r}(2-\rho_{\mathrm r})$ and fell from 59.8 to 29.9
partners per neuron between $\rho=0$ and $\rho=1$. Thus increasing reciprocity
reallocated directed edges from new partner pairs to feedback closure in this
ensemble; this algebraic coupling does not by itself establish the cause of
the measured dynamical changes.
	\subsection{\texorpdfstring{\rev{Component matching separates stability- and separability-associated trends}}{Component matching separates stability- and separability-associated trends}}
\label{sec:results:csd_matching}

The shared-matrix trajectory establishes joint variation of $C$, $S$, and $D$,
but does not by itself determine how those components enter behaviour. We
	therefore applied \rev{a controlled component-matching protocol} to the same 210-matrix
bank. \newrev{As an additional driven-task check,} held-out NARMA-10
performance decreased with reciprocity in
all ten paired seed blocks ($t$ test $p=3.4\times10^{-8}$; Wilcoxon
$p=0.0020$). The measured \newrev{online memory capacity} $C$
predicted NARMA-10 performance across networks ($r_s=0.639$,
$p=1.9\times10^{-25}$), and the association remained positive after removing
$\rho$-level means ($r=0.207$, $p=0.0025$).

Cue-margin matching removed the natural decline in $D$
(Fig.~\ref{fig:functional_triad}c); the median relative matching errors
were 0.30\% for $D$ and 1.31\% for within-cue dispersion $V$. With $D$ held constant,
empirical noise gain still decreased with reciprocity in all ten seed blocks
($t$ test $p=1.5\times10^{-6}$; Wilcoxon $p=0.0020$;
Fig.~\ref{fig:functional_triad}e), and frozen-readout robustness increased
in all ten blocks ($p=2.7\times10^{-4}$). Matching within-cue dispersion as well
	attenuated this positive behavioural slope in every block. \rev{The resulting
	stability-associated contrast was positive in all ten blocks}
($\Delta\beta_S=0.0902$, $p=6.6\times10^{-7}$; Wilcoxon $p=0.0020$;
Fig.~\ref{fig:functional_triad}f).

	The separability-associated contrast was negative in seven of ten blocks
	($\Delta\beta_D=-0.103$)\newrev{, but was not independently distinguishable from
	zero ($p=0.136$)}. The stability- and separability-associated contrasts
	differed within the paired blocks (mean difference $0.193$,
	$t$ test $p=0.017$; Wilcoxon $p=0.027$). \reviewrev{Under this specified matching
	protocol, holding cue margin constant exposed a reproducible
	stability-associated gain.}

\begin{figure}[htbp]
	\centering
	\includegraphics[width=\textwidth]{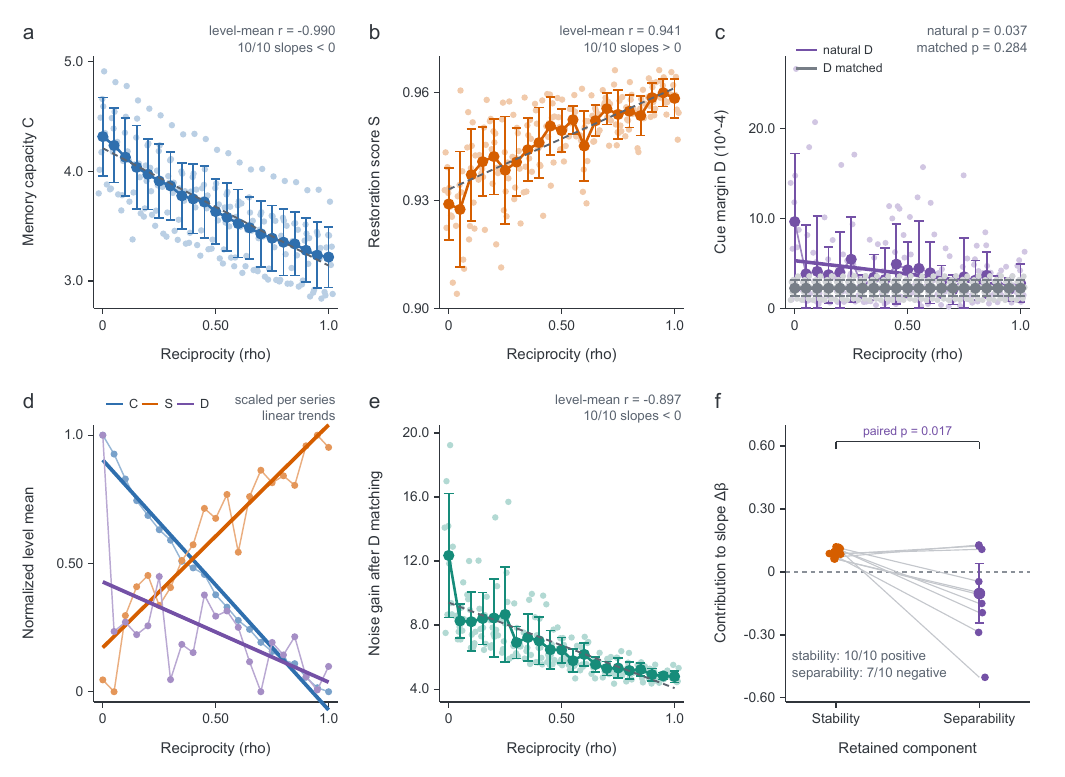}
		\caption{\textbf{\newrev{Matched recurrent matrices reveal an
				online-memory--stability--separability trade-off; component matching compares
				its behavioural components.}}
		\textbf{(a)} Jaeger-style summed memory capacity $C$ over delays 1--20.
		\newrev{This coordinate measures online temporal reconstruction and is distinct
		from representational geometry measured by SSV.}
		\textbf{(b)} Autonomous-state restoration score $S$, defined as
		$1-$normalized recovery-distance AUC after a $5\times$ activity-SD impulse.
		\textbf{(c)} Natural clean DMS cue margin $D$ and cue margin after
		\rev{component matching of $D$.} Natural $D$ is the minimum distance between four
		cue-conditioned terminal-state centroids after a 200-step delay.
		\textbf{(d)} Min--max normalized $\rho$-level means of $C$, $S$, and $D$;
		each series is scaled independently for visualization, and thick lines show
		linear trends through the 21 observed level means. \textbf{(e)} Empirical
		process-noise gain after cue-margin matching. In (a--c,e), pale points are
		individual networks, large points show mean $\pm$ SD ($n=10$ per $\rho$
		level), and dashed or straight lines show trends through level means.
		\textbf{(f)} \rev{Block-level component contrasts for the reciprocity slope}
		$\beta$ of frozen-readout robustness AUC. \rev{The stability-associated contrast is}
		$\Delta\beta_S=\beta_{D\text{-matched}}-\beta_{D,V\text{-matched}}$;
			\rev{the separability-associated contrast is}
		$\Delta\beta_D=\beta_{V\text{-matched}}-\beta_{D,V\text{-matched}}$.
		Small points are paired network-index blocks, lines connect the same block,
		and large points with error bars show mean $\pm$ 95\% CI ($n=10$). The
			annotated $p$ value compares the two \rev{contrasts} with a paired $t$ test.
		Panels (a--f) use the same 210 recurrent matrices across 21 $\rho$ levels.}
	\label{fig:functional_triad}
\end{figure}
\FloatBarrier

\subsection{\texorpdfstring{\newrev{Scale-invariant non-normality tracks response volume and recovery}}{Scale-invariant non-normality tracks response volume and recovery}}
\label{sec:results:mechanism}

We next examined scale-invariant non-normality as a spectral correlate of the
opposing \newrev{SSV} and recovery trends. For the pre-balance excitatory ensemble,
the paired-edge generator obeys
\begin{equation}
    \mathbb{E}\|\mathbf{W}_{e}-\mathbf{W}_{e}^{\top}\|_{F}^{2}
    =2E\bigl[v+(1-\rho)\mu^{2}\bigr],
    \label{eq:rho_antisymmetry}
\end{equation}
where $E$ is the directed-edge count and nonzero weights have mean $\mu$ and
variance $v$ (Supplementary Note~S5). Reciprocity therefore removes the
antisymmetric energy contributed by edge orientation. Because inhibition and
spectral normalization can reintroduce asymmetry, we quantify the final signed
matrix using the relative commutator defect $\delta_{\mathrm{rel}}$ rather than
inferring normality from topology~\citep{shao2023relating,shao2025impact}.

$\delta_{\mathrm{rel}}$ decreased from $0.08346$ at $\rho=0$ to $0.05895$ at
$\rho=1$ ($r=-0.939$, $p=1.9\times10^{-5}$ across level means;
Fig.~\ref{fig:mechanism}a). Its association with SSV was strong across networks
($r=0.873$, $p=2.8\times10^{-104}$; Fig.~\ref{fig:mechanism}c) but weak after
removing $\rho$-level means ($r=0.109$, $p=0.048$). Non-normality is therefore
an ensemble-level correlate, not an established within-level mediator.

For a stable linear reservoir, standardized SSV is the log-determinant of a
row-normalized controllability Gramian, or equivalently the log-volume spanned
by normalized impulse responses (Theorem~\ref{thm:standardized_gramian}). A
fixed-spectrum two-dimensional construction further shows that non-normal shear
can increase SSV and transient gain while reducing delayed-cue information
under noise (Theorem~\ref{thm:shear_standardized}). This trade-off depends on
input alignment and noise; non-normality alone does not order arbitrary tasks.
Consistently, for a stable comparison matrix $\mathbf{A}$, the Kreiss Matrix
Theorem~\citep{trefethen2005spectra} gives
\begin{equation}
	\mathcal{K}(\mathbf{A})\leq
	\sup_{k \geq 0} \|\mathbf{A}^k\|_2
	\leq e N\,\mathcal{K}(\mathbf{A}),
	\label{eq:kreiss}
\end{equation}
so non-normal systems can amplify perturbations beyond predictions from their
eigenvalues alone. Here this theorem serves as a comparison principle, not a
global bound on the nonlinear reservoir.

We tested the prediction by perturbing 20\% of neurons after the network reached
its bias-supported steady state. Recovery steps decreased with reciprocity
($r=-0.955$, $p=4.7\times10^{-6}$ across level means;
Fig.~\ref{fig:mechanism}b). Recovery also covaried with
$\delta_{\mathrm{rel}}$ across networks ($r=0.639$,
$p=6.1\times10^{-14}$; Fig.~\ref{fig:mechanism}d), but not within fixed
$\rho$ levels ($r=0.003$, $p=0.978$). Scale-invariant non-normality therefore
tracks the ensemble shift from greater response volume to faster recovery, but
does not establish a causal mechanism. Figure~\ref{fig:dms_process_noise}
separately shows why faster recovery is useful only when cue-conditioned states
remain distinguishable.

\begin{figure}[!htbp]
	\centering
	\includegraphics[width=\textwidth]{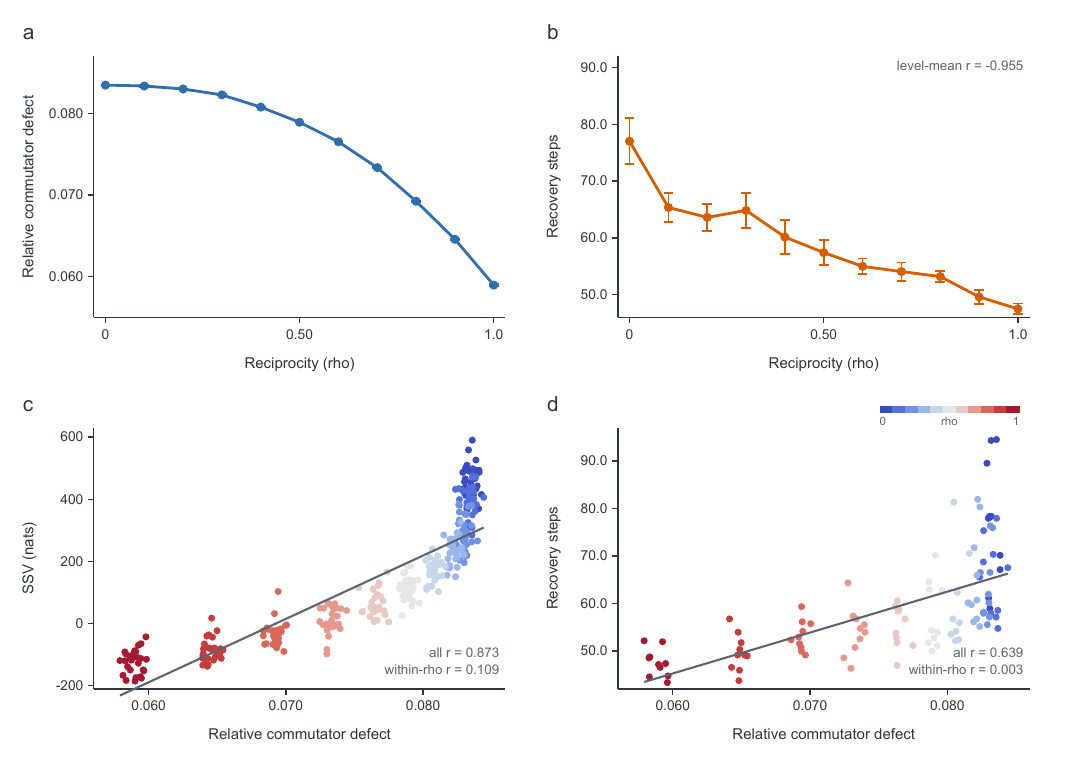}
	\caption{\textbf{Scale-invariant non-normality covaries at the ensemble level
			with response volume and perturbation recovery.}
		\textbf{(a)} Relative commutator normality defect
		$\delta_{\mathrm{rel}}(\mathbf{W})$ versus reciprocity $\rho$.
		\textbf{(b)} Perturbation recovery steps versus $\rho$. Higher
		reciprocity accelerates recovery (level-mean $r=-0.955$,
		$p=4.7\times10^{-6}$).
		\textbf{(c)} SSV versus $\delta_{\mathrm{rel}}$, coloured by
		$\rho$ (all networks $r=0.873$; within-$\rho$ $r=0.109$).
		\textbf{(d)} Recovery steps versus $\delta_{\mathrm{rel}}$,
		coloured by $\rho$ (all networks $r=0.639$; within-$\rho$ $r=0.003$).
		Grey lines in (c,d): fits across all displayed networks. Points and error
		bars in (a,b) show level means $\pm$ SEM; in (a), SEM is smaller than the
		markers. All panels:
		$N = 300$, $c = 0.1$, maximum-real target $\sigma = 1.0$;
		(a,c) 30 trials per $\rho$, (b,d) 10 trials per $\rho$.}
	\label{fig:mechanism}
\end{figure}
\FloatBarrier

Additional control analyses are provided in the Supplementary Materials. The
decrease in state-space volume persisted across network sizes, alternative
topologies, connection densities, and E/I-balance constructions
(Supplementary Notes~S1, S3, and S6), while the effective-rank analysis
provided an independent spectral description of the same dimensionality shift
(Supplementary Note~S4). In contrast, the restoration trend depended on the
operating-point normalization and was absent under the paired maximum-modulus
control; input-noise robustness and perturbed-task retention likewise showed no
universal monotonic dependence on reciprocity (Supplementary Note~S7). These
controls delimit the scope of the main result: reciprocity organizes a
task- and operating-regime-dependent trade-off rather than imposing a
universal performance law.

\section{Discussion}
\label{sec:discussion}

Our central finding is that reciprocity $\rho$ coordinates a three-way,
task- and operating-regime-dependent dynamical trade-off among representational
capacity, local stability, and task-state separability. Low reciprocity
maximises representational volume and driven encoding capacity. Higher
reciprocity strengthens local restoration and reduces internal noise
propagation in the original maximum-real operating regime, while the post-cue
DMS scan shows that this benefit can coincide with smaller margins between
cue-conditioned states. \newrev{The unified 210-network analysis places online
memory capacity $C$, stability $S$, and cue margin $D$ on the same recurrent
matrices: online memory capacity and cue separation decrease as stability
strengthens. Here $C$ quantifies a specific driven temporal task; the broader
representational-geometry result is supported separately by SSV and the driven
task battery.} This matched
measurement identifies a functional trajectory rather than an optimized
Pareto frontier. \rev{The component-matching protocol tests these behavioural
components beyond their natural covariance within this assay: holding cue
margin fixed exposes a reproducible stability-associated gain in frozen-readout
robustness, and matching within-cue dispersion attenuates that gain in every
paired seed block.}

This distinction explains why dynamical stability and behavioural reliability
need not vary together. Increasing $\rho$ strongly reduces within-cue
dispersion by attenuating the propagation of internal state noise, but it also
reduces between-cue margins. Behavioural robustness improves only when the
benefit of reduced noise amplification is sufficient to compensate for the
loss of task-state separation. Consistent with this interpretation,
$m_{\mathrm{clean}}^2/G_L^{\mathrm{emp}}$ is strongly associated with
frozen-readout robustness,
whereas local recovery or reciprocity alone does not yield a monotonic endpoint
advantage.

In the generated ensemble, the relative
commutator normality defect $\delta_{\mathrm{rel}}(\mathbf{W})$ tracks $\rho$
and covaries with SSV and recovery time, but has not been shown to mediate task
performance independently of $\rho$. The formal linear analysis identifies
the relevant encoding object as the normalized controllability Gramian and
proves a fixed-spectrum constructive trade-off: non-normal response spreading
can enlarge standardized SSV while reducing cue information after a specified
noisy delay. A complementary fixed-spectrum theorem shows that normal
dynamics attain the minimum total amplification of isotropic process noise,
while the Gaussian decoding corollary makes cue margin an explicit second
requirement. The nonlinear task ordering remains an empirical ensemble result.

The task evidence for opposing computational demands is direct but conditional.
On four driven tasks, including classification, chaotic forecasting, temporal
memory capacity, and multi-frequency oscillator prediction, performance degrades overall with
increasing $\rho$ (Fig.~\ref{fig:rho_sweep}). On three
classic delayed-memory tasks, namely the Adding Problem, delayed match-to-sample,
and pattern hold, all seven reported task--delay conditions show
significant trends in the opposite direction (Fig.~\ref{fig:tradeoff}), although
the curves are not pointwise monotonic and long-delay DMS remains near chance.
Perturbation recovery is consistent with the memory-maintenance benefit:
networks with higher reciprocity suppress deviations from a reference
trajectory more rapidly in the maximum-real operating regime
(Figs.~\ref{fig:functional_triad}b and \ref{fig:mechanism}b,d). It is not, by
itself, a memory assay, and the frozen-readout integration experiment shows
that faster geometric recovery can coexist with non-monotonic behavioural
	retention. The post-cue DMS scan reaches the same distinction under ongoing
	post-cue process noise: higher $\rho$ suppresses paired state deviations, but cue-margin
compression offsets that benefit at the decoder. Theorem
~\ref{thm:s5_attractor_memory} makes the additional
requirement explicit: within-cue deviations must contract without erasing the
separation between different cue-conditioned states or trajectories.

	The fixed edge budget provides a structural interpretation of how reciprocity
	can coordinate these three properties. Equation~\ref{eq:partner_budget} shows
	that, in the present ensemble, increasing $\rho$ reallocates directed edges
	from distinct-partner coverage $U$ to feedback closure. Broader partner coverage
	can diversify propagation paths and driven responses, whereas feedback loops
	can strengthen local restoration. The same convergence can compress
	cue-conditioned trajectories, making task-state margin a separate limiting
	factor. This resource-reallocation account is coupled to the dynamical
	operating point: changing spectral normalization removes the observed
	restoration trend without removing the capacity trend. It is therefore a
	structural interpretation of the present finite-budget ensemble, not an
	identified causal decomposition of reciprocity and partner coverage.

These results motivate a cautious interpretation of biological connectivity. The
seven \newrev{connectome and microcircuit datasets} examined occupy an intermediate reciprocity regime
(Fig.~\ref{fig:species}), rather than either endpoint of the synthetic sweep.
This positioning is compatible with, but does not prove,
balancing the opposing pressures identified in the model. The adult
\textit{Drosophila} region analysis shows substantial anatomical heterogeneity
(Fig.~\ref{fig:species}b); several putative feedforward groups are strongly
unidirectional, but the deposited region classes do not form a clean functional
ordering and differ greatly in size and sampling.

The over-representation of reciprocal motifs relative to ER null models
($8\times$ to $>10^3\times$), despite intermediate \emph{absolute} reciprocity,
has a structural explanation: in sparse networks, the ER baseline expectation
of reciprocity is extremely low, and spatial embedding with heterogeneous
connection probabilities mathematically inflates motif counts relative to
uniform random graphs~\citep{hoffmann2017nonrandom}. Reciprocal connections have
also been implicated in complementary functions, including persistent activity~\citep{fuster1971neuron,
goldman1995cellular}, pattern completion~\citep{sommer2001associative}, and
attractor stabilisation~\citep{hopfield1982neural}. These functions can matter when
circuits must maintain information without continuous input. Our results
suggest that these functions can be in tension with both the representational
capacity and task-state separation supported by more directed connectivity.
Reciprocity is therefore one structural coordinate whose consequences depend
on finite wiring resources, task demands, and operating regime.

\newrev{Recent work makes the dependence on circuit organization particularly
clear. \citet{ghosh2026partial} varied pathways in trained architectures and
found task-dependent benefits of feedback; \citet{dahmen2026localized} linked
cortical dimensionality to effective recurrence and population-specific motifs;
and \citet{javadzadeh2026consensus} identified selective persistence of agreement
modes in reciprocal inter-area circuits. These studies concern pathway
architecture, effective coupling, or inter-area mode structure, respectively.
Our contribution is to examine reciprocal-edge allocation at a fixed edge
budget and to relate activity geometry and driven-task performance to stability
and cue separation, including paired measurements of online MC, stability, and
cue margin on the same recurrent matrices.}

\newrev{This account also connects to established theory of response geometry,
memory, and perturbation amplification.} \citet{wainrib2015context} found
that random symmetric connectivity retained less contextual information than
asymmetric connectivity in a linear model, but that this ordering could reverse
in the nonlinear regime. That conditional result is consistent with our formal
finding that input alignment and operating regime are indispensable for any
SSV ordering. Likewise, \citet{ganguli2008memory} showed that appropriately
aligned non-normal architectures can provide extensive online trace memory;
this supports our placement of continuously driven MC on the transient-encoding
side and cautions against equating all memory definitions. For stable linear systems, the Kreiss Matrix
Theorem~\citep{trefethen2005spectra} establishes the complementary possibility
of large transient perturbation amplification, and our recovery experiment
tests this prediction in the nonlinear operating regime.
\citet{shao2023relating,shao2025impact}
demonstrated that reciprocal and chain motifs map onto specific spectral
signatures. Our paired-edge lemma supplies an exact topology-to-antisymmetry
bridge before E/I balancing, while the measured
relationship between $\rho$ and $\delta_{\mathrm{rel}}(\mathbf{W})$ supplies an empirical
bridge for the final signed ensemble. Together, these theoretical threads
distinguish three objects: response geometry, perturbation amplification, and
cue-conditioned decoding margin. Their joint effect motivates an empirically
testable trade-off whose expression depends on input alignment, task, and
operating regime. Our empirical contribution is to map those dependencies
across synthetic tasks; the connectome analysis supplies
structural context rather than an independent functional validation, and the
commutator defect remains a diagnostic rather than an identified mediator.

The dimensionality result is robust across a broad range of controls reported
in the Supplementary Materials: the relationship between $\rho$ and SSV holds across network
sizes ($N = 100$--$500$; Supplementary Note~S1), topologies and connection
densities (S3), and is
not attributable to proximity to the edge of chaos (S2). Reciprocity
compresses the effective rank of the recurrent weight matrix without
reducing its algebraic rank (S4). The trend between $\rho$ and SSV also survives global,
row-balanced, and Dale-like E/I implementations (S6), although the Dale-like
control shows that $\delta_{\mathrm{rel}}$ is not a universal or unique
mediator of the dimensionality effect. This robustness statement does not
extend to the stability result, whose $\rho$ trend is absent under the paired
maximum-modulus operating point (S7).

Several limitations should be noted. Our networks employ fixed recurrent
weights, whereas biological plasticity can dynamically reconfigure reciprocity
on behavioural timescales. The reciprocity parameter $\rho$ is a global scalar,
while biological reciprocity is structured across cortical layers, cell types,
and projection classes. In the fixed-edge-count generator, $\rho$ is also
coupled algebraically to the number of distinct connected partner pairs.
Fixed-skeleton or degree-preserving interventions are required to separate
partner coverage from reciprocity as causal variables. The ER null used for motif over-representation is a
descriptive baseline; stronger claims about motif enrichment should be tested
against degree-preserving and spatially constrained null models. No genuine
continuous-time rate-model or spiking-network control is reported here;
whether the relationship between $\rho$ and SSV generalizes to those dynamics remains an
open test. Our connectome analyses also rely on structural adjacency matrices;
effective connectivity, shaped by plasticity,
neuromodulation, and inhibitory gating, may differ substantially from the
structural substrate analysed here.

The formal theory is exact for finite-horizon
linear responses and for stable stationary linear systems, but it does not
assert a universal monotone map from reciprocity or non-normality to SSV or
memory; input alignment, noise structure, and the nonlinear operating point
remain essential (Supplementary Note~S5). The current recovery experiment
characterises one bias-driven reference trajectory and does not measure
cue-conditioned separation, basin retention, or transverse contraction during
DMS and Pattern Hold. Cue margin was measured directly in the post-cue DMS
scan, but not across every delayed-maintenance task. In addition, the weak or absent within-$\rho$
associations show that the present commutator analysis is an ensemble-level
mechanistic correlate, not a causal mediation test. \rev{The component-matching
analysis is also conditional on its implementation: matching $D$ and $V$ by
changing cue and process-noise amplitudes can alter the task operating point and
other unmeasured aspects of state geometry. It therefore tests the proposed
components under a specified protocol rather than providing a general
intervention on reciprocity.}

The shared-matrix
confirmation further shows that the stability trend is absent under the
paired maximum-modulus normalization and that neither input-noise robustness nor
perturbed-task retention increases monotonically with $\rho$. These null and
regime-dependent results limit the structural-control-parameter interpretation
to the measured three-way relationship rather than a universal robustness law.
The post-cue DMS scan establishes a robust reduction in internal
noise amplification, but its decoding curves likewise lack a $\rho$-endpoint
advantage because cue separation changes concurrently. \reviewrev{Under the controlled
matching protocol, higher reciprocity improved frozen-readout robustness after
cue margin was matched, and matching within-cue dispersion attenuated that gain.
These results identify a stability-associated contribution under the tested
conditions and support the proposed component-wise
interpretation of behavioural robustness.}
	
\section{Conclusion}
\label{sec:conclusion}

Our results show that reciprocity $\rho$ coordinates a three-way,
task- and operating-regime-dependent dynamical trade-off among representational
capacity, local stability, and task-state separability. Low reciprocity supports
larger representational volume and stronger high-dimensional transient
computation. Higher reciprocity accelerates local recovery and reduces internal
noise propagation in the original maximum-real operating regime, but also
compresses cue-state margins. The paired maximum-modulus control shows that the
local-stability component depends on operating point, while the post-cue scan
shows why greater contraction need not produce a monotonic gain in task
accuracy: behavioural robustness depends on whether reduced noise amplification
compensates for the loss of state separation. \reviewrev{Component matching further
yields a reproducible stability-associated contrast in frozen-readout
robustness.} \newrev{The shared-matrix trajectory uses online MC as its
task-specific $C$ coordinate, while SSV characterizes driven response geometry.}
Across
$\rho$-group means, the
\newrev{SSV} and recovery trends covary with the relative commutator normality
defect $\delta_{\mathrm{rel}}(\mathbf{W})$ in the generated ensemble, without
evidence that this quantity is a sufficient causal mediator. Under the fixed
edge budget used here, reciprocity also reallocates connections from
distinct-partner coverage to feedback closure; whether biological circuits use
this allocation remains a testable prediction. \newrev{Biological connectivity datasets} occupy
an intermediate-reciprocity regime, consistent with but not by itself proving
a balance among these pressures. Reciprocity is therefore neither uniformly
beneficial nor uniformly detrimental: its behavioural consequences depend
jointly on response diversity, local contraction, task-state separation, and
the dynamical operating regime.




\bibliography{references}
	\clearpage
	
\section*{Supplementary Materials}
\addcontentsline{toc}{section}{Supplementary Materials}

\subsection*{S1. Robustness across network sizes and topologies}

Unless explicitly stated otherwise, the legacy controls in S1--S4 use the
same maximum-real normalization as the main task sweep. They test robustness
to size, topology, density, and metric choice within that operating regime;
they do not establish robustness to conventional maximum-modulus normalization,
which is tested separately in S6--S7.

To test whether the relationship between $\rho$ and SSV depends on network scale, we
repeated the $\rho$ sweep at four network sizes ($N \in \{100, 200, 300,
500\}$; $c = 0.1$, maximum-real target $\sigma = 1.0$, $n = 6$ trials per condition). The
negative correlation between $\rho$ and SSV remained strong across all $N$
(Table~\ref{tab:n_rho}). Task performance similarly retained its dependence
on $\rho$ at every network size, with SAD accuracy declining and MG NMSE
rising as reciprocity increased.

\begin{table}[ht]
	\centering
	\caption{\textbf{Reciprocity effect on SSV and task performance across
			network sizes.} Pearson $r$ between $\rho$ and each metric. All
		correlations significant at $p < 0.001$.}
	\label{tab:n_rho}
	\begin{tabular}{c c c c}
		\toprule
		$N$ & SSV vs.\ $\rho$ ($r$) & SAD vs.\ $\rho$ ($r$) & MG vs.\ $\rho$ ($r$) \\
		\midrule
		100 & $-$0.994 & $-$0.978 & $+$0.955 \\
		200 & $-$0.988 & $-$0.953 & $+$0.938 \\
		300 & $-$0.991 & $-$0.993 & $+$0.948 \\
		500 & $-$0.997 & $-$0.972 & $+$0.877 \\
		\bottomrule
	\end{tabular}
\end{table}



\subsection*{S2. Trial-to-trial variability}

To assess whether the effects attributed to reciprocity could instead be
summarized by drift in maximum eigenvalue modulus, we examined
trial-to-trial variability and the relationship between $|\lambda|_{\max}$
and task performance. The maximum absolute eigenvalue $|\lambda|_{\max}$
correlated weakly with MG NMSE ($r = -0.41$; Supplementary Fig.~S1b),
making a simple one-variable modulus explanation less likely within this
ensemble. This observational analysis does not replace the paired
maximum-modulus control in S7. Trial-to-trial variability was
largest at low reciprocity and declined with increasing $\rho$
(Supplementary Fig.~S1a), consistent with the prediction that non-normal
dynamics amplify noise.

\begin{figure}[htbp]
	\centering
	\includegraphics[width=\textwidth]{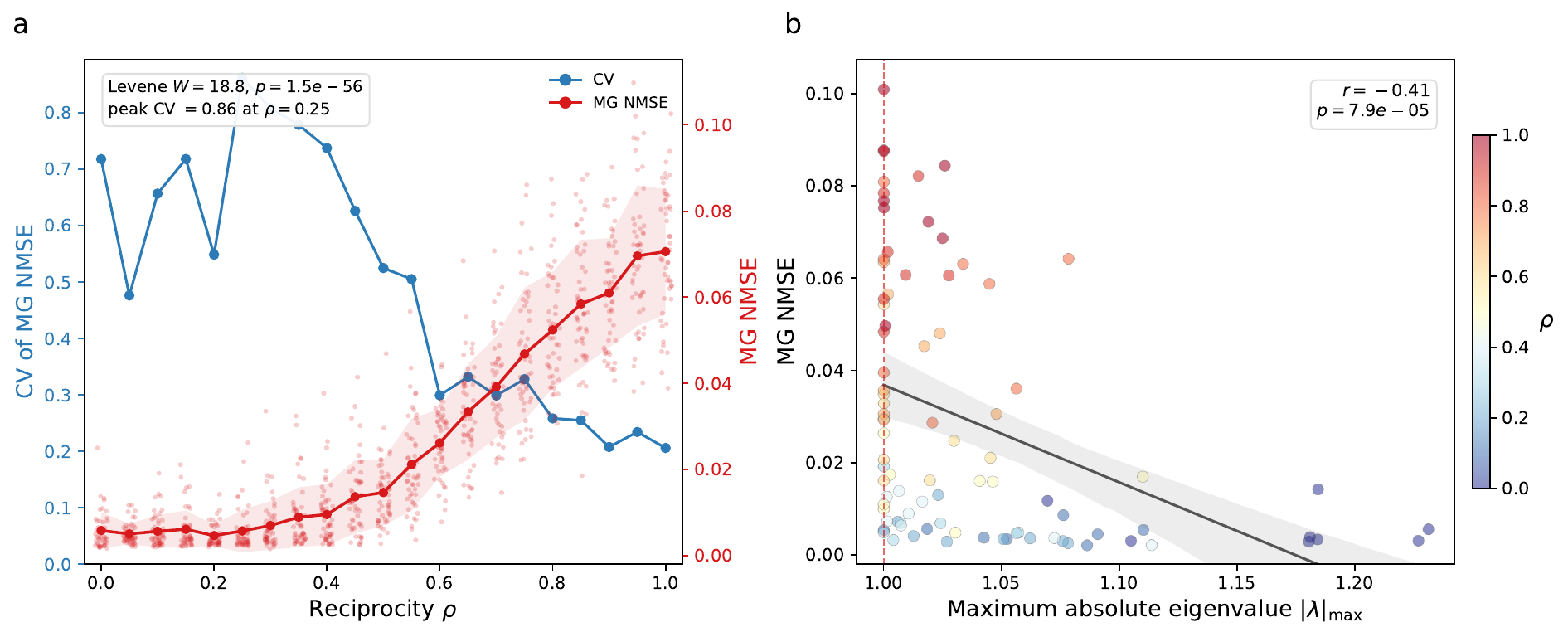}
	\caption{\textbf{Trial-to-trial variability.}
		\textbf{(a)} Coefficient of variation of MG NMSE and per-trial NMSE vs.\
		$\rho$. \textbf{(b)} $|\lambda|_{\max}$ vs.\ MG NMSE, coloured by $\rho$.
		$N = 300$, $c = 0.1$, maximum-real target $\sigma = 1.0$,
		$n = 50$ per $\rho$.}
	\label{fig:supp_variability}
\end{figure}

\subsection*{S3. Generality across topologies and densities}

To verify that the relationship between $\rho$ and performance is not an artefact of a
specific network construction algorithm, we tested three topologies (ER,
cycle-controlled, reciprocity-controlled) at six connection densities
($c \in \{0.001, \dots, 0.2\}$). Across all topologies and densities,
increasing $\rho$ consistently reduced SAD accuracy and increased MG NMSE
(Supplementary Fig.~S2). The effect was strongest in the
reciprocity-controlled topology and remained qualitatively present in all
alternatives. This supports robustness across the tested constructions within
the maximum-real operating regime rather than a normalization-independent
property of recurrent connectivity.

\begin{figure}[htbp]
	\centering
	\includegraphics[width=\textwidth]{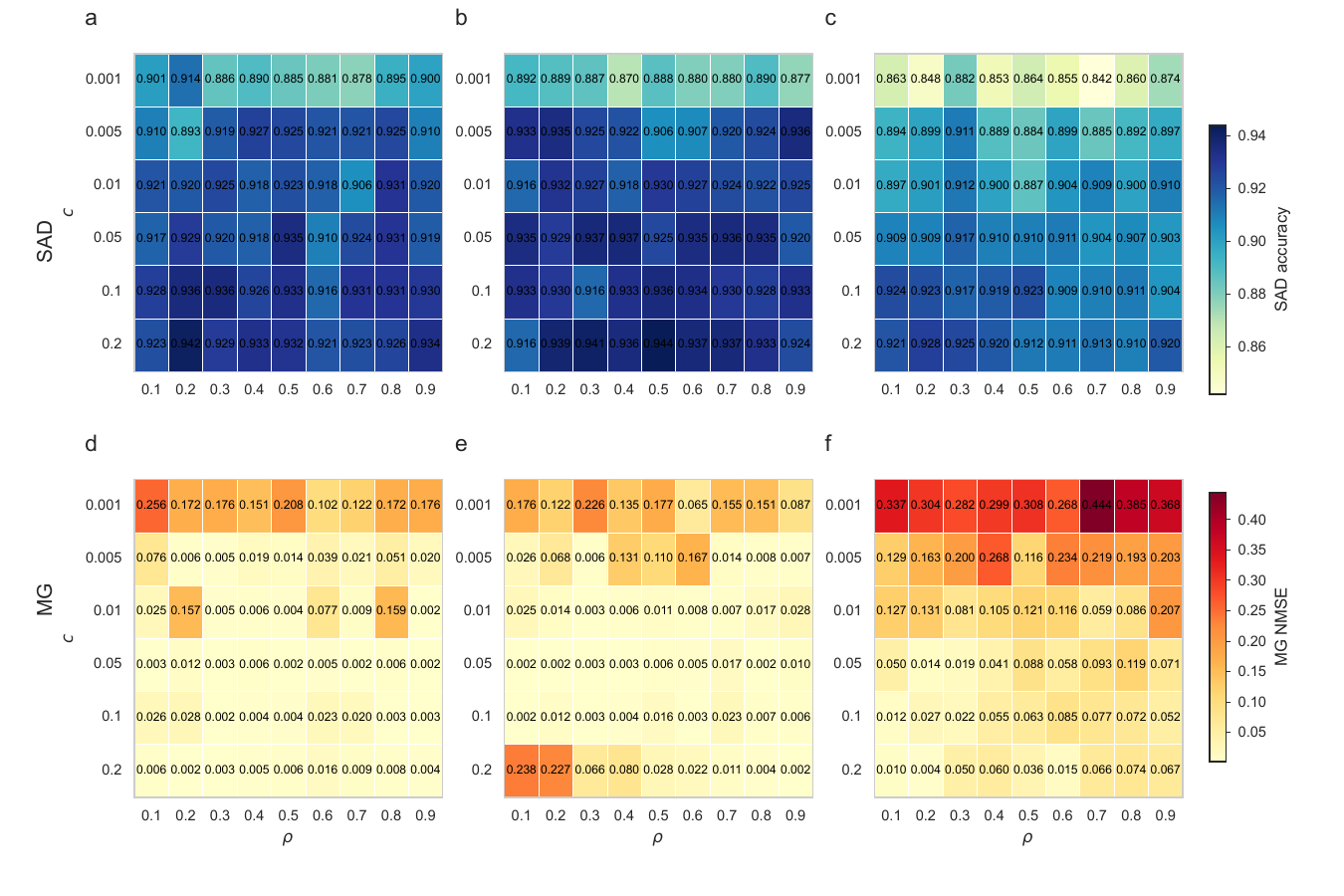}
	\caption{\textbf{Topology and density dependence.}
		\textbf{(a--c)} SAD accuracy for ER, Cycles, and Reciprocity topologies.
		\textbf{(d--f)} MG NMSE. $N = 300$, maximum-real target
		$\sigma = 1.0$.}
	\label{fig:supp_topology}
\end{figure}

\subsection*{S4. Spectral mechanisms and effective rank}

To link the relationship between $\rho$ and SSV to spectral structure, we computed the
effective rank of $\mathbf{W}$ via the participation ratio
$\mathrm{PR} = (\sum_i |\lambda_i|)^2 / \sum_i |\lambda_i|^2$.
Increasing $\rho$ reduced PR from $264 \pm 1$ to $224 \pm 1$
($r = -0.98$, $p < 0.001$; Supplementary Fig.~S3), while the matrix rank
remained identically $N = 300$, confirming that reciprocity compresses the
effective dimensionality of the recurrent weight matrix without reducing
its algebraic rank. An independent spectral proxy $H$ tracked SSV
($r = 0.84$), and weight reciprocity ($\alpha$) independently degraded MG
NMSE at fixed link reciprocity ($r = 0.85$, $p < 0.001$), confirming that
both link-level and weight-level reciprocity contribute to the trade-off.

\begin{figure}[htbp]
		\centering
		\includegraphics[width=\textwidth]{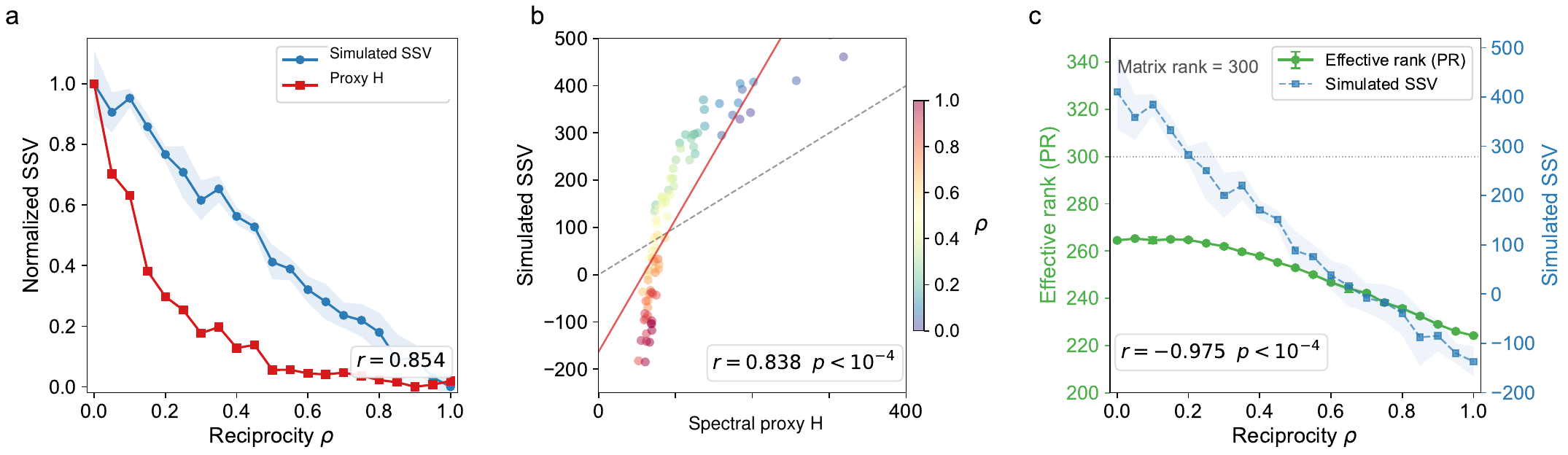}
	\caption{\textbf{Spectral proxies and effective rank.}
		\textbf{(a)} SSV and spectral proxy $H$ vs.\ $\rho$.
		\textbf{(b)} $H$ vs.\ measured SSV.
		\textbf{(c)} Effective rank (PR) and SSV vs.\ $\rho$.}
		\label{fig:supp_spectral}
	\end{figure}

\subsection*{S5. Formal theory for standardized state-space volume}
\label{sec:supp_formal_theory}

This note states precisely what can and cannot be concluded from linear
network theory. The SSV implemented in the code is computed after separately
standardizing every neuronal trace. It is therefore a log-determinant of a
\emph{correlation} matrix, not of the raw covariance Gramian. Keeping this
normalization explicit is essential: non-normality can enlarge raw covariance
while either enlarging or reducing standardized SSV, depending on how the
input is aligned with the recurrent dynamics. In the centered linear analysis
below, the constant input bias is absorbed into the state mean; $\mathbf{B}$
maps only the zero-mean fluctuating input.
\newrev{The results in this note concern response-volume geometry, noise
amplification, and delayed-cue decoding under their stated assumptions. They do
not identify standardized SSV with the summed online memory score $C$ in
Fig.~\ref{fig:functional_triad}, which additionally depends on the input protocol,
tested delays, and trained readout.}

\begin{definition}[Linear response Gramians and standardized SSV]
\label{def:s5_standardized_ssv}
Let
\begin{equation}
    \mathbf{x}_{t+1}=\mathbf{A}\mathbf{x}_{t}+\mathbf{B}\mathbf{u}_{t},
    \qquad
    \mathbb{E}\mathbf{u}_{t}=0,
    \qquad
    \mathbb{E}\mathbf{u}_{t}\mathbf{u}_{s}^{\top}
    =\delta_{ts}\mathbf{I}_{m},
    \label{eq:s5_linear_system}
\end{equation}
where $\mathbf{A}\in\mathbb{R}^{N\times N}$,
$\mathbf{B}\in\mathbb{R}^{N\times m}$, and
$\mathbf{Q}=\mathbf{B}\mathbf{B}^{\top}$. For $L\geq0$, define
\begin{equation}
    \mathbf{G}_{L}
    =\sum_{k=0}^{L}\mathbf{A}^{k}\mathbf{Q}(\mathbf{A}^{\top})^{k},
    \qquad
    \mathbf{D}_{L}=\operatorname{diag}
       ((G_L)_{11},\ldots,(G_L)_{NN}).
    \label{eq:s5_finite_gramian}
\end{equation}
Whenever every diagonal entry is positive, let
$\mathbf{R}_{L}=\mathbf{D}_{L}^{-1/2}\mathbf{G}_{L}
\mathbf{D}_{L}^{-1/2}$. If $\mathbf{R}_{L}\succ0$, define the
population analogue of the code's finite-sample statistic
\begin{equation}
    \mathcal{V}_{\mathrm{std}}(\mathbf{R}_{L};T)
    =\frac{N}{2}\log(2\pi T)+\frac{1}{2}\log\det\mathbf{R}_{L}.
    \label{eq:s5_standardized_volume}
\end{equation}
The constant $T>0$ represents the fixed number of post-washout samples and
does not affect comparisons. If $\mathbf{R}_{L}$ is singular, the
unregularized value is $-\infty$; the numerical eigenvalue floor used in the
experiments gives its regularized finite-sample counterpart.
\end{definition}

\begin{theorem}[Covariance Gramian and exact population-volume identity]
\label{thm:standardized_gramian}
For Eq.~\ref{eq:s5_linear_system} with $\mathbf{x}_{0}=0$:
\begin{enumerate}[label=(\roman*)]
    \item $\operatorname{Cov}(\mathbf{x}_{L+1})=\mathbf{G}_{L}$ for every
    finite $L$, without a stability assumption. Moreover,
    \begin{equation}
        \mathbb{E}\!\left[\frac{1}{T}\sum_{t=1}^{T}
        \mathbf{x}_{t}\mathbf{x}_{t}^{\top}\right]
        =\sum_{k=0}^{T-1}\left(1-\frac{k}{T}\right)
        \mathbf{A}^{k}\mathbf{Q}(\mathbf{A}^{\top})^{k}.
        \label{eq:s5_average_gramian}
    \end{equation}
    \item If the spectral radius $\varrho(\mathbf{A})<1$, then
    $\mathbf{G}_{\infty}=\sum_{k=0}^{\infty}
    \mathbf{A}^{k}\mathbf{Q}(\mathbf{A}^{\top})^{k}$ converges and is the
    unique positive-semidefinite solution of
    \begin{equation}
        \mathbf{G}_{\infty}
        =\mathbf{A}\mathbf{G}_{\infty}\mathbf{A}^{\top}+\mathbf{Q}.
        \label{eq:s5_lyapunov}
    \end{equation}
    It is positive definite if and only if
    $(\mathbf{A},\mathbf{B})$ is controllable.
    \item For any positive-definite Gramian $\mathbf{G}$ with
    $\mathbf{D}=\operatorname{diag}(G_{11},\ldots,G_{NN})$,
    \begin{equation}
        \log\det(\mathbf{D}^{-1/2}\mathbf{G}\mathbf{D}^{-1/2})
        =\log\det\mathbf{G}-\sum_{i=1}^{N}\log G_{ii}\leq0.
        \label{eq:s5_logdet_identity}
    \end{equation}
    Equality holds if and only if $\mathbf{G}$ is diagonal.
\end{enumerate}
\end{theorem}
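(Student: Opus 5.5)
The plan is to prove the three parts in order, using only unrolling of the recursion, a geometric-series convergence argument, and Hadamard's inequality.

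\emph{Part (i).} With $\mathbf{x}_0=0$, unrolling Eq.~\ref{eq:s5_linear_system} gives
\[
\mathbf{x}_{L+1}=\sum_{k=0}^{L}\mathbf{A}^{k}\mathbf{B}\mathbf{u}_{L-k}.
\]
This is zero-mean. Its second moment follows from the whiteness assumption $\mathbb{E}\mathbf{u}_t\mathbf{u}_s^{\top}=\delta_{ts}\mathbf{I}_m$: all cross terms vanish, and each diagonal term contributes $\mathbf{A}^k\mathbf{Q}(\mathbf{A}^{\top})^k$. Only finite sums appear, so no stability assumption is needed.

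For Eq.~\ref{eq:s5_average_gramian}, apply the same identity at each time: $\mathbb{E}\mathbf{x}_t\mathbf{x}_t^{\top}=\mathbf{G}_{t-1}=\sum_{k=0}^{t-1}\mathbf{A}^k\mathbf{Q}(\mathbf{A}^{\top})^k$. Average over $t=1,\dots,T$ and swap the order of summation. A fixed $k$ appears for $t=k+1,\dots,T$, that is, $T-k$ times, which produces the weight $1-k/T$.

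\emph{Part (ii).} If $\varrho(\mathbf{A})<1$, Gelfand's formula gives constants $C$ and $r<1$ with $\|\mathbf{A}^k\|\le Cr^k$. The terms of the series are then bounded by $C^2r^{2k}\|\mathbf{Q}\|$, so the series converges absolutely.

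Split off the $k=0$ term and reindex the rest to get $\mathbf{G}_\infty=\mathbf{Q}+\mathbf{A}\mathbf{G}_\infty\mathbf{A}^{\top}$, which is the Lyapunov equation Eq.~\ref{eq:s5_lyapunov}. For uniqueness, suppose $\mathbf{G}$ and $\mathbf{G}'$ both solve it. Their difference $\Delta$ satisfies $\Delta=\mathbf{A}\Delta\mathbf{A}^{\top}$, hence $\Delta=\mathbf{A}^n\Delta(\mathbf{A}^{\top})^n$ for every $n$, and the right side tends to $0$. So the solution is unique among all matrices, not only PSD ones. The limit is PSD because every term of the series is PSD.

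For definiteness, take $v\ne0$. Then
\[
v^{\top}\mathbf{G}_\infty v=\sum_k\|\mathbf{B}^{\top}(\mathbf{A}^{\top})^k v\|^2 .
\]
This is zero if and only if $v$ is orthogonal to the range of every $\mathbf{A}^k\mathbf{B}$. By Cayley--Hamilton it suffices to check $k<N$, so this happens if and only if $v$ is orthogonal to the range of the controllability matrix $[\mathbf{B},\mathbf{A}\mathbf{B},\dots,\mathbf{A}^{N-1}\mathbf{B}]$. Hence $\mathbf{G}_\infty\succ0$ exactly when that matrix has rank $N$, which is controllability.

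\emph{Part (iii).} Since $\mathbf{G}\succ0$, each $G_{ii}>0$. The identity $\det(\mathbf{D}^{-1/2}\mathbf{G}\mathbf{D}^{-1/2})=\det\mathbf{G}/\prod_i G_{ii}$ is immediate, and taking logs gives the equality in Eq.~\ref{eq:s5_logdet_identity}. The inequality $\le0$ is Hadamard's inequality $\det\mathbf{G}\le\prod_iG_{ii}$ for positive-definite matrices.

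The equality case is the only step needing care. I would prove it through $\mathbf{R}=\mathbf{D}^{-1/2}\mathbf{G}\mathbf{D}^{-1/2}$, which is positive definite with unit diagonal, so its eigenvalues satisfy $\sum_i\lambda_i=N$. By AM--GM, $\prod_i\lambda_i\le(\sum_i\lambda_i/N)^N=1$, with equality if and only if all $\lambda_i=1$. Since $\mathbf{R}$ is symmetric, all eigenvalues equal to $1$ means $\mathbf{R}=\mathbf{I}$, which is equivalent to $\mathbf{G}$ being diagonal. The converse is trivial. This argument gives both the inequality and its equality condition together, so no separate citation of Hadamard is needed.
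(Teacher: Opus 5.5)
Your proposal is correct and follows the paper's proof essentially step for step: unrolling with whiteness for (i), then for (ii) a geometric bound on $\|\mathbf{A}^k\|$, the index shift, the $\boldsymbol{\Delta}=\mathbf{A}^k\boldsymbol{\Delta}(\mathbf{A}^{\top})^k$ uniqueness argument and the quadratic form $\sum_k\|\mathbf{B}^{\top}(\mathbf{A}^{\top})^k\mathbf{v}\|_2^2$, and for (iii) Hadamard's inequality. The only difference is that you supply details the paper leaves to citations: Cayley--Hamilton to reduce to the finite controllability matrix, and an AM--GM argument on the eigenvalues of the unit-diagonal matrix $\mathbf{D}^{-1/2}\mathbf{G}\mathbf{D}^{-1/2}$, which proves Hadamard's inequality and its equality case rather than quoting them.
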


\begin{proof}
Iterating Eq.~\ref{eq:s5_linear_system} gives
$\mathbf{x}_{L+1}=\sum_{k=0}^{L}\mathbf{A}^{k}
\mathbf{B}\mathbf{u}_{L-k}$. Temporal whiteness eliminates all cross terms,
yielding $\operatorname{Cov}(\mathbf{x}_{L+1})=\mathbf{G}_{L}$. Summing this
identity over $t=1,\ldots,T$ shows that the term with lag $k$ appears $T-k$
times, which proves Eq.~\ref{eq:s5_average_gramian}.

If $\varrho(\mathbf{A})<1$, there are $C<\infty$ and $0<\beta<1$ such that
$\|\mathbf{A}^{k}\|_{2}\leq C\beta^{k}$. The Gramian series therefore
converges absolutely. Shifting its index proves Eq.~\ref{eq:s5_lyapunov}. If
$\boldsymbol{\Delta}$ is the difference of two solutions, then
$\boldsymbol{\Delta}=\mathbf{A}^{k}\boldsymbol{\Delta}
(\mathbf{A}^{\top})^{k}$ for every $k$, and taking $k\to\infty$ gives
$\boldsymbol{\Delta}=0$. Finally,
\begin{equation}
    \mathbf{v}^{\top}\mathbf{G}_{\infty}\mathbf{v}
    =\sum_{k=0}^{\infty}
    \|\mathbf{B}^{\top}(\mathbf{A}^{\top})^{k}\mathbf{v}\|_{2}^{2},
\end{equation}
which is positive for every nonzero $\mathbf{v}$ exactly when the
controllability matrix has row rank $N$~\citep{kailath1980linear}.

For part (iii), determinant multiplicativity gives the equality in
Eq.~\ref{eq:s5_logdet_identity}. Hadamard's determinant inequality gives
$\det\mathbf{G}\leq\prod_iG_{ii}$, with equality for a positive-definite
matrix exactly when its coordinate vectors are orthogonal, equivalently when
$\mathbf{G}$ is diagonal~\citep{horn2012matrix}.
\end{proof}

Theorem~\ref{thm:standardized_gramian} shows why the z-scoring step matters.
Raw amplification that multiplies one or several neuronal variances is removed
by the term $\sum_i\log G_{ii}$. The remaining determinant measures whether
the temporal responses of different neurons are mutually distinct. Note that
Eq.~\ref{eq:s5_average_gramian} is the expectation of an uncentered finite-time
second moment; normalization and $\log\det$ do not commute with expectation, so
it is not an exact identity for the random finite-sample SSV. The connection to
the sample estimator is asymptotic under the assumptions below.

\begin{corollary}[Consistency of the sample SSV in the stable linear regime]
\label{cor:s5_sample_consistency}
Assume $\varrho(\mathbf{A})<1$, the inputs in
Eq.~\ref{eq:s5_linear_system} are i.i.d. with finite second moment, and
$(\mathbf{A},\mathbf{B})$ is controllable. For any almost-surely finite initial
state, let $\widehat{\mathbf{G}}_{T}$ be the centered sample covariance of
$\mathbf{x}_{1},\ldots,\mathbf{x}_{T}$ and let
$\widehat{\mathbf{R}}_{T}$ be its correlation matrix. Then, almost surely,
\begin{equation}
    \widehat{\mathbf{G}}_{T}\longrightarrow\mathbf{G}_{\infty},
    \qquad
    \widehat{\mathbf{R}}_{T}\longrightarrow\mathbf{R}_{\infty},
    \qquad
    \SSV_T-\frac{N}{2}\log(2\pi T)
    \longrightarrow\frac{1}{2}\log\det\mathbf{R}_{\infty}.
    \label{eq:s5_sample_consistency}
\end{equation}
\end{corollary}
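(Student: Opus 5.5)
We will prove the three limits in sequence: the sample covariance first, then the sample correlation by continuity, then the log-determinant by continuity.

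\textbf{Step 1: reduce to the zero-initial-state solution.} Write $\mathbf{x}_t=\mathbf{A}^{t}\mathbf{x}_0+\mathbf{z}_t$, where $\mathbf{z}_t=\sum_{k=0}^{t-1}\mathbf{A}^k\mathbf{B}\mathbf{u}_{t-1-k}$. Because $\varrho(\mathbf{A})<1$, the bound $\|\mathbf{A}^k\|_2\le C\beta^k$ from the proof of Theorem~\ref{thm:standardized_gramian}(ii) applies. Hence $\sum_t\|\mathbf{A}^t\mathbf{x}_0\|_2^2<\infty$ almost surely. By Cauchy--Schwarz, the transient's contribution to $\frac1T\sum_t\mathbf{x}_t\mathbf{x}_t^\top$ and to the sample mean is then $O(T^{-1/2})$ times a quantity that stays bounded almost surely, provided $\frac1T\sum\|\mathbf{z}_t\|^2$ is bounded. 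So it suffices to treat $\mathbf{z}_t$.

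\textbf{Step 2: ergodic limit for $\mathbf{z}_t$.} Compare $\mathbf{z}_t$ with the stationary causal process $\mathbf{s}_t=\sum_{k\ge0}\mathbf{A}^k\mathbf{B}\mathbf{u}_{t-1-k}$, defined on a two-sided i.i.d.\ extension of the input sequence. This series converges almost surely and in $L^2$, by the geometric bound and the finite second moment. The process $\mathbf{s}_t$ is a measurable shift-invariant function of an i.i.d.\ sequence, so it is stationary and ergodic. Birkhoff's theorem then gives
\[
\frac1T\sum_t\mathbf{s}_t\to0
\qquad\text{and}\qquad
\frac1T\sum_t\mathbf{s}_t\mathbf{s}_t^\top\to\mathbb{E}\,\mathbf{s}_0\mathbf{s}_0^\top=\mathbf{G}_\infty
\]
almost surely. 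The last equality is the infinite-horizon version of Theorem~\ref{thm:standardized_gramian}(i), obtained by letting $L\to\infty$. The difference is $\mathbf{s}_t-\mathbf{z}_t=\mathbf{A}^t\mathbf{s}_0$, which is again a geometrically decaying transient and is handled exactly as in Step 1. Centering subtracts the outer product of the sample mean, which tends to $0$. Together these give $\widehat{\mathbf{G}}_T\to\mathbf{G}_\infty$ almost surely.

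\textbf{Step 3: correlation and SSV by continuity.} Theorem~\ref{thm:standardized_gramian}(ii) and controllability give $\mathbf{G}_\infty\succ0$. In particular its diagonal entries are positive, and $\mathbf{R}_\infty=\mathbf{D}_\infty^{-1/2}\mathbf{G}_\infty\mathbf{D}_\infty^{-1/2}\succ0$.
\begin{itemize}
\item The map $\mathbf{G}\mapsto\mathbf{D}^{-1/2}\mathbf{G}\mathbf{D}^{-1/2}$ is continuous on the set where all diagonal entries are positive. Therefore $\widehat{\mathbf{R}}_T\to\mathbf{R}_\infty$ almost surely.
\item $\log\det$ is continuous on positive-definite matrices. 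Therefore $\widehat{\mathbf{R}}_T\succ0$ for all large $T$, so the numerical floor is eventually inactive, and $\log\det\widehat{\mathbf{R}}_T\to\log\det\mathbf{R}_\infty$.
\item By the scatter-form identity below Eq.~\ref{eq:ssv}, with $T_{\mathrm{eff}}=T$ and $\epsilon=0$, we have $\SSV_T-\frac N2\log(2\pi T)=\frac12\log\det\widehat{\mathbf{R}}_T$. This gives the final limit.
\end{itemize}

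\textbf{Main obstacle.} The delicate step is Step 2. It needs a rigorous almost-sure law of large numbers for the quadratic functional $\mathbf{s}_t\mathbf{s}_t^\top$ under only a finite second moment. We first try ergodicity of the causal filter applied to the i.i.d.\ sequence; this requires integrability of $\|\mathbf{s}_0\|^2$, which follows from $\mathbb{E}\|\mathbf{s}_0\|^2=\operatorname{tr}\mathbf{G}_\infty<\infty$. The transient bounds are routine once the almost-sure finiteness of $\mathbf{x}_0$ and $\mathbf{s}_0$ is used.
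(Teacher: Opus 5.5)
Your proposal is correct and follows essentially the same route as the paper. Both arguments write the stationary solution as a causal linear filter of an i.i.d.\ sequence, apply the ergodic theorem to the first and second moments, discard the geometrically decaying transient, and pass to $\widehat{\mathbf{R}}_T$ and $\log\det$ by continuity using $\mathbf{G}_\infty\succ0$ from controllability. Your version makes explicit what the paper leaves implicit: the two-sided input extension, the identity $\mathbf{s}_t-\mathbf{z}_t=\mathbf{A}^t\mathbf{s}_0$, the Cauchy--Schwarz control of cross terms, and the eventual inactivity of the numerical eigenvalue floor.
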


\begin{proof}
Under $\varrho(\mathbf{A})<1$, the stationary solution is the causal linear
process
$\mathbf{x}_{t}=\sum_{k=0}^{\infty}\mathbf{A}^{k}
\mathbf{B}\mathbf{u}_{t-1-k}$. It is a measurable shift-equivariant function
of an i.i.d. sequence and is therefore stationary and ergodic. The ergodic
theorem, applied entrywise to $\mathbf{x}_{t}$ and
$\mathbf{x}_{t}\mathbf{x}_{t}^{\top}$, gives convergence of the sample mean
and covariance; the transient from any finite initial state vanishes
geometrically. Controllability gives $\mathbf{G}_{\infty}\succ0$, so diagonal
normalization and $\log\det$ are continuous in a neighbourhood of the limit.
The continuous mapping theorem then gives the remaining limits.
\end{proof}

\begin{corollary}[Krylov-response geometry]
\label{cor:s5_krylov_geometry}
Let
\begin{equation}
    \mathbf{K}_{L}=
    [\,\mathbf{B},\mathbf{A}\mathbf{B},\ldots,\mathbf{A}^{L}\mathbf{B}\,]
    \in\mathbb{R}^{N\times m(L+1)}.
    \label{eq:s5_krylov_matrix}
\end{equation}
Assume every row of $\mathbf{K}_{L}$ is nonzero. Then
$\mathbf{G}_{L}=\mathbf{K}_{L}\mathbf{K}_{L}^{\top}$. If
$\widehat{\mathbf{K}}_{L}$ is obtained by normalizing every nonzero row of
$\mathbf{K}_{L}$ to unit Euclidean norm, then
\begin{equation}
    \mathbf{R}_{L}=\widehat{\mathbf{K}}_{L}
    \widehat{\mathbf{K}}_{L}^{\top},
    \qquad
    \det\mathbf{R}_{L}
    =\operatorname{vol}_{N}^{2}
      (\widehat{\mathbf{k}}_{1},\ldots,\widehat{\mathbf{k}}_{N})
    =\prod_{i=2}^{N}\sin^{2}\theta_i,
    \label{eq:s5_response_volume}
\end{equation}
where $\theta_i$ is the angle between row $i$ and the span of rows
$1,\ldots,i-1$ during Gram--Schmidt orthogonalization. Hence standardized SSV
is finite exactly when the normalized neuronal impulse-response signatures
span an $N$-dimensional row space.
\end{corollary}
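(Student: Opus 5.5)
The argument is a block-matrix factorization followed by the Gram-determinant interpretation of a correlation matrix.

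\textbf{Factorization of the finite Gramian.} Write $\mathbf{G}_{L}=\sum_{k=0}^{L}\mathbf{A}^{k}\mathbf{B}\mathbf{B}^{\top}(\mathbf{A}^{\top})^{k}$. Regard $\mathbf{K}_{L}$ as the horizontal concatenation of the blocks $\mathbf{A}^{k}\mathbf{B}$. Block multiplication then gives $\mathbf{K}_{L}\mathbf{K}_{L}^{\top}=\sum_{k}(\mathbf{A}^{k}\mathbf{B})(\mathbf{A}^{k}\mathbf{B})^{\top}=\mathbf{G}_{L}$. Consequently the diagonal entry $(G_L)_{ii}$ equals $\|\mathbf{k}_i\|_2^2$, where $\mathbf{k}_i$ is row $i$ of $\mathbf{K}_{L}$. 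The assumption that every row is nonzero therefore gives exactly the positivity of $\mathbf{D}_{L}$ required in Definition~\ref{def:s5_standardized_ssv}.

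\textbf{Normalized rows give the correlation matrix.} Set $\widehat{\mathbf{K}}_{L}=\mathbf{D}_{L}^{-1/2}\mathbf{K}_{L}$, which is precisely row normalization. Then $\widehat{\mathbf{K}}_{L}\widehat{\mathbf{K}}_{L}^{\top}=\mathbf{D}_{L}^{-1/2}\mathbf{G}_{L}\mathbf{D}_{L}^{-1/2}=\mathbf{R}_{L}$. So $\mathbf{R}_{L}$ is the Gram matrix of the unit vectors $\widehat{\mathbf{k}}_1,\ldots,\widehat{\mathbf{k}}_N$ in $\mathbb{R}^{m(L+1)}$. The classical fact that the determinant of a Gram matrix is the squared $N$-dimensional parallelotope volume yields $\det\mathbf{R}_{L}=\operatorname{vol}_N^2$.

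\textbf{Volume as a product of sines.} To obtain the sine product, I would run Gram--Schmidt on the rows. Write $\widehat{\mathbf{K}}_{L}=\mathbf{L}\mathbf{Q}^{\top}$, where $\mathbf{L}$ is lower triangular and the rows of $\mathbf{Q}^{\top}$ are orthonormal. This is the thin QR factorization of $\widehat{\mathbf{K}}_{L}^{\top}$, padded with zero diagonal entries if the rows are dependent. Then $\det\mathbf{R}_{L}=\det(\mathbf{L}\mathbf{L}^{\top})=\prod_i L_{ii}^2$. Here $|L_{ii}|$ is the norm of the component of $\widehat{\mathbf{k}}_i$ orthogonal to $\operatorname{span}(\widehat{\mathbf{k}}_1,\ldots,\widehat{\mathbf{k}}_{i-1})$. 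Because each row has unit norm, this component has norm $\sin\theta_i$, and $L_{11}=1$, which gives the product from $i=2$. If the rows are dependent, some $\theta_i=0$ and both sides vanish, so the identity still holds.

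\textbf{Finiteness criterion.} Since $\mathcal{V}_{\mathrm{std}}=\tfrac{N}{2}\log(2\pi T)+\tfrac12\log\det\mathbf{R}_{L}$, the standardized SSV is finite iff $\det\mathbf{R}_{L}>0$. This holds iff every $\sin\theta_i>0$, which is equivalent to the normalized rows being linearly independent, i.e.\ to $\operatorname{rank}\widehat{\mathbf{K}}_{L}=N$. Row scaling by the positive diagonal $\mathbf{D}_L^{-1/2}$ preserves rank, so the condition is the same for $\mathbf{K}_{L}$. Otherwise the value is $-\infty$, consistent with the convention in the Definition.

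\textbf{Expected obstacle.} The only delicate step is handling the dependent-row case in the Gram--Schmidt step, where some $\theta_i$ is undefined as an angle to a span. I would define $\sin\theta_i$ directly as the norm of the orthogonal residual, so that it is zero in the degenerate case, and state that convention explicitly.
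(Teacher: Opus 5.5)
Your proposal is correct and follows the paper's proof essentially step for step. Block multiplication gives $\mathbf{G}_L=\mathbf{K}_L\mathbf{K}_L^\top$, row normalization is the congruence by $\mathbf{D}_L^{-1/2}$, and the Gram determinant is evaluated as the product of the Gram--Schmidt residual lengths $1,\sin\theta_2,\ldots,\sin\theta_N$. Your explicit handling of the dependent-row case and your derivation of the finiteness criterion via $\operatorname{rank}\mathbf{K}_L=N$ fill in details the paper leaves implicit; in the dependent case the angle of a unit row to the earlier span is already well defined and equals zero, so no special convention is needed.
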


\begin{proof}
The first identity follows by multiplying the block matrix in
Eq.~\ref{eq:s5_krylov_matrix} by its transpose. Row normalization is precisely
the congruence transformation by $\mathbf{D}_{L}^{-1/2}$, proving the second
identity. The determinant of a Gram matrix is the squared volume of its
generating vectors; Gram--Schmidt expresses that volume as the product of the
orthogonal residual lengths. Because every row has unit norm, those lengths
are $1,\sin\theta_2,\ldots,\sin\theta_N$.
\end{proof}

\begin{proposition}[Exact raw covariance for normal dynamics]
\label{prop:s5_normal_covariance}
Suppose $\mathbf{A}$ is real normal, $\varrho(\mathbf{A})<1$, and
$\mathbf{Q}=q\mathbf{I}$ with $q>0$. If $\mu_1,\ldots,\mu_N$ are the
eigenvalues of $\mathbf{A}$, then for a unitary diagonalization
$\mathbf{A}=\mathbf{U}\operatorname{diag}(\mu_i)\mathbf{U}^{*}$,
\begin{equation}
    \mathbf{G}_{\infty}
    =\mathbf{U}\operatorname{diag}\!\left(
      \frac{q}{1-|\mu_1|^2},\ldots,
      \frac{q}{1-|\mu_N|^2}\right)\mathbf{U}^{*},
    \qquad
    \log\det\mathbf{G}_{\infty}
    =N\log q-\sum_i\log(1-|\mu_i|^2).
    \label{eq:s5_normal_formula}
\end{equation}
\end{proposition}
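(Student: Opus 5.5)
The plan is to diagonalize the Lyapunov series of Theorem~\ref{thm:standardized_gramian}(ii) in the unitary eigenbasis of $\mathbf{A}$. Because $\mathbf{A}$ is real normal, the spectral theorem for normal matrices gives a unitary $\mathbf{U}\in\mathbb{C}^{N\times N}$ with $\mathbf{A}=\mathbf{U}\boldsymbol{\Lambda}\mathbf{U}^{*}$ and $\boldsymbol{\Lambda}=\operatorname{diag}(\mu_i)$. Since $\mathbf{A}$ is real, $\mathbf{A}^{\top}=\mathbf{A}^{*}=\mathbf{U}\boldsymbol{\Lambda}^{*}\mathbf{U}^{*}$. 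The point of writing the transpose as the conjugate transpose is that the right factor $(\mathbf{A}^{\top})^{k}$ then shares the same eigenvectors as $\mathbf{A}^{k}$. This is exactly where normality enters, and it is the only place where care is needed; without normality the left and right eigenvectors would not cancel.

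First I would substitute $\mathbf{Q}=q\mathbf{I}=q\mathbf{U}\mathbf{U}^{*}$ into $\mathbf{G}_{\infty}=\sum_k\mathbf{A}^{k}\mathbf{Q}(\mathbf{A}^{\top})^{k}$. Each term becomes
$$q\,\mathbf{U}\boldsymbol{\Lambda}^{k}(\boldsymbol{\Lambda}^{*})^{k}\mathbf{U}^{*}=q\,\mathbf{U}\operatorname{diag}(|\mu_i|^{2k})\mathbf{U}^{*}.$$
The series converges by part (ii), which uses $\varrho(\mathbf{A})<1$, so I may sum inside the conjugation. The diagonal entries are then geometric series $\sum_k|\mu_i|^{2k}=1/(1-|\mu_i|^2)$, each convergent because $|\mu_i|<1$. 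This gives the first identity in Eq.~\ref{eq:s5_normal_formula}. As a sanity check, one can verify directly that this matrix satisfies $\mathbf{G}=\mathbf{A}\mathbf{G}\mathbf{A}^{\top}+q\mathbf{I}$. By the uniqueness in Theorem~\ref{thm:standardized_gramian}(ii), it is then the Gramian. This also confirms that the complex form defines a real matrix, since it equals the real series.

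Second, the determinant follows from $|\det\mathbf{U}|=1$: $\det\mathbf{G}_{\infty}=\prod_i q/(1-|\mu_i|^2)$. All factors are positive, so $\mathbf{G}_{\infty}\succ0$ and the logarithm is well defined. Taking logs yields $N\log q-\sum_i\log(1-|\mu_i|^2)$.

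I do not expect a genuine obstacle. The one step to state carefully is the replacement of $\mathbf{A}^{\top}$ by $\mathbf{U}\boldsymbol{\Lambda}^{*}\mathbf{U}^{*}$ for a real normal matrix.
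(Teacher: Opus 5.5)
Your proposal is correct and follows essentially the same route as the paper: substitute the unitary diagonalization (using $\mathbf{A}^{\top}=\mathbf{A}^{*}$ for real $\mathbf{A}$) into the convergent Gramian series and sum the scalar geometric series $\sum_k|\mu_i|^{2k}$ in each eigendirection. Your extra steps, the Lyapunov-uniqueness check and the determinant via $|\det\mathbf{U}|=1$, only spell out details the paper leaves implicit.
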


\begin{proof}
Insert the unitary diagonalization into the convergent Gramian series and sum
the scalar geometric series in each eigendirection.
\end{proof}

\begin{theorem}[Normal dynamics attain the fixed-spectrum isotropic-noise floor]
\label{thm:s5_normal_noise_floor}
Let $\mathbf{A}\in\mathbb{R}^{N\times N}$ have eigenvalues
$\lambda_1,\ldots,\lambda_N$, counted with algebraic multiplicity, and
consider the error dynamics
\begin{equation}
    \mathbf{e}_{t+1}=\mathbf{A}\mathbf{e}_{t}
    +\boldsymbol{\xi}_{t},
    \qquad \mathbf{e}_0=\mathbf{0},
    \qquad
    \mathbb{E}\boldsymbol{\xi}_{t}=\mathbf{0},
    \qquad
    \mathbb{E}\boldsymbol{\xi}_{t}\boldsymbol{\xi}_{s}^{\top}
    =\sigma_{\xi}^{2}\delta_{ts}\mathbf{I}_{N}.
    \label{eq:s5_isotropic_error_system}
\end{equation}
For every integer $L\geq1$,
\begin{align}
    \mathbb{E}\|\mathbf{e}_{L}\|_2^2
    &=\sigma_{\xi}^{2}\sum_{k=0}^{L-1}
      \|\mathbf{A}^{k}\|_{F}^{2}                                      
      \label{eq:s5_noise_energy_identity}\\
    &\geq\sigma_{\xi}^{2}\sum_{i=1}^{N}
      \sum_{k=0}^{L-1}|\lambda_i|^{2k}.
      \label{eq:s5_noise_energy_floor}
\end{align}
If $L\geq2$, equality holds if and only if $\mathbf{A}$ is normal. Hence,
among all matrices with the same spectrum, every normal representative
minimizes the expected terminal energy generated by isotropic process noise.
If in addition $\varrho(\mathbf{A})<1$, then
\begin{equation}
    \lim_{L\to\infty}\mathbb{E}\|\mathbf{e}_{L}\|_2^2
    \geq \sigma_{\xi}^{2}\sum_{i=1}^{N}
    \frac{1}{1-|\lambda_i|^2},
    \label{eq:s5_stationary_noise_floor}
\end{equation}
again with equality if and only if $\mathbf{A}$ is normal.
\end{theorem}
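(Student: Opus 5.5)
We establish the identity first, then reduce the inequality to Schur's inequality applied power by power. Iterating Eq.~\ref{eq:s5_isotropic_error_system} from $\mathbf{e}_0=\mathbf{0}$ gives $\mathbf{e}_{L}=\sum_{k=0}^{L-1}\mathbf{A}^{k}\boldsymbol{\xi}_{L-1-k}$, exactly as in the proof of Theorem~\ref{thm:standardized_gramian}(i) with $\mathbf{Q}=\sigma_\xi^2\mathbf{I}$. Temporal whiteness kills every cross term, so
\[
\mathbb{E}\|\mathbf{e}_L\|_2^2=\operatorname{tr}\mathbf{G}_{L-1}=\sigma_\xi^2\sum_{k=0}^{L-1}\operatorname{tr}\bigl(\mathbf{A}^k(\mathbf{A}^\top)^k\bigr)=\sigma_\xi^2\sum_{k=0}^{L-1}\|\mathbf{A}^k\|_F^2,
\]
which is Eq.~\ref{eq:s5_noise_energy_identity}.

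For the floor I will use a (complex) Schur decomposition $\mathbf{A}=\mathbf{U}\mathbf{T}\mathbf{U}^*$ with $\mathbf{T}$ upper triangular and $\operatorname{diag}\mathbf{T}=(\lambda_1,\ldots,\lambda_N)$. Then $\mathbf{A}^k=\mathbf{U}\mathbf{T}^k\mathbf{U}^*$, and $\mathbf{T}^k$ is upper triangular with diagonal $\lambda_i^k$. Unitary invariance of the Frobenius norm gives
\[
\|\mathbf{A}^k\|_F^2=\|\mathbf{T}^k\|_F^2=\sum_i|\lambda_i|^{2k}+\|\operatorname{offdiag}\mathbf{T}^k\|_F^2\geq\sum_i|\lambda_i|^{2k}.
\]
This is Schur's inequality applied to each power. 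Summing over $k$ yields Eq.~\ref{eq:s5_noise_energy_floor}.

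The equality case is the step needing care. Equality in the sum forces equality for every $k\le L-1$, since each term is individually bounded. With $L\geq2$ this includes $k=1$, so $\operatorname{offdiag}\mathbf{T}=0$. Hence $\mathbf{T}$ is diagonal, i.e.\ $\mathbf{A}$ is unitarily diagonalizable, which is equivalent to $\mathbf{A}$ being normal. This is exactly why $L\ge2$ is required: for $L=1$ only the $k=0$ term $\|\mathbf{I}\|_F^2=N$ appears, and equality holds trivially for every $\mathbf{A}$. Conversely, if $\mathbf{A}$ is normal, $\mathbf{T}$ can be taken diagonal, so $\mathbf{T}^k$ is diagonal for all $k$ and equality holds. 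The fixed-spectrum minimization statement follows immediately, because the right-hand side depends only on the spectrum.

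For the stationary bound with $\varrho(\mathbf{A})<1$, the partial sums are monotone. By Theorem~\ref{thm:standardized_gramian}(ii) the series $\sum_k\|\mathbf{A}^k\|_F^2$ converges, so the limit exists and equals $\sigma_\xi^2\sum_{k\ge0}\|\mathbf{A}^k\|_F^2$. Summing the termwise bound then gives $\sigma_\xi^2\sum_i(1-|\lambda_i|^2)^{-1}$ as a geometric series. For the equality case in the limit, note that every term satisfies $\|\mathbf{A}^k\|_F^2-\sum_i|\lambda_i|^{2k}\ge0$. Equality of the infinite sums therefore forces each of these nonnegative gaps to vanish, in particular the $k=1$ gap, which again gives normality. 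Normality is also sufficient, consistent with Proposition~\ref{prop:s5_normal_covariance} taking the trace. The main obstacle is only bookkeeping: making sure that equality is deduced termwise, via nonnegativity of each gap, rather than from the aggregate sum.
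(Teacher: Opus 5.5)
Your proof is correct and follows essentially the same route as the paper: taking the trace of the noise Gramian gives the identity, a complex Schur decomposition with Frobenius-norm unitary invariance bounds each power $\mathbf{A}^k$, and forcing equality at $k=1$ makes the Schur factor diagonal. Your explicit argument that equality of the infinite sums forces every nonnegative gap $\|\mathbf{A}^k\|_F^2-\sum_i|\lambda_i|^{2k}$ to vanish makes precise the stationary equality case, which the paper only asserts after passing to the limit by monotone convergence.
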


\begin{proof}
Iteration of Eq.~\ref{eq:s5_isotropic_error_system} and independence give
\begin{equation}
    \operatorname{Cov}(\mathbf{e}_{L})
    =\sigma_{\xi}^{2}\sum_{k=0}^{L-1}
      \mathbf{A}^{k}(\mathbf{A}^{\top})^{k}.
\end{equation}
Taking the trace proves Eq.~\ref{eq:s5_noise_energy_identity}. Let
$\mathbf{A}=\mathbf{U}\mathbf{T}\mathbf{U}^{*}$ be a complex Schur
decomposition, where $\mathbf{U}$ is unitary, $\mathbf{T}$ is upper
triangular, and $T_{ii}=\lambda_i$. Unitary invariance of the Frobenius norm
and the fact that the diagonal of $\mathbf{T}^{k}$ is
$(\lambda_1^k,\ldots,\lambda_N^k)$ imply
\begin{equation}
    \|\mathbf{A}^{k}\|_F^2
    =\|\mathbf{T}^{k}\|_F^2
    \geq\sum_{i=1}^{N}|\lambda_i|^{2k}.
\end{equation}
Summation proves Eq.~\ref{eq:s5_noise_energy_floor}. For $L\geq2$, equality
of the sum requires equality for $k=1$. But
\begin{equation}
    \|\mathbf{T}\|_F^2
    =\sum_i|\lambda_i|^2+\sum_{i<j}|T_{ij}|^2,
\end{equation}
so equality holds exactly when $\mathbf{T}$ is diagonal, equivalently when
$\mathbf{A}$ is unitarily diagonalizable and therefore normal
~\citep{horn2012matrix}. Under stability, monotone convergence permits
passage to the infinite sum, and summing the scalar geometric series gives
Eq.~\ref{eq:s5_stationary_noise_floor}.
\end{proof}

\begin{corollary}[Noise covariance and cue margin jointly determine decoding reliability]
\label{cor:s5_gaussian_decoding}
Suppose an equiprobable binary cue $z\in\{-1,+1\}$ produces
\begin{equation}
    \mathbf{y}_{L}\mid z
    \sim\mathcal{N}(z\mathbf{s}_{L},\boldsymbol{\Sigma}_{L}),
    \qquad \boldsymbol{\Sigma}_{L}\succ0.
\end{equation}
Then the optimal decoder has error probability
\begin{equation}
    P_{\mathrm{err}}
    =\Phi\!\left(-\sqrt{\mathbf{s}_{L}^{\top}
      \boldsymbol{\Sigma}_{L}^{-1}\mathbf{s}_{L}}\right),
    \label{eq:s5_binary_bayes_error}
\end{equation}
where $\Phi$ is the standard-normal distribution function. More generally,
for $C$ equiprobable Gaussian classes with means $\boldsymbol{\mu}_{c}$ and a
common covariance $\boldsymbol{\Sigma}_{L}$, define
\begin{equation}
    d_{\min}^{2}
    =\min_{c\neq d}(\boldsymbol{\mu}_{c}-\boldsymbol{\mu}_{d})^{\top}
      \boldsymbol{\Sigma}_{L}^{-1}
      (\boldsymbol{\mu}_{c}-\boldsymbol{\mu}_{d}).
    \label{eq:s5_mahalanobis_margin}
\end{equation}
The common-covariance Bayes decoder obeys
\begin{equation}
    P_{\mathrm{err}\mid c}
    \leq (C-1)\Phi(-d_{\min}/2).
    \label{eq:s5_multiclass_error_bound}
\end{equation}
\end{corollary}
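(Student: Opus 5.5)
The plan is to whiten the noise and reduce both claims to one-dimensional Gaussian tail computations. Since $\boldsymbol{\Sigma}_{L}\succ0$, set $\mathbf{w}=\boldsymbol{\Sigma}_{L}^{-1/2}\mathbf{y}_{L}$.

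For the binary case, the whitened observation satisfies $\mathbf{w}\mid z\sim\mathcal{N}(z\tilde{\mathbf{s}},\mathbf{I})$ with $\tilde{\mathbf{s}}=\boldsymbol{\Sigma}_{L}^{-1/2}\mathbf{s}_{L}$.

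1. With equal priors, the Bayes (MAP) rule is the likelihood-ratio test. Its log-likelihood ratio equals $2\,\mathbf{s}_{L}^{\top}\boldsymbol{\Sigma}_{L}^{-1}\mathbf{y}_{L}$, so the rule is $\hat z=\operatorname{sign}(\mathbf{s}_{L}^{\top}\boldsymbol{\Sigma}_{L}^{-1}\mathbf{y}_{L})$.
2. Conditional on $z=+1$, the statistic $\mathbf{s}_{L}^{\top}\boldsymbol{\Sigma}_{L}^{-1}\mathbf{y}_{L}$ is Gaussian with mean $\delta^2$ and variance $\delta^2$, where $\delta^2=\mathbf{s}_{L}^{\top}\boldsymbol{\Sigma}_{L}^{-1}\mathbf{s}_{L}$.
3. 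The error probability given $z=+1$ is therefore $\Phi(-\delta^2/\delta)=\Phi(-\delta)$. The case $z=-1$ gives the same value by symmetry, and averaging over the two priors yields Eq.~\ref{eq:s5_binary_bayes_error}.
4. The degenerate case $\mathbf{s}_{L}=0$ gives $P_{\mathrm{err}}=1/2=\Phi(0)$ under any tie-breaking rule, so the formula still holds.

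For the multiclass case with a common covariance and equal priors, the Bayes decoder selects the class $c$ minimizing the Mahalanobis distance $(\mathbf{y}-\boldsymbol{\mu}_c)^{\top}\boldsymbol{\Sigma}_{L}^{-1}(\mathbf{y}-\boldsymbol{\mu}_c)$.

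1. Given true class $c$, an error requires some $d\neq c$ to be at least as close as $c$.
2. A union bound gives $P_{\mathrm{err}\mid c}\le\sum_{d\neq c}P(\text{$d$ beats $c$}\mid c)$.
3. Each pairwise event is a linear half-space condition. Its defining statistic is $(\boldsymbol{\mu}_d-\boldsymbol{\mu}_c)^{\top}\boldsymbol{\Sigma}_{L}^{-1}\bigl(\mathbf{y}-\tfrac12(\boldsymbol{\mu}_c+\boldsymbol{\mu}_d)\bigr)\ge0$. Under class $c$, this statistic is Gaussian with mean $-\tfrac12 d_{cd}^2$ and variance $d_{cd}^2$, where $d_{cd}^2$ is the pairwise Mahalanobis distance.
4. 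Each pairwise probability is therefore exactly $\Phi(-d_{cd}/2)$.
5. Since $\Phi$ is increasing and $d_{cd}\ge d_{\min}$, every term is at most $\Phi(-d_{\min}/2)$. Summing the $C-1$ terms gives Eq.~\ref{eq:s5_multiclass_error_bound}.

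The only delicate point is the bookkeeping at the boundary. Ties occur on a measure-zero set because the covariance is nondegenerate, so they do not affect either probability. It is also worth noting that the binary result is the special case $C=2$: there the single pair has mean separation $2\mathbf{s}_L$, so the pairwise distance is $d=2\delta$ and the bound holds with equality.

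Neither step requires earlier results beyond standard Gaussian facts. I would add a closing remark linking the corollary to Theorem~\ref{thm:s5_normal_noise_floor}. That theorem shrinks $\boldsymbol{\Sigma}_L$ in trace, but the decoding quantities $\delta$ and $d_{\min}$ also depend on the means. Reciprocity-induced contraction therefore improves decoding only if it does not compress $\mathbf{s}_L$ or the pairwise mean differences proportionally.
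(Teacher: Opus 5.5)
Your proof is correct and follows the paper's argument essentially step for step. The binary case uses the sign of $\mathbf{s}_{L}^{\top}\boldsymbol{\Sigma}_{L}^{-1}\mathbf{y}_{L}$ as the Bayes rule, with conditional mean and variance both equal to $\mathbf{s}_{L}^{\top}\boldsymbol{\Sigma}_{L}^{-1}\mathbf{s}_{L}$; the multiclass case uses pairwise Mahalanobis half-spaces of probability $\Phi(-d_{cd}/2)$ plus a union bound. One precision: your measure-zero tie claim needs distinct class means, because when $d_{\min}=0$ the pairwise statistic is degenerate, so for $C=2$ the per-class bound $\Phi(0)=1/2$ holds only under symmetric randomized tie-breaking (for $C\geq3$ it is trivially true).
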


\begin{proof}
For the binary model, the log-likelihood ratio has the sign of
$\mathbf{s}_{L}^{\top}\boldsymbol{\Sigma}_{L}^{-1}\mathbf{y}_{L}$.
Conditioned on $z=+1$, this scalar is Gaussian with both mean and variance
equal to
$\mathbf{s}_{L}^{\top}\boldsymbol{\Sigma}_{L}^{-1}\mathbf{s}_{L}$,
which proves Eq.~\ref{eq:s5_binary_bayes_error}; symmetry handles $z=-1$.
For multiple classes, the probability that class $d$ defeats the true class
$c$ at their pairwise Mahalanobis decision boundary is
$\Phi(-d_{cd}/2)$. The union bound over $d\neq c$ and
$d_{cd}\geq d_{\min}$ prove Eq.~\ref{eq:s5_multiclass_error_bound}.
\end{proof}

Equation~\ref{eq:s5_normal_formula} is the eigenvalue-only covariance formula
used to motivate the spectral proxy $H$ and is consistent with linear random
network covariance theory~\citep{hu2022spectrum}. It orders the \emph{raw}
covariance volume, not automatically the reported standardized SSV: the latter
also subtracts $\sum_i\log(G_{\infty})_{ii}$, which can depend on
$\mathbf{U}$. Eigenvalues alone order standardized SSV only under additional
conditions that also control these marginal variances.

\begin{theorem}[A sufficient spectral ordering for standardized SSV]
\label{thm:ssv_majorization}
Let $\mathbf{R}_{1},\mathbf{R}_{2}\succ0$ be two $N\times N$ correlation
matrices, with eigenvalues sorted in nonincreasing order. If
$\boldsymbol{\lambda}(\mathbf{R}_{2})$ majorizes
$\boldsymbol{\lambda}(\mathbf{R}_{1})$, that is,
\begin{equation}
    \sum_{i=1}^{k}\lambda_i(\mathbf{R}_{2})
    \geq\sum_{i=1}^{k}\lambda_i(\mathbf{R}_{1})
    \quad (k=1,\ldots,N-1),
    \qquad
    \sum_{i=1}^{N}\lambda_i(\mathbf{R}_{1})
    =\sum_{i=1}^{N}\lambda_i(\mathbf{R}_{2})=N,
    \label{eq:s5_majorization}
\end{equation}
then
\begin{equation}
    \mathcal{V}_{\mathrm{std}}(\mathbf{R}_{1};T)
    \geq
    \mathcal{V}_{\mathrm{std}}(\mathbf{R}_{2};T).
    \label{eq:s5_majorization_order}
\end{equation}
The inequality is strict unless the two eigenvalue vectors agree up to a
permutation.
\end{theorem}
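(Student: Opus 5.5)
The proof is a direct application of Schur-concavity. By Definition~\ref{def:s5_standardized_ssv}, both sides of Eq.~\ref{eq:s5_majorization_order} share the constant $\frac{N}{2}\log(2\pi T)$. The claim therefore reduces to $\log\det\mathbf{R}_1\geq\log\det\mathbf{R}_2$, that is,
\begin{equation*}
\sum_{i=1}^{N}\log\lambda_i(\mathbf{R}_1)\geq\sum_{i=1}^{N}\log\lambda_i(\mathbf{R}_2).
\end{equation*}
Positive definiteness makes every eigenvalue strictly positive, so each logarithm is finite. The map $\boldsymbol{\lambda}\mapsto\sum_i\log\lambda_i$ is a separable sum of a strictly concave function on $(0,\infty)^N$. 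It is therefore Schur-concave, and the inequality follows once majorization is available.

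To keep the argument self-contained, I would prove the Schur-concavity step through Karamata's inequality rather than only citing it. The route goes through the Hardy--Littlewood--P\'olya characterization: $\boldsymbol{\lambda}(\mathbf{R}_2)\succ\boldsymbol{\lambda}(\mathbf{R}_1)$ holds if and only if $\boldsymbol{\lambda}(\mathbf{R}_1)=\mathbf{P}\boldsymbol{\lambda}(\mathbf{R}_2)$ for some doubly stochastic $\mathbf{P}$. Here the equal-trace hypothesis in Eq.~\ref{eq:s5_majorization} is essential, and it holds automatically because correlation matrices have unit diagonal. Concavity of $\log$ then gives, for each $i$,
\begin{equation*}
\log\Bigl(\sum_j P_{ij}\lambda_j(\mathbf{R}_2)\Bigr)\geq\sum_j P_{ij}\log\lambda_j(\mathbf{R}_2).
\end{equation*}
Summing over $i$ and using the unit column sums of $\mathbf{P}$ yields the desired inequality.

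An alternative route avoids the doubly stochastic representation by Abel summation. Write $f=\log$ and take the slopes
\begin{equation*}
c_i=\frac{f(a_i)-f(b_i)}{a_i-b_i},
\end{equation*}
with $a=\boldsymbol{\lambda}(\mathbf{R}_2)$, $b=\boldsymbol{\lambda}(\mathbf{R}_1)$, both sorted in nonincreasing order. The slopes $c_i$ are nondecreasing in $i$ by concavity. Summing by parts expresses $\sum_i f(a_i)-f(b_i)$ as a combination of the nonnegative partial-sum differences with nonpositive coefficients $c_k-c_{k+1}$, together with the zero total difference. This gives $\sum_i f(a_i)\leq\sum_i f(b_i)$.

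The main obstacle is the strictness clause. Suppose the sorted eigenvalue vectors differ. In the Karamata route, some row of $\mathbf{P}$ must then mix at least two distinct values of $\boldsymbol{\lambda}(\mathbf{R}_2)$ with positive weights; otherwise $\mathbf{P}$ acts on the values as a permutation and the sorted vectors coincide. Strict concavity of $\log$ then makes the Jensen inequality strict for that row, so the total inequality is strict. In the Abel-summation route, I would argue instead that some $k$ has a strict partial-sum inequality. At the first index where $a_i\neq b_i$, strict concavity forces $c_k<c_{k+1}$ somewhere in the relevant block. Conversely, if the eigenvalue vectors agree up to a permutation, the determinants coincide, so equality holds. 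This finishes the proof.
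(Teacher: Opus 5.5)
Your proposal is correct and follows the paper's proof: cancel the common constant $\frac{N}{2}\log(2\pi T)$, then apply Karamata's inequality to the strictly concave $\log$ on the positive spectra, with strictness coming from strict concavity. The paper only cites Karamata, whereas you also sketch its proof via the Hardy--Littlewood--P\'olya doubly stochastic representation or via Abel summation; both sketches are sound, with one small caveat: concluding that ``a non-mixing $\mathbf{P}$ permutes the values'' requires using the column sums of $\mathbf{P}$ to match multiplicities across the level sets.
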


\begin{proof}
The function $\log x$ is strictly concave on $(0,\infty)$. Karamata's
inequality applied to Eq.~\ref{eq:s5_majorization} therefore gives
$\sum_i\log\lambda_i(\mathbf{R}_{2})\leq
\sum_i\log\lambda_i(\mathbf{R}_{1})$, with strict inequality unless the
spectra coincide~\citep{horn2012matrix}. Adding the common constant in
Eq.~\ref{eq:s5_standardized_volume} proves the result.
\end{proof}

Thus an activity correlation spectrum that is more concentrated into a few
modes has rigorously smaller standardized SSV. Participation ratio is a useful
summary of this concentration, but equality or ordering of participation ratio
alone does not imply the full majorization condition.

\begin{theorem}[A fixed-spectrum trade-off between driven encoding and robust memory]
\label{thm:shear_standardized}
Let $0<r<1$, $h\geq0$, and
\begin{equation}
    \mathbf{A}_{h}=
    \begin{pmatrix}r&h\\0&r\end{pmatrix},
    \qquad
    \mathbf{b}=\begin{pmatrix}1\\1\end{pmatrix},
    \qquad
    \mathbf{x}_{t+1}=\mathbf{A}_{h}\mathbf{x}_{t}+\mathbf{b}u_t,
    \quad \mathbb{E}u_t=0,\quad
    \mathbb{E}[u_tu_s]=q\delta_{ts},\quad q>0.
    \label{eq:s5_shear_system}
\end{equation}
Every $\mathbf{A}_{h}$ has eigenvalues $(r,r)$ and spectral radius $r$.
For $h>0$, it is non-normal and the stationary correlation matrix satisfies
\begin{equation}
    \det\mathbf{R}_{h}
    =\frac{h^2}{(1-r^2+hr)^2+h^2}.
    \label{eq:s5_shear_correlation}
\end{equation}
This determinant is zero at $h=0$ and strictly increases for $h>0$ because
\begin{equation}
    \frac{d}{dh}\det\mathbf{R}_{h}
    =\frac{2h(1-r^2)(1-r^2+hr)}
    {\bigl[(1-r^2+hr)^2+h^2\bigr]^2}>0.
    \label{eq:s5_shear_derivative}
\end{equation}
Consequently, for every $0<h_1<h_2$,
$\mathcal{V}_{\mathrm{std}}(\mathbf{R}_{h_2};T)>
\mathcal{V}_{\mathrm{std}}(\mathbf{R}_{h_1};T)$; this strict finite-to-finite
ordering does not rely only on the rank change at $h=0$.
Moreover,
\begin{equation}
\Delta_{\mathrm{comm}}(\mathbf{A}_{h})=\sqrt{2}\,h^2,
\qquad
\delta_{\mathrm{rel}}(\mathbf{A}_{h})
=\frac{\sqrt{2}\,h^2}{2r^2+h^2},
\label{eq:s5_shear_defects}
\end{equation}
and both defects increase strictly with $h>0$. For $k\geq1$,
\begin{equation}
    \|\mathbf{A}_{h}^{k}\|_2
    \geq\sqrt{r^{2k}+k^2h^2r^{2k-2}},
    \label{eq:s5_shear_transient}
\end{equation}
so increasing the shear also increases an explicit lower bound on transient
gain.

Now consider a delayed-memory experiment on the same matrix. Let
$z\sim\mathcal{N}(0,\tau^2)$ be a scalar cue, set
$\mathbf{y}_{0}=z\mathbf{e}_{1}$, and evolve
\begin{equation}
    \mathbf{y}_{t+1}=\mathbf{A}_{h}\mathbf{y}_{t}
    +\boldsymbol{\xi}_{t},
    \qquad
    \boldsymbol{\xi}_{t}\overset{\mathrm{i.i.d.}}{\sim}
    \mathcal{N}(\mathbf{0},\sigma_{\xi}^{2}\mathbf{I}_{2}),
    \label{eq:s5_shear_memory_system}
\end{equation}
independently of $z$. For a delay $L\geq2$, define
\begin{equation}
    a_k=r^k,
    \qquad
    b_0=0,\quad b_k=kr^{k-1}\ (k\geq1),
    \qquad
    S_0=\sum_{k=0}^{L-1}a_k^2,\quad
    S_1=\sum_{k=0}^{L-1}a_kb_k,\quad
    S_2=\sum_{k=0}^{L-1}b_k^2.
    \label{eq:s5_shear_memory_sums}
\end{equation}
Then $D_L=S_0S_2-S_1^2>0$, and the Fisher information in
$\mathbf{y}_L$ about the cue value $z$ is
\begin{equation}
    \mathcal{J}_{L}(h)
    =\frac{r^{2L}S_0}
    {\sigma_{\xi}^{2}\left[S_0^2+h^2D_L\right]}.
    \label{eq:s5_shear_memory_fisher}
\end{equation}
Consequently,
\begin{equation}
    \frac{d}{dh}\mathcal{J}_{L}(h)
    =-\frac{2hr^{2L}S_0D_L}
    {\sigma_{\xi}^{2}\left[S_0^2+h^2D_L\right]^2}<0
    \qquad(h>0),
    \label{eq:s5_shear_memory_derivative}
\end{equation}
and the Gaussian mutual information
\begin{equation}
    \mathsf{I}(z;\mathbf{y}_L)
    =\frac{1}{2}\log\!\left(1+\tau^2\mathcal{J}_{L}(h)\right)
    \label{eq:s5_shear_memory_mutual_information}
\end{equation}
also decreases strictly with $h>0$. Thus the same fixed-spectrum
non-normal deformation can enlarge driven standardized response volume and
transient sensitivity while reducing the information robustly recoverable
after a noisy delay.
\end{theorem}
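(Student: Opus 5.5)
The plan is to compute everything explicitly from the closed form of the powers of the Jordan-type matrix. Since $\mathbf{A}_h=r\mathbf{I}+h\mathbf{N}$ with $\mathbf{N}=\mathbf{e}_1\mathbf{e}_2^\top$ nilpotent and commuting with $r\mathbf{I}$, we have
\begin{equation*}
\mathbf{A}_h^k=\begin{pmatrix}r^k&khr^{k-1}\\0&r^k\end{pmatrix}=\begin{pmatrix}a_k&hb_k\\0&a_k\end{pmatrix}.
\end{equation*}
The spectral claims are then immediate: the matrix is triangular with diagonal $(r,r)$, and it is non-normal for $h>0$ because the commutator is nonzero. The defects follow from a direct product. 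We have $\mathbf{A}\mathbf{A}^\top-\mathbf{A}^\top\mathbf{A}=\operatorname{diag}(h^2,-h^2)$, whose Frobenius norm is $\sqrt2h^2$, and $\|\mathbf{A}_h\|_F^2=2r^2+h^2$. Monotonicity in $h$ is elementary. For the transient bound I will use $\|\mathbf{M}\|_2\ge\|\mathbf{M}\mathbf{e}_2\|_2$, applied to the second column $(khr^{k-1},r^k)^\top$.

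For the stationary correlation determinant I will use Theorem~\ref{thm:standardized_gramian}(ii): $\mathbf{G}_\infty=q\sum_k\mathbf{A}^k\mathbf{b}\mathbf{b}^\top(\mathbf{A}^\top)^k$, with $\mathbf{A}^k\mathbf{b}=(r^k+khr^{k-1},\,r^k)^\top$. With $s=1-r^2$, the needed series are
\begin{equation*}
\sum_k r^{2k}=1/s,\qquad \sum_k kr^{2k-1}=r/s^2,\qquad \sum_k k^2r^{2k-2}=(1+r^2)/s^3.
\end{equation*}
These give $G_{22}=q/s$, $G_{12}=q(1/s+hr/s^2)$, and $G_{11}=q\bigl(1/s+2hr/s^2+h^2(1+r^2)/s^3\bigr)$. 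The determinant of $\mathbf{R}$ is $1-G_{12}^2/(G_{11}G_{22})$. Multiplying through by $s^4/q^2$, the numerator becomes $G_{11}G_{22}-G_{12}^2\propto h^2(1+r^2)-h^2r^2=h^2$. The denominator, after regrouping $s^2+2hrs+h^2(1+r^2)=(s+hr)^2+h^2$, gives \eqref{eq:s5_shear_correlation}. The derivative is a quotient-rule computation: for $f=h^2/g$ the numerator is $2hg-h^2g'$, and the terms in $h^2$ combine to $2h s(s+hr)$. The strict SSV ordering then follows because $\mathcal{V}_{\mathrm{std}}$ is a strictly increasing function of $\det\mathbf{R}$ at fixed $T$.

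For the memory part, $\mathbf{y}_L=z\mathbf{A}^L\mathbf{e}_1+\sum_{k=0}^{L-1}\mathbf{A}^k\boldsymbol{\xi}_{L-1-k}$. This is a linear Gaussian model with signal direction $\mathbf{m}=\mathbf{A}^L\mathbf{e}_1=r^L\mathbf{e}_1$ and noise covariance
\begin{equation*}
\boldsymbol\Sigma=\sigma_\xi^2\sum_{k<L}\mathbf{A}^k(\mathbf{A}^\top)^k=\sigma_\xi^2\begin{pmatrix}S_0+h^2S_2&hS_1\\hS_1&S_0\end{pmatrix}.
\end{equation*}
Its determinant is $\sigma_\xi^4(S_0^2+h^2D_L)$. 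The Fisher information is $\mathbf{m}^\top\boldsymbol\Sigma^{-1}\mathbf{m}=r^{2L}(\boldsymbol\Sigma^{-1})_{11}=r^{2L}\sigma_\xi^2S_0/\det\boldsymbol\Sigma$, which is exactly \eqref{eq:s5_shear_memory_fisher}. The positivity $D_L>0$ is strict Cauchy--Schwarz: $(a_k)$ and $(b_k)$ are not proportional for $L\ge2$, since $b_0=0\neq a_0$ while $b_1\neq0$. The derivative formula is then immediate. The mutual information follows from the jointly Gaussian pair $(z,\mathbf{y}_L)$ with scalar $z$: $\mathsf I=\tfrac12\log(1+\tau^2\mathbf{m}^\top\boldsymbol\Sigma^{-1}\mathbf{m})$, for instance by the matrix determinant lemma applied to $\log\det(\boldsymbol\Sigma+\tau^2\mathbf{m}\mathbf{m}^\top)-\log\det\boldsymbol\Sigma$. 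It is strictly increasing in $\mathcal J_L$.

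The main obstacle is purely algebraic. It is getting the stationary Gramian entries and the simplification to \eqref{eq:s5_shear_correlation} right, in particular the cancellation leaving $h^2$ in the numerator. I would verify it first at $r=0$, where $\det\mathbf{R}=h^2/(1+h^2)$ is easy to check directly.
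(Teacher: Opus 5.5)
Your proposal is correct and follows essentially the same route as the paper. Both arguments use the closed-form powers $\mathbf{A}_h^k$, the stationary Gramian from the three series (giving $\det\mathbf{G}_h=q^2h^2/(1-r^2)^4$), the commutator $\operatorname{diag}(h^2,-h^2)$, the second-column bound on transient gain, and inversion of the $2\times2$ delay covariance with strict Cauchy--Schwarz for $D_L>0$. Your nilpotent decomposition and matrix-determinant-lemma step for the mutual information only spell out details the paper states briefly.
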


\begin{proof}
For $d=1-r^2$, the impulse responses are
$\mathbf{A}_{h}^{k}\mathbf{b}=(r^k+khr^{k-1},r^k)^{\top}$, with the second
term interpreted as zero at $k=0$. Using
\begin{equation}
    \sum_{k\geq0}r^{2k}=\frac{1}{d},\qquad
    \sum_{k\geq0}kr^{2k-1}=\frac{r}{d^2},\qquad
    \sum_{k\geq0}k^2r^{2k-2}=\frac{1+r^2}{d^3},
\end{equation}
the stationary Gramian is
\begin{equation}
    \mathbf{G}_{h}=q
    \begin{pmatrix}
    d^{-1}+2hrd^{-2}+h^2(1+r^2)d^{-3}
        &d^{-1}+hrd^{-2}\\
    d^{-1}+hrd^{-2}&d^{-1}
    \end{pmatrix}.
    \label{eq:s5_shear_gramian}
\end{equation}
Direct calculation gives $\det\mathbf{G}_{h}=q^2h^2/d^4$; dividing by the
product of its diagonal entries yields Eq.~\ref{eq:s5_shear_correlation}.
Differentiation gives Eq.~\ref{eq:s5_shear_derivative}. Finally,
\begin{equation}
    \mathbf{A}_{h}\mathbf{A}_{h}^{\top}
    -\mathbf{A}_{h}^{\top}\mathbf{A}_{h}
    =\operatorname{diag}(h^2,-h^2),
    \qquad
    \mathbf{A}_{h}^{k}=
    \begin{pmatrix}r^k&khr^{k-1}\\0&r^k\end{pmatrix}.
\end{equation}
The Frobenius norm of the displayed commutator is $\sqrt{2}h^2$, while
$\|\mathbf{A}_{h}\|_F^2=2r^2+h^2$, proving
Eq.~\ref{eq:s5_shear_defects}. Its relative term has derivative
$4\sqrt{2}hr^2/(2r^2+h^2)^2>0$ for $h>0$.
Taking the norm of $\mathbf{A}_{h}^{k}(0,1)^{\top}$ proves the final bound.

For the delayed-memory experiment,
$\mathbf{A}_{h}^{L}\mathbf{e}_{1}=r^{L}\mathbf{e}_{1}$ and iteration of
Eq.~\ref{eq:s5_shear_memory_system} gives
\begin{equation}
    \operatorname{Cov}(\mathbf{y}_{L}\mid z)
    =\sigma_{\xi}^{2}\sum_{k=0}^{L-1}
    \mathbf{A}_{h}^{k}(\mathbf{A}_{h}^{\top})^{k}
    =\sigma_{\xi}^{2}
    \begin{pmatrix}
        S_0+h^2S_2&hS_1\\
        hS_1&S_0
    \end{pmatrix}.
    \label{eq:s5_shear_memory_covariance}
\end{equation}
The two vectors $(a_0,\ldots,a_{L-1})$ and
$(b_0,\ldots,b_{L-1})$ are not collinear when $L\geq2$; strict
Cauchy--Schwarz therefore gives $D_L=S_0S_2-S_1^2>0$. Inverting the
$2\times2$ matrix in Eq.~\ref{eq:s5_shear_memory_covariance} and using
$\partial_z\mathbb{E}[\mathbf{y}_{L}\mid z]=r^L\mathbf{e}_1$ yields
Eq.~\ref{eq:s5_shear_memory_fisher}. Differentiation proves
Eq.~\ref{eq:s5_shear_memory_derivative}. Finally, a scalar Gaussian prior
observed through a linear Gaussian channel has mutual information
$\frac12\log(1+\tau^2\mathcal{J}_{L})$, proving
Eq.~\ref{eq:s5_shear_memory_mutual_information}.
\end{proof}

\begin{corollary}[Exact noisy binary-decoding ordering in the shear family]
\label{cor:s5_shear_bayes}
In the delayed-memory experiment of
Theorem~\ref{thm:shear_standardized}, replace the Gaussian cue by an
equiprobable binary cue $z\in\{-a,+a\}$ with $a>0$. The optimal terminal
decoder has
\begin{equation}
    P_{\mathrm{err}}(h)=\Phi\!\left(-a\sqrt{\mathcal{J}_{L}(h)}\right),
    \label{eq:s5_shear_bayes_error}
\end{equation}
and $P_{\mathrm{err}}(h)$ is strictly increasing for $h>0$. Thus, in this
fixed-spectrum family, reducing the non-normal shear strictly improves actual
classification reliability under post-cue process noise, not only Fisher or
mutual information.
\end{corollary}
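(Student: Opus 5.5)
The plan is to reduce the binary decoding problem to the Gaussian setting of Corollary~\ref{cor:s5_gaussian_decoding}, and then read off monotonicity from the derivative already computed in Theorem~\ref{thm:shear_standardized}.

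\textbf{Step 1: the conditional law of $\mathbf{y}_L$.} Conditioned on $z$, iterating Eq.~\ref{eq:s5_shear_memory_system} gives
\[
\mathbf{y}_L=z\,\mathbf{A}_h^L\mathbf{e}_1+\sum_{k=0}^{L-1}\mathbf{A}_h^k\boldsymbol{\xi}_{L-1-k}.
\]
The noise term is Gaussian and independent of $z$. Since $\mathbf{A}_h^L\mathbf{e}_1=r^L\mathbf{e}_1$, we obtain
\[
\mathbf{y}_L\mid z\sim\mathcal{N}(z\,r^L\mathbf{e}_1,\boldsymbol{\Sigma}_L),
\]
where $\boldsymbol{\Sigma}_L$ is the covariance in Eq.~\ref{eq:s5_shear_memory_covariance}. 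Its determinant is $\sigma_\xi^4(S_0^2+h^2D_L)>0$, so $\boldsymbol{\Sigma}_L\succ0$.

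\textbf{Step 2: Bayes error.} Writing $z=a\tilde z$ with $\tilde z\in\{\pm1\}$ equiprobable, the class means are $\pm\mathbf{s}_L$ with $\mathbf{s}_L=a r^L\mathbf{e}_1$. Corollary~\ref{cor:s5_gaussian_decoding} then gives
\[
P_{\mathrm{err}}=\Phi\!\left(-\sqrt{\mathbf{s}_L^\top\boldsymbol{\Sigma}_L^{-1}\mathbf{s}_L}\right),
\qquad
\mathbf{s}_L^\top\boldsymbol{\Sigma}_L^{-1}\mathbf{s}_L=a^2 r^{2L}\,\mathbf{e}_1^\top\boldsymbol{\Sigma}_L^{-1}\mathbf{e}_1 .
\]
The Fisher-information computation in the proof of Theorem~\ref{thm:shear_standardized} shows that $r^{2L}\mathbf{e}_1^\top\boldsymbol{\Sigma}_L^{-1}\mathbf{e}_1=\mathcal{J}_L(h)$. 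The Fisher information of a linear-Gaussian channel with mean slope $r^L\mathbf{e}_1$ does not depend on the prior on $z$, so this identity carries over from the Gaussian-cue setting. Hence the quadratic form equals $a^2\mathcal{J}_L(h)$, which is Eq.~\ref{eq:s5_shear_bayes_error}. As a check, the $(1,1)$ entry of the inverse is $S_0/[\sigma_\xi^2(S_0^2+h^2D_L)]$, which matches Eq.~\ref{eq:s5_shear_memory_fisher}.

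\textbf{Step 3: monotonicity.} By Eq.~\ref{eq:s5_shear_memory_derivative}, $\mathcal{J}_L(h)$ is strictly decreasing for $h>0$ and strictly positive. The map $x\mapsto\Phi(-a\sqrt{x})$ is strictly decreasing on $(0,\infty)$, because $a>0$ and $\Phi$ is strictly increasing. Their composition is therefore strictly increasing in $h>0$. This requires $L\geq2$, so that $D_L>0$; this is the standing hypothesis inherited from the theorem.

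The only delicate point is confirming that the optimal decoder's error in the binary model is governed by exactly the same precision quantity as the Gaussian-cue Fisher information. Step~2 settles this by computing both from the same $\boldsymbol{\Sigma}_L$, and the remaining steps are routine.
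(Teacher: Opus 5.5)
Your proposal is correct and follows essentially the paper's proof. Both arguments apply the binary Bayes-error formula of Corollary~\ref{cor:s5_gaussian_decoding} with $\mathbf{s}_L=a r^L\mathbf{e}_1$, identify the quadratic form with $a^2\mathcal{J}_L(h)$, and compose the strictly decreasing $\mathcal{J}_L$ with the strictly decreasing map $x\mapsto\Phi(-a\sqrt{x})$. Your explicit checks, that $\boldsymbol{\Sigma}_L\succ0$ and that the $(1,1)$ entry of $\boldsymbol{\Sigma}_L^{-1}$ reproduces $\mathcal{J}_L(h)$, fill in details the paper leaves implicit.
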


\begin{proof}
Equation~\ref{eq:s5_binary_bayes_error} gives the displayed formula, and
Eq.~\ref{eq:s5_shear_memory_derivative} gives
$\mathcal{J}_{L}'(h)<0$. Since $\Phi$ is strictly increasing,
differentiating $-a\sqrt{\mathcal{J}_{L}(h)}$ proves the claim.
\end{proof}

\begin{remark}[Why this is not a universal non-normality ordering]
\label{rem:s5_no_universal_order}
Theorem~\ref{thm:shear_standardized} holds for a matched scalar drive whose
Krylov responses are made more independent by the shear. With isotropic drive
$\mathbf{Q}=q\mathbf{I}$ instead, $\mathbf{A}_{0}=r\mathbf{I}$ has
$\mathbf{R}_{0}=\mathbf{I}$ and therefore already attains the maximal possible
determinant $1$ from Hadamard's inequality; a shear cannot increase it and,
for $r,h>0$, introduces nonzero correlation and decreases it. Thus equal
spectral radius and greater non-normality do not imply greater SSV without an
assumption on input alignment or an explicit response-spectrum ordering.
\end{remark}

The two experiments in Theorem~\ref{thm:shear_standardized} deliberately use
different input protocols. This is not a technical nuisance: an ongoing drive
asks how many distinguishable response directions the reservoir generates,
whereas a delay task asks how well a previously written signal survives
subsequent process noise. The theorem is therefore a constructive trade-off
under explicit alignment and noise assumptions, not a universal ordering of
all memory definitions. In particular, non-normal delay lines can provide
extensive trace memory when the input is aligned with their feedforward
direction~\citep{ganguli2008memory}.

\begin{theorem}[Separated attracting memories convert contraction into a decoding guarantee]
\label{thm:s5_attractor_memory}
Let $\mathbf{F}:\mathbb{R}^{N}\to\mathbb{R}^{N}$ have cue-specific fixed
points $\mathbf{x}_{1}^{\star},\ldots,\mathbf{x}_{C}^{\star}$ with
\begin{equation}
    m=\min_{c\neq d}\|\mathbf{x}_{c}^{\star}
    -\mathbf{x}_{d}^{\star}\|_2>0.
    \label{eq:s5_memory_margin}
\end{equation}
For each cue $c$, suppose the relevant trajectory remains in a region
$\mathcal{B}_{c}$ on which
\begin{equation}
    \|\mathbf{F}(\mathbf{x})-\mathbf{x}_{c}^{\star}\|_2
    \leq\kappa\|\mathbf{x}-\mathbf{x}_{c}^{\star}\|_2,
    \qquad 0\leq\kappa<1.
    \label{eq:s5_memory_contraction}
\end{equation}
Consider
$\mathbf{x}_{t+1}=\mathbf{F}(\mathbf{x}_{t})+\boldsymbol{\eta}_{t}$
and write $\mathbf{e}_{t}=\mathbf{x}_{t}-\mathbf{x}_{c}^{\star}$.
Assume the disturbed trajectory remains in $\mathcal{B}_{c}$ through time
$L$ (almost surely in the stochastic case). Then
\begin{equation}
    \|\mathbf{e}_{L}\|_2
    \leq\kappa^{L}\|\mathbf{e}_{0}\|_2
    +\sum_{t=0}^{L-1}\kappa^{L-1-t}
      \|\boldsymbol{\eta}_{t}\|_2.
    \label{eq:s5_memory_deterministic_bound}
\end{equation}
In particular, if $\|\boldsymbol{\eta}_{t}\|_2\leq\varepsilon$, then
\begin{equation}
    \|\mathbf{e}_{L}\|_2
    \leq\kappa^{L}\|\mathbf{e}_{0}\|_2
    +\varepsilon\frac{1-\kappa^{L}}{1-\kappa}.
    \label{eq:s5_memory_bounded_noise}
\end{equation}
Whenever the right-hand side is smaller than $m/2$, nearest-fixed-point
decoding recovers cue $c$ correctly.

If instead
$\mathbb{E}[\boldsymbol{\eta}_{t}\mid\mathbf{x}_{t}]=\mathbf{0}$ and
$\mathbb{E}[\|\boldsymbol{\eta}_{t}\|_2^2\mid\mathbf{x}_{t}]
\leq\nu^2$, then
\begin{equation}
    \mathbb{E}\|\mathbf{e}_{L}\|_2^2
    \leq\kappa^{2L}\mathbb{E}\|\mathbf{e}_{0}\|_2^2
    +\nu^2\frac{1-\kappa^{2L}}{1-\kappa^2},
    \label{eq:s5_memory_mse_bound}
\end{equation}
and the nearest-fixed-point decoder obeys
\begin{equation}
    \mathbb{P}(\widehat c\neq c)
    \leq\frac{4}{m^2}\left[
    \kappa^{2L}\mathbb{E}\|\mathbf{e}_{0}\|_2^2
    +\nu^2\frac{1-\kappa^{2L}}{1-\kappa^2}
    \right].
    \label{eq:s5_memory_error_bound}
\end{equation}
\end{theorem}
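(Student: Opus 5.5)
The proof will rest on one recursive inequality obtained from the contraction hypothesis. After that, the deterministic and mean-square statements follow by unrolling. I will fix the cue $c$ and work on the event that the disturbed trajectory stays in $\mathcal{B}_{c}$ through time $L$. Since $\mathbf{e}_{t+1}=\mathbf{F}(\mathbf{x}_{t})-\mathbf{x}_{c}^{\star}+\boldsymbol{\eta}_{t}$, Eq.~\ref{eq:s5_memory_contraction} and the triangle inequality give $\|\mathbf{e}_{t+1}\|_2\leq\kappa\|\mathbf{e}_{t}\|_2+\|\boldsymbol{\eta}_{t}\|_2$.

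A straightforward induction on $t$ then yields Eq.~\ref{eq:s5_memory_deterministic_bound}. Bounding each $\|\boldsymbol{\eta}_{t}\|_2$ by $\varepsilon$ and summing the geometric series gives Eq.~\ref{eq:s5_memory_bounded_noise}.

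For the decoding claim, suppose $\|\mathbf{e}_{L}\|_2<m/2$. Then for any $d\neq c$, the reverse triangle inequality gives $\|\mathbf{x}_{L}-\mathbf{x}_{d}^{\star}\|_2\geq m-\|\mathbf{e}_{L}\|_2>m/2>\|\mathbf{x}_{L}-\mathbf{x}_{c}^{\star}\|_2$. So the nearest fixed point is $\mathbf{x}_{c}^{\star}$, and decoding is correct.

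For the stochastic part, I expand the square of the exact update rather than its norm bound:
\[
\|\mathbf{e}_{t+1}\|_2^2=\|\mathbf{F}(\mathbf{x}_{t})-\mathbf{x}_{c}^{\star}\|_2^2+2(\mathbf{F}(\mathbf{x}_{t})-\mathbf{x}_{c}^{\star})^{\top}\boldsymbol{\eta}_{t}+\|\boldsymbol{\eta}_{t}\|_2^2 .
\]
I take the conditional expectation given $\mathbf{x}_{t}$. The cross term vanishes because $\mathbf{F}(\mathbf{x}_{t})$ is $\mathbf{x}_{t}$-measurable and $\mathbb{E}[\boldsymbol{\eta}_{t}\mid\mathbf{x}_{t}]=\mathbf{0}$. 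The first term is at most $\kappa^2\|\mathbf{e}_{t}\|_2^2$ by contraction, using that $\mathbf{x}_{t}\in\mathcal{B}_{c}$ almost surely. The last term is at most $\nu^2$. Taking total expectations gives $\mathbb{E}\|\mathbf{e}_{t+1}\|_2^2\leq\kappa^2\mathbb{E}\|\mathbf{e}_{t}\|_2^2+\nu^2$, and unrolling this with ratio $\kappa^2$ gives Eq.~\ref{eq:s5_memory_mse_bound}.

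For the error probability, the deterministic argument shows that misdecoding implies $\|\mathbf{e}_{L}\|_2\geq m/2$. Markov's inequality applied to $\|\mathbf{e}_{L}\|_2^2$ then gives $\mathbb{P}(\widehat c\neq c)\leq 4\,\mathbb{E}\|\mathbf{e}_{L}\|_2^2/m^2$. Substituting Eq.~\ref{eq:s5_memory_mse_bound} yields Eq.~\ref{eq:s5_memory_error_bound}.

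The main obstacle is the measurability and integrability bookkeeping in the stochastic step. The contraction applies only on $\mathcal{B}_{c}$, so I use the almost-sure confinement assumption to apply it pointwise before taking expectations. I also need $\mathbb{E}\|\mathbf{e}_{t}\|_2^2<\infty$ for the cross term to be well defined. The recursion itself supplies this inductively, starting from a finite $\mathbb{E}\|\mathbf{e}_{0}\|_2^2$; if that quantity is infinite, the bound is trivial.

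I also need to handle ties in the nearest-fixed-point rule. The strict inequality $\|\mathbf{e}_{L}\|_2<m/2$ excludes ties, so misdecoding, including a tie resolved adversarially, is contained in the event $\{\|\mathbf{e}_{L}\|_2\geq m/2\}$. That containment is exactly what Markov's inequality requires.
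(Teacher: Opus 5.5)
Your proposal is correct and follows the paper's proof step for step. Both use the same chain: the one-step bound $\|\mathbf{e}_{t+1}\|_2\leq\kappa\|\mathbf{e}_t\|_2+\|\boldsymbol{\eta}_t\|_2$ unrolled by induction and a geometric series, the triangle-inequality argument that $\|\mathbf{e}_L\|_2<m/2$ forces correct nearest-fixed-point decoding, the vanishing conditional cross term giving the $\kappa^2$ mean-square recursion, and Markov's inequality on $\|\mathbf{e}_L\|_2^2$. Your added remarks on cross-term integrability and on ties in the decoder are sound refinements that the paper leaves implicit.
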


\begin{proof}
Equation~\ref{eq:s5_memory_contraction} and the triangle inequality give
$\|\mathbf{e}_{t+1}\|_2\leq
\kappa\|\mathbf{e}_{t}\|_2+\|\boldsymbol{\eta}_{t}\|_2$.
Induction proves Eq.~\ref{eq:s5_memory_deterministic_bound}, and summing the
geometric series gives Eq.~\ref{eq:s5_memory_bounded_noise}. If
$\|\mathbf{e}_{L}\|_2<m/2$, the triangle inequality implies that
$\mathbf{x}_{L}$ is closer to $\mathbf{x}_{c}^{\star}$ than to every other
fixed point, proving the deterministic decoding claim.

For conditionally zero-mean noise, the cross term between
$\mathbf{F}(\mathbf{x}_{t})-\mathbf{x}_{c}^{\star}$ and
$\boldsymbol{\eta}_{t}$ has conditional expectation zero. Hence
\begin{equation}
    \mathbb{E}\|\mathbf{e}_{t+1}\|_2^2
    \leq\kappa^2\mathbb{E}\|\mathbf{e}_{t}\|_2^2+\nu^2,
\end{equation}
whose iteration proves Eq.~\ref{eq:s5_memory_mse_bound}. A decoding error
requires $\|\mathbf{e}_{L}\|_2\geq m/2$; Markov's inequality applied to
$\|\mathbf{e}_{L}\|_2^2$ proves Eq.~\ref{eq:s5_memory_error_bound}.
\end{proof}

Theorem~\ref{thm:s5_attractor_memory} makes the missing experimental bridge
explicit. Fast local recovery benefits memory only if cue-conditioned states
or trajectories remain separated: contraction controls within-cue dispersion,
whereas $m$ controls between-cue distinguishability. The bias-driven recovery
assay in Fig.~\ref{fig:mechanism} estimates stability around one reference
trajectory but does not measure this separation. For time-dependent
cue-specific reference trajectories satisfying
$\mathbf{x}_{c}^{\star}(t+1)=\mathbf{F}(\mathbf{x}_{c}^{\star}(t))$, the same
proof applies with the time-$L$ margin
$m_L=\min_{c\neq d}\|\mathbf{x}_{c}^{\star}(L)
-\mathbf{x}_{d}^{\star}(L)\|_2$ and the one-step transverse bound
$\|\mathbf{F}(\mathbf{x})-\mathbf{x}_{c}^{\star}(t+1)\|_2
\leq\kappa_t\|\mathbf{x}-\mathbf{x}_{c}^{\star}(t)\|_2$, provided
$\sup_t\kappa_t\leq\kappa<1$.
For unbounded Gaussian disturbances, the stochastic conclusion can be applied
on a high-probability bounded-noise event, with the corresponding basin-exit
probability added to the right-hand side of
Eq.~\ref{eq:s5_memory_error_bound}.

For the nonlinear reservoir in Eq.~\ref{eq:esn}, the Jacobian along a
cue-conditioned reference trajectory is
\begin{equation}
    \mathbf{J}_{t}
    =(1-\alpha)\mathbf{I}
    +\alpha\mathbf{D}_{t}\mathbf{W},
    \qquad
    \mathbf{D}_{t}
    =\operatorname{diag}\!\left[
      \operatorname{sech}^{2}(\mathbf{W}\mathbf{x}_{t}
      +\mathbf{W}_{\mathrm{in}}\widetilde{\mathbf{u}}_{t})
      \right].
    \label{eq:s5_task_jacobian}
\end{equation}
The first-order covariance of terminal deviations generated by isotropic
post-cue process noise is therefore
\begin{equation}
    \boldsymbol{\Sigma}_{L}^{\mathrm{lin}}
    =\sigma_{\xi}^{2}\sum_{k=0}^{L-1}
      \boldsymbol{\Phi}_{L,k+1}\boldsymbol{\Phi}_{L,k+1}^{\top},
    \qquad
    \boldsymbol{\Phi}_{L,k+1}
    =\mathbf{J}_{L-1}\cdots\mathbf{J}_{k+1},
    \label{eq:s5_nonlinear_noise_gramian}
\end{equation}
with the empty product equal to $\mathbf{I}$. Equation
~\ref{eq:s5_nonlinear_noise_gramian} identifies the task-local quantity that
must be measured in the nonlinear network. The commutator defect of
$\mathbf{W}$ alone cannot determine it because both saturation
$\mathbf{D}_{t}$ and cue geometry enter. The post-cue noise scan below
estimates its trace nonlinearly through paired terminal trajectories and
measures cue separation and decoding performance in the same states.

\begin{lemma}[Reciprocity reduces expected antisymmetric energy in the paired-edge generator]
\label{lem:s5_reciprocity_antisymmetry}
Suppose the pre-balance excitatory matrix is loop-free, uses distinct unordered
node pairs, and contains exactly $E$ directed edges, a realized fraction
$\rho$ of which belong to reciprocal pairs. Nonzero weights are
independent and identically distributed with mean $\mu$ and variance $v$,
including the two weights within a reciprocal pair. Ignoring only the
integer-rounding of pair counts,
\begin{equation}
    \mathbb{E}\|\mathbf{W}_{e}-\mathbf{W}_{e}^{\top}\|_{F}^{2}
    =2E\bigl[v+(1-\rho)\mu^{2}\bigr].
    \label{eq:s5_expected_antisymmetry}
\end{equation}
For the $\mathcal{U}(0,1)$ weights used in the generator, this becomes
$E(4-3\rho)/6$ and is strictly decreasing in realized $\rho$. As a function of
the requested continuous parameter before integer rounding, the expectation is
a non-increasing step function.
\end{lemma}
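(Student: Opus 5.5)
The plan is to decompose the squared Frobenius norm of the antisymmetric part over unordered node pairs. Since $\mathbf{W}_e$ is loop-free, we have
\begin{equation*}
\|\mathbf{W}_e-\mathbf{W}_e^\top\|_F^2=\sum_{i\neq j}(W_{ij}-W_{ji})^2=2\sum_{\{i,j\}}(W_{ij}-W_{ji})^2,
\end{equation*}
so it suffices to compute the expected contribution of each occupied unordered pair and then count pairs by type. Unoccupied pairs contribute zero. By hypothesis each occupied pair is of exactly one of two types: a reciprocal dyad, carrying two i.i.d.\ weights, or a unidirectional pair, carrying one weight and a structural zero.

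For a reciprocal dyad, $W_{ij}$ and $W_{ji}$ are independent with mean $\mu$ and variance $v$. Hence $\mathbb{E}(W_{ij}-W_{ji})^2=\operatorname{Var}(W_{ij}-W_{ji})=2v$, and the dyad contributes $2\cdot 2v=4v$ to the full sum. For a unidirectional pair, the contribution is $2\,\mathbb{E}W^2=2(v+\mu^2)$. This holds regardless of the coin-flip orientation, since the squared difference is symmetric in the two directions.

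Next, count the pairs of each type, ignoring integer rounding as the statement allows. A fraction $\rho$ of the $E$ directed edges lie in reciprocal pairs, so there are $M=\rho E/2$ dyads and $E-2M=(1-\rho)E$ unidirectional edges, each occupying its own pair. By linearity of expectation the total is
\begin{equation*}
4v\cdot\tfrac{\rho E}{2}+2(v+\mu^2)(1-\rho)E=2E\bigl[v+(1-\rho)\mu^2\bigr],
\end{equation*}
which is Eq.~\ref{eq:s5_expected_antisymmetry}. Only first and second moments enter, so no assumption about which pairs are selected is needed beyond the pairs being distinct.

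For $\mathcal{U}(0,1)$ weights we substitute $\mu=1/2$ and $v=1/12$, which gives $2E[1/12+(1-\rho)/4]=E(4-3\rho)/6$. This is affine in $\rho$ with slope $-E/2<0$, so it is strictly decreasing in the realized reciprocity. For the requested continuous parameter $\rho'$, the realized value is $\rho_{\mathrm r}=2\lfloor E\rho'/2\rfloor/E$, which is a non-decreasing step function of $\rho'$ with fixed $E$. Composing it with the strictly decreasing map gives a non-increasing step function.

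The only delicate point is being explicit about this rounding layer. The exact identity is stated in terms of realized $\rho_{\mathrm r}=2M/E$, for which the counts are exact integers, so the computation is in fact exact in $\rho_{\mathrm r}$. The caveat in the statement applies only when translating to the requested parameter. I expect no real obstacle beyond stating clearly that the weights within a dyad are independent, since this is what makes the cross term vanish and leaves exactly $2v$ per dyad.
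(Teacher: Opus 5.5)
Your proof is correct and follows the paper's own argument. Both split the Frobenius sum over occupied unordered pairs, assign $4v$ to each reciprocal dyad and $2(v+\mu^2)$ to each unidirectional edge, and count $E\rho/2$ dyads and $E(1-\rho)$ unidirectional edges. Beyond the paper, you also justify the strict-decrease and step-function claims, which it leaves implicit, by writing $\rho_{\mathrm r}=2\lfloor E\rho'/2\rfloor/E$ and noting that the identity is exact in realized reciprocity.
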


\begin{proof}
There are $E\rho/2$ reciprocal pairs and $E(1-\rho)$ unidirectional edges.
For independent reciprocal weights $X,Y$, the two matrix entries contribute
$2(X-Y)^2$, whose expectation is $4v$. A unidirectional edge of weight $X$
contributes $2X^2$, whose expectation is $2(v+\mu^2)$. Summing the two
contributions gives
$2E\rho v+2E(1-\rho)(v+\mu^2)$, which simplifies to
Eq.~\ref{eq:s5_expected_antisymmetry}. For a uniform weight,
$\mu=1/2$ and $v=1/12$.
\end{proof}

\begin{proposition}[Antisymmetry controls an upper bound on the commutator defect]
\label{prop:s5_commutator_bound}
For any real matrix $\mathbf{W}$, let
$\mathbf{S}=(\mathbf{W}+\mathbf{W}^{\top})/2$ and
$\mathbf{K}=(\mathbf{W}-\mathbf{W}^{\top})/2$. Then
\begin{equation}
    \Delta_{\mathrm{comm}}(\mathbf{W})
    =2\|\mathbf{K}\mathbf{S}-\mathbf{S}\mathbf{K}\|_{F}
    \leq2\|\mathbf{S}\|_{2}
       \|\mathbf{W}-\mathbf{W}^{\top}\|_{F}.
    \label{eq:s5_commutator_bound}
\end{equation}
Consequently,
\begin{equation}
    \delta_{\mathrm{rel}}(\mathbf{W})
    \leq
    \frac{2\|\mathbf{S}\|_{2}
    \|\mathbf{W}-\mathbf{W}^{\top}\|_{F}}{\|\mathbf{W}\|_{F}^{2}}
\end{equation}
for every nonzero $\mathbf{W}$.
\end{proposition}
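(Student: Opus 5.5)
We will prove the identity by direct expansion and then obtain the inequality from two standard norm facts. Substituting $\mathbf{W}=\mathbf{S}+\mathbf{K}$ and $\mathbf{W}^{\top}=\mathbf{S}-\mathbf{K}$, which uses $\mathbf{S}^{\top}=\mathbf{S}$ and $\mathbf{K}^{\top}=-\mathbf{K}$, gives
\begin{equation*}
\mathbf{W}\mathbf{W}^{\top}=\mathbf{S}^2-\mathbf{S}\mathbf{K}+\mathbf{K}\mathbf{S}-\mathbf{K}^2,
\qquad
\mathbf{W}^{\top}\mathbf{W}=\mathbf{S}^2+\mathbf{S}\mathbf{K}-\mathbf{K}\mathbf{S}-\mathbf{K}^2.
\end{equation*}
Subtracting, the $\mathbf{S}^2$ and $\mathbf{K}^2$ terms cancel and we are left with $\mathbf{W}\mathbf{W}^{\top}-\mathbf{W}^{\top}\mathbf{W}=2(\mathbf{K}\mathbf{S}-\mathbf{S}\mathbf{K})$. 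Taking Frobenius norms yields the equality in Eq.~\ref{eq:s5_commutator_bound}.

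For the inequality we use the triangle inequality, $\|\mathbf{K}\mathbf{S}-\mathbf{S}\mathbf{K}\|_F\leq\|\mathbf{K}\mathbf{S}\|_F+\|\mathbf{S}\mathbf{K}\|_F$. We then apply the mixed submultiplicativity bounds $\|\mathbf{K}\mathbf{S}\|_F\leq\|\mathbf{K}\|_F\|\mathbf{S}\|_2$ and $\|\mathbf{S}\mathbf{K}\|_F\leq\|\mathbf{S}\|_2\|\mathbf{K}\|_F$. Each follows by applying the operator norm column-wise, or row-wise after transposing. Together these give $\|\mathbf{K}\mathbf{S}-\mathbf{S}\mathbf{K}\|_F\leq 2\|\mathbf{S}\|_2\|\mathbf{K}\|_F$. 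Since $\|\mathbf{K}\|_F=\tfrac12\|\mathbf{W}-\mathbf{W}^{\top}\|_F$, this becomes $\|\mathbf{S}\|_2\|\mathbf{W}-\mathbf{W}^{\top}\|_F$, and multiplying by $2$ gives the stated bound on $\Delta_{\mathrm{comm}}(\mathbf{W})$.

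The corollary for $\delta_{\mathrm{rel}}$ follows by dividing both sides by $\|\mathbf{W}\|_F^2$, which is positive because $\mathbf{W}\neq0$, together with the definition in Eq.~\ref{eq:dep}.

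No step is a genuine obstacle. The only point needing care is the sign bookkeeping in the expansion, so that the cross terms combine to $2(\mathbf{K}\mathbf{S}-\mathbf{S}\mathbf{K})$ rather than cancel. A sharper constant, for example using $\|[\mathbf{K},\mathbf{S}]\|_F\leq\sqrt2\,\|\mathbf{K}\|_F\|\mathbf{S}\|_F$ (Böttcher--Wenzel), is not needed for the stated bound.
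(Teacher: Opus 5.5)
Your proof is correct and matches the paper's argument: you expand $\mathbf{W}=\mathbf{S}+\mathbf{K}$, $\mathbf{W}^{\top}=\mathbf{S}-\mathbf{K}$ to get $2(\mathbf{K}\mathbf{S}-\mathbf{S}\mathbf{K})$, then combine the triangle inequality with $\|\mathbf{K}\mathbf{S}\|_F,\|\mathbf{S}\mathbf{K}\|_F\leq\|\mathbf{S}\|_2\|\mathbf{K}\|_F$ and $2\|\mathbf{K}\|_F=\|\mathbf{W}-\mathbf{W}^{\top}\|_F$, and divide by $\|\mathbf{W}\|_F^2$. Your closing aside is slightly off, though: B\"{o}ttcher--Wenzel gives $\sqrt{2}\|\mathbf{K}\|_F\|\mathbf{S}\|_F$, which improves the constant but uses the larger norm $\|\mathbf{S}\|_F\geq\|\mathbf{S}\|_2$, so it does not uniformly sharpen the stated bound.
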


\begin{proof}
Expanding $\mathbf{W}=\mathbf{S}+\mathbf{K}$ and
$\mathbf{W}^{\top}=\mathbf{S}-\mathbf{K}$ gives
$\mathbf{W}\mathbf{W}^{\top}-\mathbf{W}^{\top}\mathbf{W}
=2(\mathbf{K}\mathbf{S}-\mathbf{S}\mathbf{K})$. The inequality follows from
$\|\mathbf{K}\mathbf{S}\|_{F},\|\mathbf{S}\mathbf{K}\|_{F}
\leq\|\mathbf{S}\|_{2}\|\mathbf{K}\|_{F}$ and
$2\|\mathbf{K}\|_{F}=\|\mathbf{W}-\mathbf{W}^{\top}\|_{F}$.
Dividing by $\|\mathbf{W}\|_F^2$ gives the relative bound.
\end{proof}

Lemma~\ref{lem:s5_reciprocity_antisymmetry} and
Proposition~\ref{prop:s5_commutator_bound} provide a rigorous but deliberately
limited structural bridge. They do not prove that either
$\Delta_{\mathrm{comm}}(\mathbf{W})$ or
$\delta_{\mathrm{rel}}(\mathbf{W})$ decreases for every realization: the symmetric
part also changes, and the independent inhibitory matrix and spectral
normalization are applied after $\mathbf{W}_{e}$ is generated. The monotone
relationship for the final signed matrices is therefore an empirical ensemble
result (Fig.~\ref{fig:mechanism}a), rather than a deterministic consequence of
reciprocity alone.

\begin{proposition}[One-step small-signal reduction of the leaky-tanh network]
\label{prop:s5_nonlinear_reduction}
For
\begin{equation}
    \mathbf{x}_{t+1}=(1-\alpha)\mathbf{x}_{t}
    +\alpha\tanh(\mathbf{W}\mathbf{x}_{t}+\mathbf{B}\mathbf{u}_{t}),
    \qquad 0<\alpha\leq1,
    \label{eq:s5_nonlinear_system}
\end{equation}
set $\mathbf{z}_{t}=\mathbf{W}\mathbf{x}_{t}+\mathbf{B}\mathbf{u}_{t}$ and
$\mathbf{A}=(1-\alpha)\mathbf{I}+\alpha\mathbf{W}$. If
$\|\mathbf{z}_{t}\|_{\infty}\leq\delta$, then
\begin{equation}
    \mathbf{x}_{t+1}=\mathbf{A}\mathbf{x}_{t}
    +\alpha\mathbf{B}\mathbf{u}_{t}+\mathbf{e}_{t},
    \qquad
    \|\mathbf{e}_{t}\|_{2}\leq\frac{\alpha\sqrt{N}}{3}\delta^{3}.
    \label{eq:s5_tanh_remainder}
\end{equation}
The trajectory Jacobian is
\begin{equation}
    \mathbf{J}_{t}=(1-\alpha)\mathbf{I}
    +\alpha\mathbf{D}_{t}\mathbf{W},
    \qquad
    \mathbf{D}_{t}=\operatorname{diag}(\operatorname{sech}^{2}z_i(t)),
\end{equation}
and obeys
\begin{equation}
    \|\mathbf{J}_{t}-\mathbf{A}\|_{2}
    \leq\alpha\tanh^{2}(\delta)\|\mathbf{W}\|_{2}.
    \label{eq:s5_jacobian_bound}
\end{equation}
In addition, the cube $[-1,1]^N$ is forward invariant.
\end{proposition}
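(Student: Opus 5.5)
Three pieces, each handled coordinatewise with elementary scalar bounds on $\tanh$.

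\emph{Remainder bound.} Since $\mathbf{A}\mathbf{x}_t+\alpha\mathbf{B}\mathbf{u}_t=(1-\alpha)\mathbf{x}_t+\alpha\mathbf{z}_t$, subtracting from the update gives
\begin{equation*}
\mathbf{e}_t=\alpha\bigl(\tanh(\mathbf{z}_t)-\mathbf{z}_t\bigr)
\end{equation*}
exactly. The scalar claim is $|\tanh z-z|\leq |z|^3/3$ for all real $z$. Since $g(z)=z-\tanh z$ is odd, it suffices to treat $z\geq0$. There $g(0)=0$ and $g'(z)=\tanh^2 z\leq z^2$ (because $0\leq\tanh z\leq z$), so integrating gives $0\leq g(z)\leq z^3/3$. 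With $|z_i|\leq\delta$ each coordinate of $\mathbf{e}_t$ is bounded by $\alpha\delta^3/3$, and hence $\|\mathbf{e}_t\|_2\leq\sqrt{N}\,\alpha\delta^3/3$.

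\emph{Jacobian.} Differentiate the update with respect to $\mathbf{x}_t$: the chain rule gives $(1-\alpha)\mathbf{I}+\alpha\operatorname{diag}(\operatorname{sech}^2 z_i)\mathbf{W}$. Then
\begin{equation*}
\mathbf{J}_t-\mathbf{A}=\alpha(\mathbf{D}_t-\mathbf{I})\mathbf{W}=-\alpha\operatorname{diag}(\tanh^2 z_i)\mathbf{W},
\end{equation*}
using $1-\operatorname{sech}^2=\tanh^2$. The operator norm of this diagonal factor is $\max_i\tanh^2 z_i\leq\tanh^2\delta$, because $\tanh^2$ is even and increasing in $|z|$. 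Submultiplicativity then gives the bound on $\|\mathbf{J}_t-\mathbf{A}\|_2$.

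\emph{Forward invariance.} If $\mathbf{x}_t\in[-1,1]^N$, each coordinate of $\mathbf{x}_{t+1}$ is the convex combination $(1-\alpha)x_i+\alpha\tanh(z_i)$ with $\alpha\in(0,1]$. Both terms lie in $[-1,1]$ for every input, so $\mathbf{x}_{t+1}$ does too.

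There is no real obstacle. The only step needing care is the global cubic bound $|\tanh z - z|\le|z|^3/3$, which must hold for all $z$ and not just asymptotically. The monotonicity argument above establishes it without needing Taylor series convergence.
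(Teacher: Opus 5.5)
Your proof is correct and follows the paper's argument essentially step for step. You use the same integral comparison $\int_0^{|z|}\tanh^2 s\,ds\le\int_0^{|z|}s^2\,ds$ for the cubic remainder, the identity $1-\operatorname{sech}^2=\tanh^2$ for the Jacobian deviation, and the convex-combination bound for forward invariance; you additionally write out the exact identity $\mathbf{e}_t=\alpha(\tanh\mathbf{z}_t-\mathbf{z}_t)$ and the submultiplicativity step, which the paper leaves implicit.
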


\begin{proof}
For scalar $z$,
$|\tanh z-z|=|\int_{0}^{z}\tanh^{2}s\,ds|
\leq\int_{0}^{|z|}s^{2}\,ds=|z|^{3}/3$ because
$|\tanh s|\leq|s|$. Applying this coordinatewise proves
Eq.~\ref{eq:s5_tanh_remainder}. Since
$1-\operatorname{sech}^{2}z=\tanh^{2}z$,
$\|\mathbf{D}_{t}-\mathbf{I}\|_{2}\leq\tanh^{2}\delta$, which proves
Eq.~\ref{eq:s5_jacobian_bound}. Finally, if
$\|\mathbf{x}_{t}\|_{\infty}\leq1$, then
$\|\mathbf{x}_{t+1}\|_{\infty}
\leq(1-\alpha)\|\mathbf{x}_{t}\|_{\infty}+\alpha\leq1$.
\end{proof}

\begin{remark}[Scope at the normalization boundary]
\label{rem:s5_boundary}
The infinite-horizon statements above require
$\varrho(\mathbf{A})<1$. The main simulations instead normalize
$\mathbf{W}$ so that its spectral abscissa
$\max_i\operatorname{Re}\lambda_i(\mathbf{W})$ is $1$. Since every
eigenvalue $\lambda_i(\mathbf{W})$ maps to
$1-\alpha+\alpha\lambda_i(\mathbf{W})$, an eigenvalue with real part $1$
maps to $1+i\alpha\operatorname{Im}\lambda_i$, whose modulus is at least
$1$. Thus the zero-state linear map used for comparison is on or beyond the
stability boundary, and its stationary Gramian need not exist. Finite-horizon
Gramians remain well defined, while Proposition~\ref{prop:s5_nonlinear_reduction}
quantifies the linear approximation only when pre-activations are small. The
global tanh nonlinearity guarantees bounded states through forward invariance,
but boundedness alone does not prove a unique stationary distribution or a
global SSV ordering. Consequently, the theorems establish a rigorous local and
constructive mechanism; the monotone relationship between $\rho$ and SSV in the full
nonlinear network remains an experimentally tested ensemble result.
\end{remark}

\subsection*{S6. E/I balance and scale-invariant non-normality controls}
\label{sec:supp_ei_controls}

The original implementation balances the summed excitatory and inhibitory
weights globally. Under the stated parameters,
\texttt{calculate\_inhibitory\_connection\_probability(300)} returns one, so
the independently sampled inhibitory support is dense off the diagonal. To
test whether the relationship between $\rho$ and SSV depends on this choice, we compared
the original global-cancellation network with a row-balanced signed network
and a Dale-like network in which presynaptic neurons retain fixed signs and
inhibitory inputs enforce row-wise balance. All controls used max-modulus
normalization, $N=300$, $c=0.1$, $T=4{,}000$, a 500-step washout, and $n=10$
networks per $\rho$ level.

SSV decreased with $\rho$ under all three schemes (correlations of the 11
level means: global $r=-0.975$, row-balanced $r=-0.970$, and Dale-like
$r=-0.977$; all $p<7.5\times10^{-7}$; Supplementary
Fig.~\ref{fig:supp_ei}). Thus the principal relationship between $\rho$ and SSV is not an
artefact of dense global inhibition or the absence of row-wise balance. The
relative commutator defect fell by endpoint ratios of $1.419$ and $1.417$ in
the global and row-balanced schemes, respectively. In the Dale-like control,
however, it changed by only $1.015\times$ despite the large SSV decline
($r=-0.918$, $p=6.8\times10^{-5}$ for the small but consistent monotone trend
in level means).
Moreover, after centering within each $\rho$ level, the
correlation between $\delta_{\mathrm{rel}}$ and SSV was not significant in any scheme
($|r|\leq0.145$, all $p\geq0.130$). These controls support the robustness of
the structural SSV effect while showing that the relative commutator defect is
an ensemble-specific correlate, not a universal or sufficient mediator across
E/I architectures.

\begin{figure}[htbp]
		\centering
		\includegraphics[width=\textwidth]{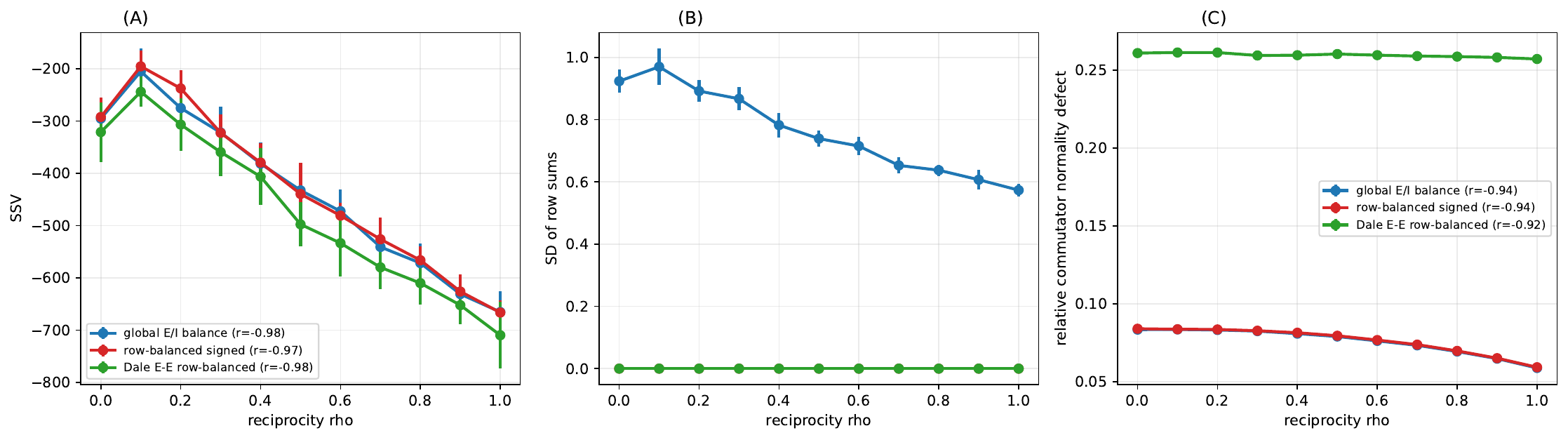}
	\caption{\textbf{E/I-balance controls.}
		\textbf{(a)} SSV versus reciprocity under global cancellation,
		row-balanced signed weights, and a Dale-like row-balanced construction;
		legend values are correlations of the 11 level means.
		\textbf{(b)} Standard deviation of row sums. Values near zero verify the
		two explicit row-balance controls.
		\textbf{(c)} Relative commutator normality defect
		$\delta_{\mathrm{rel}}$ versus reciprocity; legend values are the
		corresponding level-mean correlations. Points and error bars show
		mean $\pm$ SD over $n=10$ networks per level.}
		\label{fig:supp_ei}
\end{figure}

\subsection*{S7. Shared-matrix robustness and operating-point controls}
\label{sec:supp_order_parameter}

The confirmatory experiment was designed to prevent task-specific network
selection and post hoc readout adaptation. \newrev{Online memory capacity}, autonomous restoration,
persistent-pattern maintenance, and evidence integration reused the identical
recurrent-matrix checksum for each network ID. Input, cue, trial, and
perturbation seeds were generated from a resolved master seed and saved with
the raw rows. Zero-perturbation trajectory identity, requested endpoint
reciprocity, cross-task matrix hashes, and rerun determinism were checked
automatically before analysis.

The full evidence-noise and state-perturbation curves are shown in
Supplementary Fig.~\ref{fig:supp_integration_robustness}. Under the primary
maximum-real normalization, input-noise robustness AUC had no network-level
trend with reciprocity ($r_s=-0.093$, $p=0.520$). At the independently frozen
$0.025\times$ state-perturbation dose, above-chance performance retention was
also non-significant ($r_s=0.084$, $p=0.563$), despite a strong decrease in the
distance AUC between paired perturbed and reference task trajectories
($r_s=-0.903$, $p=3.4\times10^{-19}$). The distinction is substantive: two
trajectories can converge geometrically while crossing a decision boundary or
losing between-class margin.

The auxiliary repeated-pattern assay likewise did not provide an independent
multi-attractor result. At the longest delay in the primary condition,
network-level trends in cue-separation ratio ($r_s=0.228$, $p=0.111$), bit
accuracy ($r_s=0.158$, $p=0.274$), and NMSE ($r_s=-0.116$, $p=0.424$) were not
significant. Under maximum-modulus normalization, pattern measures did improve
with reciprocity (bit accuracy $r_s=0.721$, $p=7.1\times10^{-6}$; NMSE
$r_s=-0.750$, $p=1.8\times10^{-6}$), whereas autonomous restoration did not
($r_s=0.047$, $p=0.805$). Together with the primary result, this crossover
shows that reciprocity interacts with spectral normalization and task geometry;
no single stability metric is sufficient to predict every delayed computation.

\begin{figure}[htbp]
	\centering
	\includegraphics[width=\textwidth]{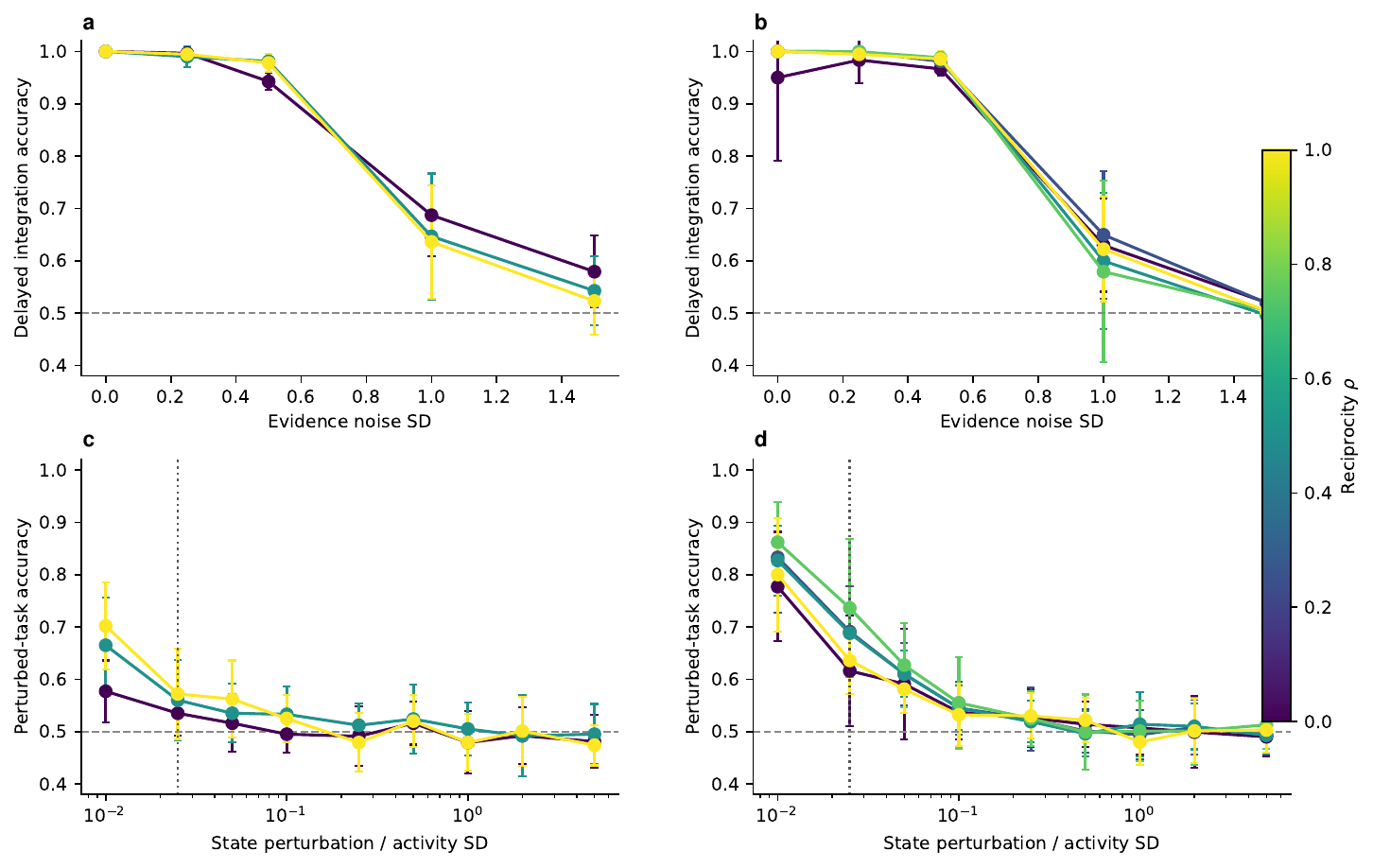}
	\caption{\textbf{Noisy evidence-integration and perturbation-dose controls.}
		\textbf{(a,b)} Frozen-readout accuracy across evidence-noise SD for the
		paired maximum-modulus control (a) and primary maximum-real condition (b).
		\textbf{(c,d)} Accuracy after a midpoint state perturbation, scaled by the
		network's activity SD, under the corresponding control (c) and primary (d)
		conditions. Colours denote reciprocity; points and error
		bars are mean $\pm$ SD over independent networks. The vertical dotted line
		marks the $\rho$-blinded pilot dose ($0.025\times$ activity SD) frozen before
		the confirmatory run; horizontal dashed lines mark chance accuracy. The
		curves do not support a universal monotonic robustness benefit of larger
		$\rho$.}
	\label{fig:supp_integration_robustness}
\end{figure}

\subsection*{S8. Post-cue process-noise attenuation and the cue-margin boundary}
\label{sec:supp_dms_process_noise}

The post-cue DMS scan was designed after the fixed-spectrum theory had
identified two distinct requirements for reliable delayed decoding: low
noise amplification and preserved cue separation. It reused the audited
seeded-replication matrices but used a new master seed for cues, input weights,
and process-noise directions. All 210 expected network files were produced;
source matrix hashes, resolved configuration hashes, dose vectors, paired
seeds, predictions, and finite-valued outputs passed automated checks.
\rev{For the noise-matched analysis, the exactly zero-noise fit was treated as
a diagnostic because its within-class covariance is degenerate; retention was
therefore normalized to the smallest positive dose and its AUC was computed
over positive doses only.}

The state-space prediction was unambiguous. At $\sigma_x=10^{-4}$, empirical
gain decreased from $12.25\pm3.88$ at $\rho=0$ to $4.79\pm0.35$ at
$\rho=1$ (network-level $r_s=-0.863$, $p=1.7\times10^{-63}$; correlation of
the 21 level means $r=-0.898$, $p=3.3\times10^{-8}$). The paired endpoint
difference was $-7.46$ (bootstrap 95\% CI $[-9.81,-5.16]$; paired
$t$-test $p=2.1\times10^{-4}$). This directly confirms that, in the original
maximum-real operating regime, larger $\rho$ attenuates deviations generated by
isotropic post-cue process noise.

The behavioural prediction remained conditional. Frozen-readout retention
AUC increased weakly over the full network ensemble ($r_s=0.226$,
$p=9.7\times10^{-4}$), with seven of ten seed-block slopes positive
(one-sample $t$-test $p=0.026$), but the rho endpoints did not differ:
$0.534\pm0.210$ at $\rho=0$ and $0.502\pm0.106$ at $\rho=1$ (difference
$-0.032$, 95\% CI $[-0.200,0.114]$). Allowing shrinkage LDA to match each
noise covariance also produced no monotonic network-level trend
\rev{($r_s=0.077$, $p=0.264$).} Thus the absence of a fixed-readout endpoint effect
cannot be attributed solely to readout mismatch.

Cue geometry explains the boundary. Clean minimum centroid margin declined
across $\rho$-level means ($r=-0.575$, $p=0.0064$), although the noisier
network-level rank test did not reach significance ($r_s=-0.122$, $p=0.078$).
For a common covariance, Eq.~\ref{eq:s5_multiclass_error_bound} depends on a
Mahalanobis rather than Euclidean margin. Because
$\boldsymbol{v}^{\top}\boldsymbol{\Sigma}^{-1}\boldsymbol{v}
\geq\|\boldsymbol{v}\|_2^2/\operatorname{tr}\boldsymbol{\Sigma}$, the
measured quantity $m_{\mathrm{clean}}^2/G_L^{\mathrm{emp}}$ is a scaled
lower-bound proxy at fixed noise dose, not an exact Bayes margin. It predicted
frozen-readout AUC strongly ($r_s=0.722$, $p=4.2\times10^{-35}$) but did not
itself vary monotonically with $\rho$ ($r_s=-0.020$, $p=0.773$). The scan
therefore validates high-$\rho$ intrinsic noise attenuation while showing why
that property alone is insufficient for universal behavioural robustness.

\end{document}